\tikzset{initial text={}}
\DeclarePairedDelimiter\minor{\lfloor}{\rfloor}
\DeclarePairedDelimiter\major{\lceil}{\rceil}
\DeclarePairedDelimiter\denote{\llbracket}{\rrbracket}
\newcommand{\Aa}{\mathcal{A}}
\newcommand{\Bb}{\mathcal{B}}
\newcommand{\Cc}{\mathcal{C}}
\newcommand{\Ll}{\mathcal{L}}
\newcommand{\Tt}{\mathcal{T}}
\newcommand{\HI}{\mathcal{HI}}
\newcommand{\hi}{\mathit{hi}}
\newcommand{\region}{\mathit{region}}
\newcommand{\row}{\mathit{row}}
\newcommand{\rl}{\ensuremath{\mathbf{R}^{L}}}
\newcommand{\lstar}{\ensuremath{\mathsf{L^*}}}
\spnewtheorem*{lemmastar}{Lemma}{\bfseries}{\itshape}
\spnewtheorem*{propositionstar}{Proposition}{\bfseries}{\itshape}
\spnewtheorem*{theoremstar}{Theorem}{\bfseries}{\itshape}
\spnewtheorem{assumption}[theorem]{Assumption}{\bfseries}{\itshape}
\begin{document}

\title{A Myhill-Nerode Characterization and Active Learning for One-Clock Timed Automata}
\titlerunning{\(1\)-DTA: Myhill-Nerode characterization and active learning}%
\author{
	Kyveli~Doveri\inst{1}\orcidID{0000-0001-9403-2860} \and
	Pierre~Ganty\inst{2}\orcidID{0000-0002-3625-6003} \and
	B.~Srivathsan\inst{3,4}\orcidID{0000-0002-1351-8824}
}
\institute{
	Unaffiliated\quad\email{kyv\_4@hotmail.com}  \and %
	IMDEA Software Institute, Madrid, Spain\quad\email{pierre.ganty@imdea.org} \and %
	Chennai Mathematical Institute, Chennai, India\quad\email{sri@cmi.ac.in} \and %
	CNRS IRL 2000, ReLaX, Chennai, India %
}
\maketitle

\begin{abstract}
	We present a Myhill-Nerode style characterization for languages recognized by one-clock deterministic timed automata (\(1\)-DTA).
	Although there is only one clock, distinct automata may reset it differently along the same word.
	This adds a significant challenge in the search for a canonical automaton.
	Our characterization is based on a new perspective of \(1\)-DTAs in terms of ``half-integral'' words that they accept, along with the reset information encoded by them.
	We apply our results to develop \(\lstar\) style algorithms that learn the canonical \(1\)-DTA.

	\keywords{Timed languages, Deterministic Timed automata, Canonical representation, Myhill-Nerode equivalence, Active Learning}
\end{abstract}

\section{Introduction}

One of the most fundamental results in the theory of finite automata is the Myhill-Nerode theorem, which gives a characterization of regular languages in terms of the so-called Nerode equivalence.
The Nerode equivalence is key to defining a canonical representation for regular languages given by their corresponding minimal deterministic finite automata.

Given a regular language \(L\), the Nerode equivalence relation \(\sim_L\) partitions the set of all words such that two words \(u\) and \(v\) are equivalent (\(u \sim_L v\)) if and only if, for all continuations \(w\), \(uw \in L\) if and only if \(vw \in L\).
Moreover, the Nerode equivalence is the coarsest among all equivalences \(\sim\) that satisfy two properties: \(u \sim v\) implies \(ua \sim va\) for all letters \(a\), and \(u \sim v\) implies \(u \in L\) iff \(v \in L\).
From there, the Myhill-Nerode theorem states that \(L\) is regular if and only if \(\sim_L\) has finitely many equivalence classes.
The Nerode equivalence provides a deep insight into the structure of regular languages and enables the definition of a canonical representation for regular languages by associating to each regular language a deterministic finite automaton (DFA) as follows.
Each state uniquely corresponds to a Nerode equivalence class.
Furthermore, the defined automaton is minimal since the Nerode equivalence is the coarsest among all equivalences induced by DFAs for the language.
The Nerode equivalence also plays a key role in the active learning of regular languages like Angluin's \(\lstar\) algorithm~\cite{angluinLearningRegularSets1987}.

Our focus in this paper is on developing a Myhill-Nerode characterization, a canonical representation and, finally, an active learning algorithm for \(1\)-clock deterministic timed automata.
The model of timed automata~\cite{AlurDTimedAutomata94} provides a de facto automaton model for real-time systems.
Verification of timed automata has been extensively studied, with industry-strength tools~\cite{behrmannUPPAAL04}.
Results about canonical representations or active learning algorithms have been more elusive.
However, there has been a flurry of recent activity on active learning algorithms~\cite{anLearning1DTA2020,wagaActiveLearningDeterministic2023,vaandragerInfComp23,bruyereMMT24,DoveriIRTALearning2024} where the main challenge revolves around identifying the guards and resets of the transitions in the timed automaton.
Short of solving the problem, research works have considered restricted versions of timed automata.

The most common restriction is to look at automaton models with \emph{input-driven resets}: the clock values are determined by the timed word, and not by the automaton.
In Event-Recording Automata (ERA)~\cite{alurECA99}, there is a clock \(x_a\) attached to every letter \(a\) in the alphabet.
There are no other clocks.
Clock \(x_a\) records the time since the last occurrence of \(a\).
For instance, on reading a timed word \((2 \cdot a) (1.3 \cdot b)\) (where \(1.3\) is the time between \(a\) and \(b\)), we will have \(x_a = 1.3\) and \(x_b = 0\), no matter which ERA reads it.
An \(\lstar\)-like algorithm for learning ERAs was first proposed in 2010~\cite{grinchteinLearningEventrecordingAutomata2010}.

In the model of integer reset timed automata (IRTA)~\cite{sumanTimedAutomataInteger2008}, clocks can be arbitrary.
However, every transition that resets a clock, say \(x\), must have a guard of the form \(x=c\), where \(c\) is a natural number.
Therefore, every reset occurs at an integer time point.
Hence, the fractional values of all clocks are equal.
This makes it possible to reduce every IRTA to a language equivalent single-clock IRTA.
It still does not (yet) induce an input-driven reset.
However, the work of Bhave et al.~\cite{bhaveAddingDenseTimedStack2017} tells us that we can assume, without loss of generality, a transition with a guard of the form \(x=c\) necessarily resets the clock.
Such a model is referred to as strict \(1\)-IRTA.
Hence, if we know the maximum constant \(K\) used in the automaton, and if we assume the automaton is complete with guards of the form \(x = 0\), \(0 < x < 1\), \(x = 1\),\ldots,\(x = K\), \(K < x\), the resets becomes input-driven.
Using these properties, a Myhill-Nerode characterization and an \(\lstar\) algorithm for strict \(1\)-IRTAs was first formulated in 2024~\cite{DoveriIRTALearning2024}.
In real-time automata, there is a single clock which is reset in every transition.
Hence, on reading a timed word \((t_1 \cdot a_1) (t_2 \cdot a_2) \cdots (t_n \cdot a_n)\), the value of the clock is \(0\), and while executing the transition corresponding to \(a_i\), the clock value is \(t_i\) (time between \(a_{i-1}\) and \(a_i\)).
This property has been used to give a Myhill-Nerode characterization and an \(\lstar\) algorithm~\cite{anLearningRTA2021,anLearningNondeterministicRealTime2021}.

A second type of restriction that appears in the literature on timed automata learning is to consider one-clock deterministic timed automata (\(1\)-DTA).
Here, the clock resets are not determined by the word and instead depend on the automaton.
This makes the search for canonicity and Myhill-Nerode style characterizations particularly challenging.
To the best of our knowledge, no Nerode equivalences or Myhill-Nerode characterizations exist for \(1\)-DTAs.
On the other hand, active learning algorithms for \(1\)-DTAs have been studied without going via the Nerode equivalence route.
As there is no notion of a canonical \(1\)-DTA for a language, the learning algorithms may produce different automata in different runs.
Two \(\lstar\)-like algorithms for \(1\)-DTAs were presented in 2020~\cite{anLearning1DTA2020}.

More recently~\cite{wagaActiveLearningDeterministic2023}, in an exceptional result which considers no restriction on the model, a Myhill-Nerode characterization and an \(\lstar\)-like algorithm for deterministic timed automata (DTA) has been proposed.
The algorithm considers \emph{elementary languages}, which are sets of timed words, as the basis for the equivalence.
For example \(\{ (t_0 \cdot a)(t_1 \cdot b) \mid t_0 + t_1 = 1\}\) is an elementary language.
The paper proposes a Nerode equivalence over elementary languages and proves that for deterministic timed languages, this equivalence induces finitely many classes.
Using a monoid-based representation of deterministic timed languages~\cite{malerRecognizableTimedLanguages2004a}, a deterministic timed automaton is constructed.
However, there is no counterpart to the segment of the classical Myhill-Nerode theorem which says that the Nerode equivalence is the coarsest among all equivalences induced by DFAs.
Hence there is no concept of a canonical deterministic timed automaton for a languague.

In an orthogonal line of work, timers have been considered in the place of clocks~\cite{vaandragerInfComp23,bruyereMMT24}.
A timer can be set to an integer value in a transition, and its value decreases along with time.
The associated \emph{timeout} is observable by a series of experiments.
This facilitates a learning algorithm.
We are not aware of any Myhill-Nerode theorems or notions of canonicity for this model.

\paragraph{Our contributions.}
In the works presented above~\cite{wagaActiveLearningDeterministic2023,anLearning1DTA2020,vaandragerInfComp23,bruyereMMT24}, several models are on one-clock (or one-timer) automata.
Hence, this restriction to a single clock assumes substantial significance in practice.
In this work, we study \(1\)-DTAs and provide a fresh perspective which leads to a Myhill-Nerode characterization, a canonical \(1\)-DTA and a novel \(\lstar\) algorithm to learn the canonical \(1\)-DTA.
The first and foremost technical challenge lies in handling the clock resets.
We propose an automaton-independent way of viewing resets as \emph{reset functions} and identify a canonical reset function for each language.
A second key idea is to restrict attention to \emph{half-integral words} inside a language.
These are timed words where the delays are of the form \(\frac{n}{2}\) for some natural number \(n\).
We show that \(1\)-DTA languages are determined by the half-integral words that they accept and the canonical reset function.
Equipped with these new fundamental insights, we develop technical machinery for the Myhill-Nerode characterization and extend it to a learning algorithm.

\section{Background}
\label{sec:background}
\paragraph{Words and Languages.} %
An \emph{alphabet} is a finite set of \emph{letters} which we typically denote by \(\Sigma\).
An \emph{untimed word} is a finite sequence \(a_{1}\cdots a_{n}\) of letters \(a_{i}\in \Sigma\).
As usual, we denote the empty word by \(\epsilon\).
We denote by \(\Sigma^{*}\) the set of untimed words over \(\Sigma\).
An \emph{untimed language} is a subset of \(\Sigma^{*}\).
We denote non-negative reals by \(\mathbb{R}_{{}\geq 0}\) and the set of all finite sequences of these numbers by \(\mathbb{T}\).
A \emph{timed word} is a finite sequence \((t_1\cdot a_{1})\cdots (t_n\cdot a_{n})\) where \(a_{1}\cdots a_{n}\in \Sigma^{*}\) and \(t_1\cdots t_n\in \mathbb{T}\).
The value \(t_i\) denotes the delay between \(a_{i-1}\) and \(a_i\), for \(i > 1\).
We denote the set of timed words by \( \mathbb{T}\Sigma^{*}\).
A \emph{timed language} is a set of timed words.

The \emph{residual language} of a (un)timed language \(L\) with regard to a (un)timed word \(u\) is defined as \(u^{-1}L=\{w\mid uw \in L\}\).
In particular \(\epsilon^{-1} L = L\) for every (un)timed language \(L\).
Timed automata are recognizers of timed languages~\cite{AlurDTimedAutomata94}.
We focus on one-clock timed automata subclasses in this work.
Hence we present a modified definition suitable for our work.

\paragraph{One-Clock Timed Automata.} %
We assume the presence of a single clock \(x\).
A \emph{One-clock Timed Automaton} (\(1\)-TA) over \(\Sigma\) is a tuple \(\Aa = (Q, q_{I},T,F)\) where \(Q\) is a finite set of states, \(q_{I}\in Q\) is the initial state, \(F\subseteq Q\) is the set of final states and \(T\subseteq Q\times Q\times \Sigma \times \Phi \times \{0,1\}\) is a finite set of transitions where \(\Phi\) is the set of clock constraints given by \[ \phi ::= c < x \quad|\quad x = c \quad|\quad x < c \quad|\quad \phi \wedge \phi \enspace,\text{ where }c\in \mathbb{N} \text{ (natural numbers) }.
\]
For a clock constraint \(\phi\), we write \(\denote{\phi}\) for the set of values of \(x\) that satisfies the constraint.
In our syntax, a transition looks like \((q, q', a, \phi, r)\) where \(\phi\) is a clock constraint called the \emph{guard} of the transition and \(r \in \{0, 1\}\) denotes whether the single clock \(x\) is \emph{reset} in the transition: \(0\) denotes that it is reset, whereas \(1\) denotes otherwise.
The idea for this notation is that on reading the transition, the value of the clock is multiplied with \(r\).
When depicting \(1\)-TA (as in Section~\ref{sec:1dtas_HI_reset_function}), we use dashed edges for resetting transitions.

We say that a \(1\)-TA with transitions \(T\) is \emph{deterministic} whenever for every pair \(\theta=(q,q',a,\phi,r)\) and \( \theta_1=(q_1,q'_1,a_1,\phi_1,r_1)\) of transitions in \(T\) such that \(\theta\neq\theta_1\) we have that either \(q\neq q_1\), \(a\neq a_1\) or \(\denote{\phi}\cap\denote{\phi_1}=\emptyset\).
We write \(1\)-DTA for a one-clock deterministic timed automaton.

A \emph{run} \(e\) of \(\Aa\) on a timed word \((t_1\cdot a_{1})\ldots (t_k\cdot a_{k})\in \mathbb{T}\Sigma^{*}\) is a finite sequence \[ e=(q_{0}, \nu_{0}) \xrightarrow{t_1,\theta_{1}} (q_{1}, \nu_{1})\xrightarrow{t_2,\theta_{2}}\cdots \xrightarrow{t_k,\theta_{k}} (q_{k}, \nu_{k})\enspace, \] where \(q_{j}\in Q\), \(\nu_{j}\in \mathbb{R}_{\ge 0}\) for all \(j\in\{0,\ldots,k\}\) and, for each \(i \in \{1,\ldots,k\}\) the following hold: {\upshape(\itshape i\upshape)} \(\theta_i=(q_{i-1},q_{i},a_{i},\phi_{i},r_{i}) \in T\), {\upshape(\itshape ii\upshape)} \(\nu_{i-1}+t_i\in \denote{\phi_{i}}\), and {\upshape(\itshape iii\upshape)} \(\nu_{i} = (\nu_{i-1}+t_i)\times r_{i}\).
Therefore if \(r_i = 0\), we have \(\nu_i = 0\) and if \(r_i = 1\) we have \(\nu_i = \nu_{i-1} + t_i\).
A pair \( (q,\nu) \in Q\times\mathbb{R}_{\ge 0}\) like the ones occurring in the run \(e\) is called a \emph{configuration} of \(\Aa\) and the configuration \( (q_I,0) \) is called \emph{initial}.
The run \(e\) is deemed \emph{accepting} if \(q_{k}\in F\).
For \(w\in \mathbb{T}\Sigma^{*}\) we write \((q,\nu) \rightsquigarrow^w (q',\nu')\) if there is a run of \(\Aa\) on \(w\) from \((q,\nu)\) to \((q',\nu')\).
When \(\Aa\) is deterministic we note that every timed word has at most one run starting from the initial configuration.
We say that \(\Aa\) is \emph{complete} when it has exactly one run for every timed word.
Assuming \(\Aa\) is complete and deterministic, given a timed word \(u\) denote by \(\Aa(u)\) the unique state reached by the one run of \(\Aa\) on \(u\) from the initial configuration.

Finally, given a configuration \((q,\nu)\), define \(\Ll(q,\nu)=\{w\in \mathbb{T}\Sigma^{*}\mid (q,\nu) \rightsquigarrow^w (q',\nu'), q'\in F, \nu' \in \mathbb{R}_{\ge 0}\}\).
We write \(\Ll(\Aa)\) for \(\Ll(q_I,0)\).
We say that a language \(L\) is \(1\)-DTA recognizable if \(L = \mathcal{L}(\Aa)\) for some \(1\)-DTA \(\Aa\).

\paragraph{Equivalence Relation.}
A binary relation \(\mathord{\sim} \subseteq S\times S\) on a set \(S\) is an \emph{equivalence} if it is reflexive (i.e. \(y\sim y\)), transitive (i.e. \(y\sim z \wedge z\sim z' \implies y\sim z'\)) and symmetric (i.e. \(y\sim z\implies z\sim y\)).
The equivalence class of \(s\in S\) w.r.t. \(\sim\) is the subset \([s]_{\sim}=\{s'\in S\mid s\sim s'\}\).
A \emph{representative} of the class \([s]_{\sim}\) is any element \(s'\in [s]_{\sim}\).
Given a subset \(D\) of \(S\) we define \([D]_{\sim}=\{[d]_{\sim}\mid d\in D\}\).
We say that \(\sim \) has \emph{finite index} when \([S]_{\sim}\) is a finite set.

\paragraph{Region Equivalence for one clock.}
An important technical tool in the analysis of timed automata is the \emph{region equivalence}~\cite{AlurDTimedAutomata94}.
We recall this equivalence in the setting of \(1\)-TAs.
Define the equivalence \(\mathord{\equiv}\subseteq \mathord{\mathbb{R}_{\ge 0}\times \mathbb{R}_{\ge 0}}\) by \[y\equiv z \iff \minor{y} = \minor{z} \land (\{y\}=0 \Leftrightarrow \{z\}=0) \enspace ,\] where given \(y \in \mathbb{R}_{\ge 0}\), \(\minor{y} \) denotes its \emph{integral part} and \(\{y\}\) its \emph{fractional part}.
\footnote{Besides the floor function (\(\minor{y}\)), we use the ceiling notation (\(\major{y}\)) for convenience.}

Given a constant \(K\in\mathbb{N}\) define the equivalence \(\mathord{\equiv^{K}}\subseteq \mathord{\mathbb{R}_{\ge 0}\times \mathbb{R}_{\ge 0}}\) by \[y\equiv^{K} z \iff \bigl( y\equiv z \bigr) \lor (K<y \land K<z) \enspace .
\]
The equivalence \(\mathord{\equiv^{K}}\) is called the \emph{region equivalence w.r.t constant \(K\)} and its classes are called \emph{\(K\)-regions} (or regions when \(K\) is clear from the context).
\begin{assumption}\label{ass:guards-are-K-regions}
	Without loss of generality, we assume that guards of a \(1\)-TA are of the form \(x=c\) for \(c \in \{0, \dots, K\}\), \( c < x < c+1\) for \(c \in \{0, \dots, K-1\}\), or \(K<x\), where \(K\) is the largest constant appearing in the transitions of the automaton (in other words, every guard \(\phi\) is such that \(\denote{\phi}\) is a \(K\)-region).
\end{assumption}

Building on the above assumption, we introduce a notation that is useful for defining \(1\)-TAs.
Given \(t \in \mathbb{R}_{\ge 0}\) and \(K\in \mathbb{N}\) define the clock constraint \(\phi_K(t)\) as:
\[ \phi_K(t) =
	\begin{cases}
		x=t                       & \quad\text{if } t\leq K \wedge \{t\}=0\enspace,      \\
		\minor{t} < x < \major{t} & \quad\text{if } t\leq K \wedge  \{t\}\neq 0\enspace, \\
		K<x                       & \quad\text{else } (\text{i.e. }
		K < t)\enspace .
	\end{cases} \]

\section{Reset Functions}
\label{sec:reset-functions}

As said before, our goal in this work is to determine a machine-independent characterization for \(1\)-DTA languages.
A key challenge is to determine how the clock should be reset.
Different automata for the same language \(L\) may reset the clock differently along the same word.
In this section, we first abstract away from the underlying automaton and consider resets as a function from words to reals.
Then, we propose the \emph{syntactic reset function} which relies solely on the structure of \(L\), and is independent of any particular automaton.

A \emph{reset function} is a function \(R:\mathbb{T}\Sigma^{*}\to \mathbb{R}_{\geq 0}\) such that for every \(u \in \mathbb{T}\Sigma^{*}, t \in \mathbb{R}_{\geq 0}\) and \(a \in \Sigma\), we have \(R(\epsilon)=0\) and \(R(u (t \cdot a))\) is either \(0\) or \(R(u)+t\).
As its name suggests, a reset function prescribes how resets occur along a word: \(R(u (t \cdot a))=0\) corresponds to resetting after reading \(u (t \cdot a)\) whereas \(R(u (t \cdot a))=R(u)+t\) corresponds to not resetting.
Every (complete) \(1\)-DTA \(\Aa\) induces a reset function \(R^\Aa\) defined as \(R^\Aa(u) = y\) if \((q_{I},0) \rightsquigarrow^{u} (q,y)\).
\(R^\Aa(u)\) gives the value of the clock in \(\Aa\) after reading \(u\).

\subsection{Syntactic Reset Function}
\label{sec:syntactic-reset}
Given any arbitrary language \(L\), we want to define a machine-independent reset function \(\rl : \mathbb{T}\Sigma^{*} \to \mathbb{R}_{\geq 0}\) capturing the “resetting behaviour” of \(L\).
The main idea for defining \(\rl\) is that resets are determined by the residual languages of words.
Assume that \(L\) is \(1\)-DTA recognizable and fix a \(1\)-DTA \(\Aa\) accepting it.
If, after reading a word \(u\), \(\Aa\) resets the clock, then the residual language \(u^{-1}L\) is also recognized by a \(1\)-DTA: namely, by the same automaton \(\Aa\) with the initial state replaced by the state reached after reading \(u\).
Thus, resetting implies that \(u^{-1}L\) is \(1\)-DTA recognizable.
What about the converse?
That is, is there a \(1\)-DTA accepting \(L\) that resets after every word whose residual is \(1\)-DTA recognizable?
The answer is yes (our main result, Theorem~\ref{thm:computing-rl-acceptor}).
We define \(\rl\) so that for every word \(u\), \(\rl(u)=0\) iff \(u^{-1}L\) is \(1\)-DTA recognizable, whenever \(L\) is.

Formally, we define \(\rl\) inductively, prefix by prefix, starting from \(\epsilon\) as given next.
We set \(\rl(\epsilon)=0\), even when \(L\) is not \(1\)-DTA recognizable, since this definition is tailored to capture the behavior of \(1\)-DTA languages.
\begin{definition}[Syntactic reset function]
	Given a language \(L\subseteq   \mathbb{T}\Sigma^{*}\) we define the reset function \(\rl :  \mathbb{T}\Sigma^{*} \to \mathbb{R}_{\geq 0}\) inductively: \(\rl(\epsilon)=0\) and
	\[
		\rl(u(t\cdot a)) =
		\begin{cases}
			0 & \text{if \((u(t\cdot a))^{-1}
			L\) is a \(1\)-DTA recognizable language}, \\ \rl(u) +t & \text{otherwise}.
		\end{cases}
	\]
\end{definition}
\begin{example}
	Let \(L=\{(t_1 \cdot a)(t_2 \cdot b)\mid 0<t_1<1, t_1+t_2 =2\}\).
	Consider the word \((0 \cdot a)\).
	Notice that \(\rl(0\cdot a)\) must be zero since the word \((0\cdot a)\) has delay zero, and every reset function returns \(0\) on this word.
	For \(1\leq t_1\), \(\rl(t_1\cdot a)=0\) since \((t_1 \cdot a)^{-1}L=\emptyset\) is a \(1\)-DTA recognizable language and similarly \(\rl(u)=0\) if \(u\) is prefix of no word in \(L\).
	For \(0<t_1<1\), \((t_1\cdot a)^{-1}L=\{(t_2\cdot b) \mid ~ t_1+t_2 =2\}\) cannot be recognized using a clock initialized to zero, thus \(\rl(t_1\cdot a) =t_1\).
	For \(u\in L\), \(\rl(u) =0\) because \(u^{-1}L=\{\epsilon\}\) is \(1\)-DTA recognizable.
\end{example}

We now state our main theorem of this section which says that a \(1\)-DTA inducing the syntactic reset function \(\rl\) can in fact be computed starting from any \(1\)-DTA recognizing \(L\).
Along with it, we make another observation: the range of the syntactic reset function is bounded by a constant \(K_L\), and moreover, within this interval \([0, K_L]\), the range does not take any integral value other than \(0\).
Denoting by \(\mathbb{N}_{>0}\), the set of natural numbers greater than \(0\), we can summarize the range of \(\rl\) as \([0, K_L] \setminus \mathbb{N}_{>0}\).
Interestingly, this constant \(K_L\) turns out to be the smallest constant needed for any \(1\)-DTA to accept \(L\).
\begin{theorem}%
	\label{thm:computing-rl-acceptor}
	\hspace{0pt}
	\begin{itemize}
		\item Given a \(1\)-DTA for \(L\), we can compute an equivalent \(1\)-DTA with the same constant such that it induces the syntactic reset function \(\rl\).
		\item If \(L\) is a \(1\)-DTA recognizable language then for every timed word \(u\), \(\rl(u)\in [0,K_L]\setminus\mathbb{N}_{>0}\)  where \(K_L\) is the smallest constant \(K\) such that there exists a \(1\)-DTA with maximum constant \(K\) that accepts \(L\).
	\end{itemize}
\end{theorem}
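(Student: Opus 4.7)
The plan is to establish the first item and extract the second as a consequence. For the first item, starting from a $1$-DTA $\Aa$ for $L$ with max constant $K$, I would construct $\Aa'$ by identifying the reachable configurations $(q, \nu)$ of $\Aa$ whose residual $\Ll(q, \nu)$ is $1$-DTA recognizable and rerouting every transition of $\Aa$ that lands on such a configuration into the initial state of a freshly reset witness $1$-DTA $\Bb_{(q,\nu)}$ for $\Ll(q, \nu)$. To stay within constant $K$, each $\Bb_{(q,\nu)}$ is built structurally from $\Aa$ using the integer-shift and beyond-$K$ gadgets that reappear in the second item. An auxiliary step verifies that membership in the reset set is consistent with the $K$-region partition --- within any given state, a whole region is reset-triggering or none of it is --- so that the rerouting is implementable by transition-level reset markings and preserves determinism and completeness. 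The identity $R^{\Aa'} = \rl$ then follows by prefix induction, matching the clauses in the definition of $\rl$: the base case $\rl(\epsilon) = 0$ matches $R^{\Aa'}(\epsilon) = 0$ trivially, and the inductive step is exactly the rerouting rule.

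For the second item, apply the first to an $L$-recognizer of minimal constant $K_L$; this produces $\Aa'$ of max constant $K_L$ with $R^{\Aa'} = \rl$. The range bound then reduces to ruling out two cases by contradiction, each using that $\rl(u) = 0$ whenever $u^{-1}L$ is $1$-DTA recognizable. First, suppose $\rl(u) > K_L$. Then the configuration $(q', \nu')$ reached in $\Aa'$ after $u$ has $\nu' > K_L$, so every guard fired from it before a reset is $K_L < x$; hence $\Ll(q', \nu')$ depends only on $q'$, and it is recognized by a new $1$-DTA with a fresh initial state whose outgoing transitions (guarded by $0 \leq x$) mimic each transition out of $q'$ guarded by $K_L < x$, relinquishing control to $\Aa'$ on the first reset. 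Second, suppose $\rl(u) = n \in \mathbb{N}_{>0}$. Then $\nu' = n$, and since $n$ is an integer the $K_L$-region structure commutes with the translation $x \mapsto x - n$, so $\Ll(q', n)$ is recognized by the reachable sub-structure of $\Aa'$ from $q'$ after translating every guard on non-resetting paths by $-n$. Either case forces $\rl(u) = 0$, contradicting the assumption.

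The main obstacle is the first item. Two interlocking points must be handled: (a) the $K$-region consistency of the reset set, ensuring the construction yields a valid $1$-DTA, and (b) the structural extraction of witnesses $\Bb_{(q,\nu)}$ with max constant at most $K$. For (a), the expected argument is that a $1$-DTA whose guards live in the $K$-region partition cannot separate a proper subset of a region from its complement, so the set of $\nu$ in a region for which $\Ll(q, \nu)$ is $1$-DTA recognizable is either the whole region or empty. For (b), the witnesses are not arbitrary recognizers but are precisely the two gadgets invoked in the second item --- integer shifts for integer $\nu$ and beyond-$K$ saturation for $\nu > K$ --- applied to the sub-structure of $\Aa$ rooted at $(q, \nu)$. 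With (a) and (b) settled, determinism, completeness, language equivalence, and $R^{\Aa'} = \rl$ follow from routine prefix inductions.
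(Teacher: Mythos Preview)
Your plan has a genuine gap in the first item: the two gadgets you name --- integer shifts for integer $\nu$ and beyond-$K$ saturation for $\nu > K$ --- do not exhaust the configurations with $1$-DTA recognizable residual. After those two cases are handled there may remain states $q$ with $\region(q) = (i,i+1)$, $0 \le i < K$, such that $\Ll(q,\nu)$ is $1$-DTA recognizable for $\nu \in (i,i+1)$; a trivial instance is $L = \{(t_1\cdot a)(t_2\cdot b)\mid 0<t_1<1\}$, where the state reached after $a$ in a non-resetting recognizer is exactly such a state. For these, neither gadget applies, and you give no construction of a witness $\Bb_{(q,\nu)}$ of constant $\le K$. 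This is the hard case, and your point (b) claiming the two gadgets are ``precisely'' what is needed is where the argument breaks.

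The paper's route through this case requires substantially more than your sketch. After normalizing to a strict acceptor (essentially your two gadgets), one proves that for any remaining bad state $q$ with $\region(q)=(i,i+1)$: first $\Ll(q,y) = \Ll(q,y')$ for all $y,y'$ in the region, and then that for every letter $a$ the one-step residuals $(t\cdot a)^{-1}\Ll(q,x)$ are \emph{independent of} $t \ge 0$. The latter yields the shape $\Ll(q,x)=\bigcup_a \{(t\cdot a)\mid t\ge 0\}\,\Ll(q_a,0)$, from which the witness is immediate (reset on entry to $q$, fan out on any delay). Proving delay-independence uses propagation of badness along non-resetting transitions together with a combinatorial lemma: for region-equivalent $x\equiv x'$ and any delay sequence $t_1,\ldots, t_m$ one can construct $t'_1,\ldots, t'_m$ so that all partial sums (both from $x'$ and among the $t'_i$) region-match those from $x$ and the $t_i$. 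Your argument for (a) --- ``a $1$-DTA cannot separate within a region'' --- does not establish even the first step: $\Ll(q,y)$ and $\Ll(q,y')$ are in general \emph{different} languages for $y\neq y'$ in the same region (they are related by a nonuniform time-shift), so their simultaneous recognizability, let alone equality, is not a consequence of guards being region-shaped. Your second item is fine once the first is repaired.
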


\section{\(1\)-DTAs in Terms of Reset Functions and Half-Integral Words}
\label{sec:1dtas_HI_reset_function}

For our next step, we want to view \(1\)-DTA recognizable languages in terms of special discrete words.
In the untimed world, each run of a deterministic automaton corresponds to a unique word.
This correspondence does not hold for \(1\)-DTAs.
In the automaton \(\Aa_s\) below all timed words of the form \( (t_1 \cdot a) (t_2 \cdot b)\) such that \( 0 < t_1 < 1 \text{ and } 0 < t_1 + t_2 < 1 \) follow exactly the same run.

\hspace*{\stretch{1}}%
\begin{minipage}{.55\textwidth}
	\tikz[baseline=(1.south),node distance=2cm,auto,initial text=\(\Aa_s\), every edge/.append style={font=\scriptsize}]{
		\tikzstyle{every state}=[inner sep=1pt,minimum size=10pt]

		\node[initial,state] (1) {};
		\node[state] (2) [right=of 1] {};
		\node[accepting,state] (3) [right=of 2] {};

		\path[->] (1) edge node [above] {\(a, 0 < x < 1, 1\)} (2)
		(2) edge[dashed] node [above] {\(b, 0 < x < 1, 0\)} (3)
		;
	}
\end{minipage}\hfill\vspace{0pt}\\
To address this, some approaches use symbolic words~\cite{wagaActiveLearningDeterministic2023,grinchteinLearningEventrecordingAutomata2010}, where each letter represents a region and each word represents a set of timed words.
We introduce \emph{half-integral words} which are viewed as a canonical choice of a concrete timed word within a symbolic word.
The half-integral word accepted by \(\Aa_s\) is \((\frac{1}{2} \cdot a) (0 \cdot b)\).
\paragraph{Half-integral words.}
A timed word \((t_1 \cdot a_1) (t_2 \cdot a_2) \dots (t_n \cdot a_n)\) is \emph{half-integral} if each \(t_i\) (\(1 \le i \le n\)) is of the form \(t_i=\frac{m}{2}\) for some \(m\in\mathbb{N}\), that is, the fractional part \(\{t_i\}\) is either \(0\) or \(\frac{1}{2}\).
Also the empty word \(\epsilon\) is a half-integral word.
For a language \(L\), we write \(\HI(L)\) for the set of all half-integral words in \(L\).
For brevity, we write \(\HI\) instead of \(\HI(\mathbb{T}\Sigma^{*})\).

We observe an intriguing fact: a \(1\)-DTA language is almost entirely determined by the subset of its half-integral words.
We illustrate this with an example.
Consider the \(1\)-DTA \(\Aa_s\) above and the following \(1\)-DTA \(\Aa_q\):

\hspace*{\stretch{1}}%
\begin{minipage}{.55\textwidth}
	\tikz[baseline=(1.south),node distance=2cm,auto,initial text=\(\Aa_q\), every edge/.append style={font=\scriptsize}]{
		\tikzstyle{every state}=[inner sep=1pt,minimum size=10pt]

		\node[initial,state] (1) {};
		\node[state] (2) [right=of 1] {};
		\node[accepting,state] (3) [right=of 2] {};

		\path[->] (1) edge[dashed] node [above] {\(a, 0 < x < 1, 0\)} (2)
		(2) edge[dashed] node [above] {\(b, x=0,0\)} (3)
		;
	}
\end{minipage}\hfill\vspace{0pt}\\
The language \(\Ll(\Aa_s)\) equals \(\{ (t_1 \cdot a) (t_2 \cdot b) \mid 0 < t_1 < 1 \text{ and } 0 < t_1 + t_2 < 1 \}\) while \(\Ll(\Aa_q)\) equals \(\{ (t_1 \cdot a) (t_2 \cdot b) \mid 0 < t_1 < 1 \text{ and } t_2 = 0 \}\).
Clearly, the two languages do not coincide.
However, they coincide when restricted to half-integral words as \(\HI(\Ll(\Aa_s))=\HI(\Ll(\Aa_q))=\{(\frac{1}{2} \cdot a) (0 \cdot b)\}\).
The difference stems from the resets: \(\Aa_s\) does not reset its clock after \(a\), whereas \(\Aa_q\) resets it after \(a\).

This example illustrates our main observation: every \(1\)-DTA language \(L\) is uniquely determined by the set of its half-integral words \(\HI(L)\) \emph{and} the syntactic reset function \(\rl\).
Therefore, if we are given that \(\HI(L) = \{(\frac{1}{2} \cdot a) (0 \cdot b)\}\) and \(\rl\) does not reset after \(a\), then \(L = \Ll(\Aa_s)\), whereas for the same \(\HI(L)\), if \(\rl\) does reset after \(a\), then \(L = \Ll(\Aa_q)\).
To formalize this idea, we first define a map from timed words to half-integral words using a reset function.

\paragraph{From timed words to half-integral words, through a reset function.}
We now relate half-integral words and reset functions: given a reset function, we associate each timed word to a unique half-integral word such that every \(1\)-DTA conforming to the reset function has both the original word and the half-integral word thereof visit the same states along the exact same path.

Given a reset function \(R\), the \emph{normal form} \(N^{R}(u)\) of an arbitrary timed word \(u \in \mathbb{T}\Sigma^* \) is a half-integral word with the same untimed part as \(u\) and with shifted delays.
Given \(x\in \mathbb{R}_{\geq 0}\) let \(\hi(x)\) denote the unique half-integral value such that \(\hi(x)=x\) if \(\{x\}=0\) and, \(\hi(x)=\minor{x} +\frac{1}{2}\) if \(\{x\}\neq 0\).
For example, \(\hi(2.7) = 2.5\), \(\hi(1.2) = 1.5\), etc. Notice that \(\hi(x)\) is the unique half-integral value such that \(x \equiv \hi(x)\).
We extend the function \(\hi\) to sets as expected.
Building upon the function \(\hi\) we define the normal form by induction on the length of words: \(N^{R}(\epsilon)=\epsilon\) and \(N^{R}(u(t \cdot a))=N^{R}(u)(t' \cdot a)\) where \(t'=\frac{n}{2}\) for some \(n\in\mathbb{N}\) such that \(R(u) +t\equiv \hi(R(u)) +t'\).
For instance, if \(R((0.3 \cdot a)) = 0\), then \(N^R((0.3 \cdot a) (1.9 \cdot b)) = (0.5 \cdot a) (1.5 \cdot b)\); whereas if \(R((0.3 \cdot a)) = 0.3\), then \(N^R((0.3 \cdot a) (1.9 \cdot b)) = (0.5 \cdot a) (2 \cdot b)\).

\begin{remark}
	\label{remark:normalforHI}
	For \(w\in \HI\) we have \(N^{R}(w)=w\).
\end{remark}
\begin{proposition}
	\label{prop:normal_form}
	If \(R\) is the reset function of a \(1\)-DTA \(\Aa\), then for every word \(u\), the runs of \(u\) and \(N^{R}(u)\) in \(\Aa\) follow the same sequence of transitions.
	In particular, \(u \in L(\Aa)\) iff \(N^{R}(u) \in L(\Aa)\).
\end{proposition}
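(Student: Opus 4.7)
The plan is induction on $n = |u|$, with $u = (t_1 \cdot a_1)\cdots(t_n \cdot a_n)$ and $N^R(u) = (t'_1 \cdot a_1)\cdots(t'_n \cdot a_n)$. I would strengthen the statement and maintain a joint invariant on the common-length prefixes: after reading $u_k = (t_1 \cdot a_1)\cdots(t_k \cdot a_k)$ and $N^R(u_k)$, both runs in $\Aa$ have taken the same sequence of transitions $\theta_1,\ldots,\theta_k$ (hence reached a common state $q_k$), and the respective clock values $\nu_k,\nu'_k$ satisfy $\nu'_k = \hi(\nu_k)$. Since $R$ is the reset function of $\Aa$, this clock invariant reads $\nu'_k = \hi(R(u_k))$, which is exactly the quantity around which the recursive definition of $N^R$ is built.

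The base case $k=0$ is immediate: $\nu_0 = \nu'_0 = 0 = \hi(0)$ and both runs are empty. For the inductive step, the definition of $N^R$ picks $t'_k$ so that $R(u_{k-1}) + t_k \equiv \hi(R(u_{k-1})) + t'_k$; after substituting the invariant, this reads $\nu_{k-1} + t_k \equiv \nu'_{k-1} + t'_k$. Since every guard of $\Aa$ denotes a $K$-region (by the standing assumption) and $\equiv$ refines $\equiv^K$, the two candidate clock values lie in a common $K$-region and therefore satisfy exactly the same guards out of $q_{k-1}$ on letter $a_k$. Determinism of $\Aa$ then forces both runs to fire the same transition $\theta_k = (q_{k-1}, q_k, a_k, \phi_k, r_k)$, preserving the transition part of the invariant.

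It remains to preserve the clock invariant at step $k$. If $r_k = 0$, both clocks reset to $0 = \hi(0)$. If $r_k = 1$, then $\nu'_k = \nu'_{k-1} + t'_k = \hi(\nu_{k-1}) + t'_k$ is a sum of half-integers, hence itself half-integral, and by the choice of $t'_k$ it is $\equiv$-equivalent to $\nu_k = \nu_{k-1} + t_k$; since each $\equiv$-class contains a unique half-integral element, $\nu'_k = \hi(\nu_k)$. The ``in particular'' clause follows at once: both runs end in state $q_n$, so $q_n \in F$ simultaneously witnesses $u \in L(\Aa)$ and $N^R(u) \in L(\Aa)$.

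I do not foresee a substantial obstacle. The only mildly subtle step is recognizing that one must strengthen the proposition's conclusion with the auxiliary clock invariant $\nu'_k = \hi(\nu_k)$ for the recursive definition of $N^R$ to plug in cleanly; the proposition's statement alone is not a strong enough inductive hypothesis. Everything else is bookkeeping that flows from $\equiv$ refining $\equiv^K$, from $\hi$ selecting the unique half-integral representative of each $\equiv$-class, and from determinism of $\Aa$.
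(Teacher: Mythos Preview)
Your proposal is correct and follows essentially the same approach as the paper: induction on the length of $u$ with the strengthened invariant that the clock values after the two prefixes are $\equiv$-related (the paper phrases this as $R(u) \equiv R(N^R(u))$, which is equivalent to your $\nu'_k = \hi(\nu_k)$ since $\nu'_k$ is half-integral). Your treatment is slightly more explicit in handling the reset/non-reset case split and in invoking determinism, but the underlying argument is the same.
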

We now obtain the characterization previously announced: a \(1\)-DTA recognizable language is entirely determined by its half-integral words together with its syntactic reset function.
\begin{theorem}\label{thm:1-DTAs-as-hi-reset}
	If \(L_{1}\) and \(L_{2}\) are \(1\)-DTA recognizable languages then \(L_{1}=L_{2}\) iff \(\HI(L_{1})=\HI(L_{2})\) and \(\mathbf{R}^{L_{1}}= \mathbf{R}^{L_{2}}\).
\end{theorem}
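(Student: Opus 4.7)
The plan is to prove both directions, with the forward direction being essentially bookkeeping and the backward direction reducing to two previous results.

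For the forward direction ($\Rightarrow$), suppose $L_1 = L_2$. Then $\HI(L_1) = \HI(L_2)$ is immediate by intersecting both sides with $\HI$. For the reset functions, I would proceed by induction on the length of the input word. Both functions map $\epsilon$ to $0$. For the inductive step on $u(t \cdot a)$, the defining case split compares whether $(u(t\cdot a))^{-1} L_i$ is $1$-DTA recognizable; since $L_1 = L_2$, the two residual languages coincide, so the case split selects the same branch. In the ``reset'' branch both values are $0$; in the other branch the value is $\mathbf{R}^{L_i}(u) + t$, which by induction hypothesis agrees for $i=1,2$.

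For the backward direction ($\Leftarrow$), assume $\HI(L_1) = \HI(L_2)$ and $\mathbf{R}^{L_1} = \mathbf{R}^{L_2}$, and call the common reset function $R$. Since both $L_i$ are $1$-DTA recognizable, Theorem~\ref{thm:computing-rl-acceptor} furnishes $1$-DTAs $\Aa_1, \Aa_2$ for $L_1, L_2$ whose induced reset functions are $\mathbf{R}^{L_1}$ and $\mathbf{R}^{L_2}$ respectively, hence both equal $R$. Fix an arbitrary timed word $u$. By Proposition~\ref{prop:normal_form} applied to each $\Aa_i$ (whose induced reset function is $R$), we get
\[
  u \in L_i \iff N^{R}(u) \in L_i \qquad (i = 1,2).
\]
Since $N^{R}(u)$ is half-integral by construction, $N^{R}(u) \in L_i$ is equivalent to $N^{R}(u) \in \HI(L_i)$. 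The hypothesis $\HI(L_1) = \HI(L_2)$ then yields $N^{R}(u) \in L_1 \iff N^{R}(u) \in L_2$, and chaining the equivalences gives $u \in L_1 \iff u \in L_2$. As $u$ was arbitrary, $L_1 = L_2$.

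I do not expect a genuine obstacle: the conceptual work has been done in Theorem~\ref{thm:computing-rl-acceptor} (getting a $1$-DTA whose induced reset function is the syntactic one) and in Proposition~\ref{prop:normal_form} (reducing the membership of an arbitrary timed word to that of its half-integral normal form along a run that depends only on the reset function). The only mild subtlety is making sure that on both sides one applies Proposition~\ref{prop:normal_form} to a \emph{single} common reset function $R$, which is legitimate precisely because the two syntactic reset functions are assumed equal and each can be realized by a concrete $1$-DTA for the corresponding language.
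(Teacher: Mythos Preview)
Your proof is correct and follows essentially the same approach as the paper: the backward direction is identical (invoke Theorem~\ref{thm:computing-rl-acceptor} to realize the common syntactic reset function $R$ by a $1$-DTA on each side, then apply Proposition~\ref{prop:normal_form} and the equality of half-integral parts). For the forward direction the paper simply says ``immediate''; your induction is fine, but note that $\mathbf{R}^{L}$ is \emph{defined} purely in terms of $L$, so $L_1=L_2$ gives $\mathbf{R}^{L_1}=\mathbf{R}^{L_2}$ without any induction.
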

\begin{proof}
	The left-to-right implication is immediate.
	Assume the right-hand side and let \(R=\mathbf{R}^{L_{1}}= \mathbf{R}^{L_{2}}\).
	By Theorem~\ref{thm:computing-rl-acceptor} there is a \(1\)-DTA accepting \(L_{1}\) whose reset function is \(R\), and the analogue holds for \(L_{2}\).
	By Proposition~\ref{prop:normal_form} for every word \(u\) we have \(u\in L_1 \iff N^{R}(u)\in L_{1}\) and \(u\in L_2 \iff N^{R}(u)\in L_{2}\).
	Since \(\HI(L_{1})=\HI(L_{2})\) we have \(N^{R}(u)\in L_1 \iff N^{R}(u)\in L_{2}\).
	Hence, \(u\in L_1 \iff u\in L_{2}\).
	\qed
\end{proof}
\section{Equivalences on Half-Integral Words}
\label{sec:from_equivalences_to_automata_and_back}
Here, we aim to further decompose our understanding of \(1\)-DTA recognizable languages.
As seen in the previous section, a \(1\)-DTA language is determined by a reset function together with a set of half-integral words.
We now focus on understanding the structure of this set of half-integral words, examining how a \(1\)-DTA organizes them.
We would like to describe this set based on an equivalence on half-integral words, in a similar way as the classical DFA case, where words are identified to be equivalent when they reach the same state.
Here, however, states are not enough to capture the future of words since this future also depends on the clock value the words reach.
Here is an example that illustrates this aspect.
Let \(L\) be the language recognized by the following \(1\)-DTA.

\hspace*{\stretch{1}}%
\begin{minipage}{.63\textwidth}
	\begin{tikzpicture}[state/.style={circle, draw, minimum size=10pt, inner sep=1pt, font=\scriptsize}, every edge/.append style={font=\scriptsize}] %
		\node[state, initial] (0) at (0,0) {};
		\node[state] (1) at (3, .4) {};
		\node[state] (2) at (3,-.4) {};
		\node[state,accepting] (3) at (6, 0) {};
		\begin{scope}[->, >=stealth, auto, anchor=center]
			\draw (0) edge node [above] {\(a, 0 < x < 1, 1\)} (1);
			\draw (0) edge node [below] {\(a, 1 < x < 2, 1\)} (2);
			\draw (1) edge node [above] {\(b, x = 1, 1\)} (3);
			\draw (2) edge node [below] {\(b, x = 2, 1\)} (3);
		\end{scope}
	\end{tikzpicture}
\end{minipage}\hfill\vspace{0pt}\\
Consider the half-integral words \(u = (0.5, a)\) and \(v = (1.5, a)\).
Their residuals are the same: \(u^{-1}L = v^{-1}L = \{(0.5, b)\}\).
However, if \(u\) and \(v\) go to the same state in a \(1\)-DTA, then the extension \((1.5, b)\) would also be possible from \(u\).
Thus, even though the residuals coincide, they need not go to the same state in a \(1\)-DTA.
Hence residuals do not identify states.
Here is the plan for this section.
\begin{enumerate}
	\item Following a similar reasoning on a subclass of \(1\)-DTAs called integer reset timed automata~\cite{DoveriIRTALearning2024}, we consider \(1\)-DTAs with a special syntax, which we call \(K\)-acceptors: every state is associated to a \(K\)-region and all words landing in the state end up with a clock value in this region.
	\item On the algebraic perspective, we consider a combination of equivalences over half-integral words and a reset function.
	      We call this pair a \emph{profile}.
\end{enumerate}
The main contribution of this section is in showing a correspondence between \(K\)-acceptors and profiles.
In the integer reset subclass~\cite{DoveriIRTALearning2024}, the clock resets were input-determined, that is, each word determined exactly when to reset.
For \(1\)-DTAs this is no longer the case, which is why we consider reset functions in addition to equivalences.
The presence of reset functions also adds new challenges in proving the correspondence between acceptors and profiles.

\paragraph{\(K\)-acceptors.}
\label{par:acceptors-and-strict-acceptors}
A \(K\)-\emph{acceptor} is a complete \(1\)-DTA with maximum constant \(K\) such that each state \(q\) is associated with a unique \(K\)-region, denoted \(\region(q)\).
Every timed word \(u\) that reaches \(q\) arrives with a clock value inside \(\region(q)\): \([R(u)]_{\equiv^{K}}=\region(q)\) where \(R\) is the reset function of the acceptor.
Syntactically, this means the following: if some incoming transition to \(q\) resets the clock, then every incoming transition resets it and \(\region(q) = \{0\}\); else, every incoming transition has a guard \(\phi\) such that \(\denote{\phi} = \region(q)\).
\begin{lemma}
	\label{lemma:acceptorbis}
	Given a \(1\)-DTA of constant \(K\) we can compute an equivalent \(K\)-acceptor with the same reset function.
\end{lemma}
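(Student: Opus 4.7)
The plan is to construct the $K$-acceptor in two stages: a region refinement of the given $1$-DTA, followed by a splitting of the zero-region states. Given $\Aa = (Q, q_I, T, F)$ with maximum constant $K$, first complete $\Aa$ by adding a non-accepting sink state and filling in the missing transitions with guards that are $K$-regions (this is possible by the assumption on guards); completion preserves both $\Ll(\Aa)$ and $R^{\Aa}$. Then build a product automaton $\Aa_1$ whose states are pairs $(q, r)$ with $q \in Q$ and $r$ a $K$-region. From $(q, r)$, install a transition for every $(q, q', a, \phi, \rho) \in T$ such that time-elapse from a value in $r$ can reach $\denote{\phi}$; its target is $(q', \{0\})$ if $\rho = 0$ and $(q', \denote{\phi})$ if $\rho = 1$. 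Declare $(q_I, \{0\})$ the initial state and $(q, r)$ final iff $q \in F$. A straightforward induction on the length of a run shows that after reading $u$, $\Aa_1$ is in state $(\Aa(u), [R^{\Aa}(u)]_{\equiv^{K}})$. Hence $\Ll(\Aa_1) = \Ll(\Aa)$, the maximum constant stays $K$, and since each transition copies the reset bit from $T$, the reset function is unchanged.

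The automaton $\Aa_1$ already satisfies most $K$-acceptor conditions: for every state $(q, r)$ with $r \neq \{0\}$, all incoming transitions are non-resetting (resets yield clock value $0$) and each carries a guard $\phi$ with $\denote{\phi} = r$ by construction. The remaining obstacle, and what I expect to be the main subtle point, is a state $(q, \{0\})$: it may be entered both by a resetting transition and by a non-resetting transition whose guard is $x = 0$, violating the required homogeneity. To resolve this, split each such state into two copies $(q, \{0\})^{\mathrm{r}}$ and $(q, \{0\})^{\mathrm{nr}}$, routing each incoming transition to the first copy if it resets and to the second otherwise; the initial state becomes $(q_I, \{0\})^{\mathrm{nr}}$. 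The outgoing transitions of both copies are identical, since they depend only on $q$ and on the clock value $0$. Call the resulting automaton $\Aa_2$ and set $\region((q,r)) = r$, $\region((q,\{0\})^{\mathrm{r}}) = \region((q,\{0\})^{\mathrm{nr}}) = \{0\}$.

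Finally, verify the $K$-acceptor properties for $\Aa_2$. By the construction, every state's incoming transitions are homogeneous: non-zero-region states receive only non-resetting transitions with matching guard, each $(q, \{0\})^{\mathrm{r}}$ receives only resets, and each $(q, \{0\})^{\mathrm{nr}}$ receives only non-resetting transitions with guard $x = 0$. By a small induction, a timed word reaching any state lands in the associated region, so $[R^{\Aa_2}(u)]_{\equiv^{K}} = \region(\Aa_2(u))$. Determinism is inherited because in $\Aa_1$ any two outgoing transitions from $(q, r)$ share $(q, a)$ and have disjoint guards iff the corresponding transitions in $\Aa$ do, and the $\{0\}$-split only duplicates identical outgoing transitions. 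Completeness, language equivalence with $\Aa$, and agreement of the reset function with $R^{\Aa}$ all transfer from $\Aa$ through both stages, since neither stage alters the guard or the reset bit of any transition. The only delicate step is the second split, which is precisely what is needed to enforce the syntactic homogeneity demanded by the definition of a $K$-acceptor.
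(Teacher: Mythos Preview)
Your proposal is correct and takes essentially the same approach as the paper: form the product of the given automaton with the set of $K$-regions, redirecting each transition to the region-copy determined by its guard (or to the $\{0\}$-copy if it resets), and observe that this preserves language, constant, and reset bit. Your additional split of the zero-region states into a ``reset'' and a ``non-reset'' copy is a harmless refinement that enforces the stricter syntactic reading of the acceptor definition (homogeneous incoming transitions); the paper's proof omits this step, relying only on the semantic condition that every word reaching a state arrives with clock value in the assigned region.
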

\paragraph{Profiles.}
A \emph{profile} is a pair \((R, \approx)\) where \(R\) is a reset function, and \(\mathord{\approx}\subseteq \mathord{\HI\times \HI}\) is an equivalence relation on half-integral words.

Every acceptor \(\Aa\) induces a natural equivalence on half-integral words: \(u\approx^{\Aa}v\) if \(\Aa(u)=\Aa(v)\).
In other words, two half-integral words are equivalent whenever they lead to the same control state from the initial configuration.
The pair \((R^\Aa, \approx^\Aa)\) thus forms a profile.
But, what additional properties does \((R^\Aa, \approx^\Aa)\) satisfy?
This will help us understand the other direction: given an arbitrary profile \((R, \approx)\) when does it reflect an acceptor, in other words, when is there an acceptor \(\Aa\) such that \((R^\Aa, \approx^\Aa)\) equals \((R, \approx)\)?

The first step is to determine the maximum constant of an acceptor associated with \((R, \approx)\).
To do so, we associate to \((R, \approx)\) a constant \(K_{(R, \approx)} \in \mathbb{N} \cup \{\infty\}\) defined as follows.
Let \(C(K)\) be the following proposition \(\forall u \in \HI\), \(\forall a \in \Sigma\) and \(\forall n \in \mathbb{N}\), \(R(u) + \frac{n}{2} > K \implies u (\frac{n}{2} \cdot a) \approx u(K+\frac{1}{2}\cdot a)\).
Intuitively, \(C(K)\) says, the profile identifies \( u(\frac{n}{2}\cdot a)\) and \(u(K+\frac{1}{2}\cdot a)\) if after elapsing a delay \(\frac{n}{2}\) from \(R(u)\) the clock goes above \(K\).
Then \(K_{(R, \approx)}=\inf\{K\in \mathbb{N} \mid C(K) \text{ holds}\}\) if this infimum is defined and \(K_{(R, \approx)} = \infty\) otherwise.
As expected, for \((R, \approx)\) to define an acceptor, \(K_{(R, \approx)}\) must be finite.
Next we give the extra conditions required.
\begin{definition}[monotonic, \(L\)-preserving]
	\label{def:monotonic}
	A profile \((R, \approx)\) is \(L\)-\emph{preserving} when for all \(u, v \in \HI\), \(u\approx v \implies (u\in L \iff v\in L)\) and for all \(w \in \mathbb{T}\Sigma^*\), we have \(w \in L\) iff \(N^R(w) \in L\).
	It is said to have \emph{finite index} when \(\approx\) has finite index and \(K_{(R, \approx)}<\infty\).
	A profile \((R, \approx)\) is \emph{monotonic} if for every \(u, v \in \HI\), \(u \approx v\) implies {\upshape(\itshape a\upshape)} \(R(u) \equiv^{K_{(R, \approx)}} R(v)\) and {\upshape(\itshape b\upshape)} \(\forall n\in \mathbb{N}, \forall a\in \Sigma\), \(u (\frac{n}{2} \cdot a) \approx v (\frac{n}{2} \cdot a) \) and, the reset function \(R\) satisfies {\upshape(\itshape c\upshape)} \(R(w) \equiv R(N^R(w))\) for all \(w \in \mathbb{T}\Sigma^*\).
\end{definition}

\begin{proposition}\label{prop:automaton-to-monotone}
	For an acceptor \(\Aa\), the profile \((R^\Aa, \approx^\Aa)\) is monotonic, \(\mathcal{L}(\Aa)\)-preserving and has finite index.
\end{proposition}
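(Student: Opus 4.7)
The plan is to verify the three properties separately, writing $R$ for $R^\Aa$, $\approx$ for $\approx^\Aa$, and $K$ for the maximum constant of $\Aa$. For $\mathcal{L}(\Aa)$-preservation, the first condition reduces to the definition of $\approx$: if $u\approx v$ then $\Aa(u)=\Aa(v)$, so acceptance coincides; the second condition, $w\in\mathcal{L}(\Aa)\iff N^R(w)\in\mathcal{L}(\Aa)$ for all $w$, is precisely Proposition~\ref{prop:normal_form}. For finite index, $\approx$ has at most $|Q|$ classes. To see $K_{(R,\approx)}\le K$, I would verify $C(K)$: if $R(u)+\frac{n}{2}>K$ then $R(u)+K+\frac{1}{2}>K$ as well, so both values land in the single $K$-region $(K,\infty)$; since $\Aa$'s guards are $K$-regions, the same outgoing transition from $\Aa(u)$ fires on $a$ in both cases, yielding the required $\approx$-equality.

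For the monotonicity items (a) and (b), fix $u\approx v$ in $\HI$ and let $q=\Aa(u)=\Aa(v)$. Because $\Aa$ is a $K$-acceptor, $R(u),R(v)\in\region(q)$, and hence $R(u)\equiv^K R(v)$; since $\equiv^K$ refines $\equiv^{K'}$ for $K'\le K$, this yields $R(u)\equiv^{K_{(R,\approx)}} R(v)$, giving (a). For (b), $u,v\in\HI$ forces $R(u),R(v)$ to be half-integral, and every $K$-region other than $(K,\infty)$ contains a unique half-integral value; so either $R(u)=R(v)$ or both values exceed $K$. In both cases $R(u)+\frac{n}{2}$ and $R(v)+\frac{n}{2}$ lie in the same $K$-region, trigger the same transition on $a$, and reach the same state, so $u(\frac{n}{2}\cdot a)\approx v(\frac{n}{2}\cdot a)$.

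The main obstacle is (c). I would prove the strengthened claim $R(N^R(u))=\hi(R(u))$ by induction on the length of $u$; since $\hi(x)\equiv x$ always, this immediately yields $R(u)\equiv R(N^R(u))$. The base case is $0=\hi(0)$. For the step, write $u'=u(t\cdot a)$. By Proposition~\ref{prop:normal_form} the runs of $u'$ and $N^R(u')$ in $\Aa$ follow identical transitions, so the last transition resets in one iff in the other. If it does, both $R(u')$ and $R(N^R(u'))$ equal $0$. Otherwise, using $N^R(u')=N^R(u)(t'\cdot a)$ with $t'$ chosen so that $R(u)+t\equiv\hi(R(u))+t'$, together with the inductive hypothesis $R(N^R(u))=\hi(R(u))$, we obtain $R(N^R(u'))=\hi(R(u))+t'\equiv R(u)+t=R(u')$. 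Since $N^R(u')$ is half-integral its $R$-value is too, so it must be the unique half-integral number equivalent to $R(u')$, namely $\hi(R(u'))$, closing the induction. The subtlety here is in simultaneously maintaining the algebraic shift from $R(u)$ to $\hi(R(u))$ at every step and aligning the reset positions along the two runs, for which Proposition~\ref{prop:normal_form} is essential.
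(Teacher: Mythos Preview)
Your proof is correct and follows essentially the same approach as the paper's: both invoke Proposition~\ref{prop:normal_form} for $\mathcal{L}(\Aa)$-preservation and for item (c), use the region associated to a state for (a), and argue via the same-transition reasoning for (b) and for $C(K)$. Your treatment is in fact slightly more careful than the paper's in two places: you explicitly handle the case in (b) where $R(u)$ and $R(v)$ both lie in the unbounded region $(K,\infty)$ (the paper tacitly writes $R^\Aa(u)=R^\Aa(v)$), and for (c) you spell out the induction yourself via the strengthened invariant $R(N^R(u))=\hi(R(u))$ rather than pointing back to the proof of Proposition~\ref{prop:normal_form}.
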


For the converse, suppose \((R, \approx)\) is monotonic, \(L\)-preserving (for some \(L\)) and has finite index.
We construct an acceptor \(\Aa_{(R, \approx)}\) of constant \(K_{(R, \approx)}\) and prove that its language is \(L\).
The acceptor \(\Aa_{(R, \approx)}\) has states \(\{ [u]_\approx \mid u \in \HI \}\).
The initial state is \([\epsilon]_\approx\).
Final states are \(\{ [u]_\approx \mid u \in L\cap \HI \}\).
There is a transition \(([u]_\approx, [v]_\approx, a, g, s)\) in \(\Aa_{(R, \approx)}\) if there exists \(n \in\mathbb{N}\) such that:
\( u (\frac{n}{2} \cdot a) \approx v\), and \(g = \phi_K(R(u) + \frac{n}{2})\), and \(s = 0\) if \(R(v)=0\) and \(s = 1\) otherwise.
Next we establish the correspondence between acceptors and profiles.
\begin{proposition}\label{prop:monotone-to-automaton}
	Let \((R, \approx)\) be an \(L\)-preserving, monotonic profile with finite index.
	Then \(\Aa_{(R, \approx)}\) is a acceptor of profile \((R, \approx)\) such that \(\Ll(\Aa_{(R, \approx)}) = L\).
\end{proposition}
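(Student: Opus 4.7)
The plan is to verify three things about the construction: (1) $\Aa_{(R, \approx)}$ is a complete, deterministic $K$-acceptor with $K = K_{(R, \approx)}$; (2) its induced profile coincides with $(R, \approx)$; and (3) $\Ll(\Aa_{(R, \approx)}) = L$. The state set is finite by the finite-index hypothesis. A key observation is that for any $u \in \HI$ the value $R(u)$ is itself half-integral (every step of the reset-function recurrence either sets it to $0$ or adds a half-integer to $R$ of a shorter prefix), so monotonicity~(a), which says $R(u) \equiv^K R(u')$ whenever $u \approx u'$, actually forces $R(u) = R(u')$ whenever this value lies in $[0, K]$, and forces both to exceed $K$ otherwise. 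Together with monotonicity~(b), this makes both the target class and the guard $\phi_K(R(u) + \frac{n}{2})$ of a transition independent of the chosen representative, and also makes the region assignment $\region([u]_\approx) := [R(u)]_{\equiv^K}$ well defined. Determinism: two transitions out of $[u]_\approx$ on $a$ with overlapping guards force $R(u)+\frac{n}{2}$ and $R(u)+\frac{n'}{2}$ to agree up to $\equiv^K$, which either gives $n = n'$ (below $K$, where half-integral values in the same region must coincide), or places both values above $K$; in the latter case the defining property $C(K)$ of $K_{(R, \approx)}$ yields $u(\frac{n}{2}\cdot a) \approx u(\frac{n'}{2}\cdot a)$, so the two transitions in fact coincide. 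Completeness holds because as $n$ ranges over $\mathbb{N}$, $\phi_K(R(u) + \frac{n}{2})$ exhausts every $K$-region reachable from $[R(u)]_{\equiv^K}$, and by construction every incoming transition to $[v]_\approx$ carries a guard inside $\region([v]_\approx)$.

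The core of the proof is the invariant
\[
  (q_I, 0) \rightsquigarrow^w \bigl([N^R(w)]_\approx,\ R(w)\bigr) \quad \text{in } \Aa_{(R, \approx)},
\]
established by induction on $|w|$. For the step $w = u(t\cdot a)$, the hypothesis places the run at $([N^R(u)]_\approx, R(u))$. Since $N^R(u) \in \HI$, $R(N^R(u))$ is half-integral, and monotonicity~(c) gives $R(N^R(u)) \equiv R(u)$, which forces $R(N^R(u)) = \hi(R(u))$. The definition of the normal form chooses $n$ with $R(u) + t \equiv \hi(R(u)) + \frac{n}{2}$, so the clock value $R(u) + t$ satisfies the guard $\phi_K(R(N^R(u)) + \frac{n}{2})$ of the transition from $[N^R(u)]_\approx$ to $[N^R(u)(\frac{n}{2}\cdot a)]_\approx = [N^R(w)]_\approx$. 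For the reset bit, applying monotonicity~(c) to $w$ gives $R(w) \equiv R(N^R(w))$, hence $R(w) = 0$ iff $R(N^R(w)) = 0$; the construction sets the reset bit $s$ to $0$ in exactly this case, so the updated clock $(R(u)+t)\cdot s$ equals $R(w)$ regardless of which branch is taken.

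From the invariant everything else follows. One has $R^{\Aa_{(R, \approx)}}(w) = R(w)$ for every timed word $w$. For the induced equivalence, Remark~\ref{remark:normalforHI} gives $N^R(u) = u$ for $u \in \HI$, so $u \approx^{\Aa_{(R, \approx)}} v$ iff $[u]_\approx = [v]_\approx$ iff $u \approx v$; thus the induced profile is exactly $(R, \approx)$. Finally, $w \in \Ll(\Aa_{(R, \approx)})$ iff $[N^R(w)]_\approx$ is final iff $N^R(w) \in L$ (well-defined because $\approx$ is $L$-preserving), which by the $L$-preserving property on $R$ is equivalent to $w \in L$. The delicate point, and the main obstacle, is aligning the determinism and completeness arguments with $K_{(R, \approx)}$: one must simultaneously exploit $C(K)$ to collapse the above-$K$ transitions into a single one and use monotonicity~(a) together with the half-integrality of $R$ on $\HI$ to pin down a canonical clock value at each state, which is precisely where the full strength of the monotonicity axioms is needed.
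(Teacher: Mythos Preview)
Your proof is correct and in fact more complete than the paper's, which omits the well-definedness, determinism, completeness, and acceptor-property checks that you spell out. The main structural difference is that the paper splits the argument into two inductions: first it shows $([\epsilon]_\approx,0)\rightsquigarrow^u([u]_\approx,R(u))$ for $u\in\HI$ only, deduces $\HI(L)=\HI(\Ll(\Aa_{(R,\approx)}))$ and $R=R^{\Aa_{(R,\approx)}}$ on $\HI$, and then runs a second induction over all of $\mathbb{T}\Sigma^*$ to extend $R=R^{\Aa_{(R,\approx)}}$ to arbitrary words (invoking monotonicity~(c) for both $(R,\approx)$ and the profile of the constructed automaton), finally chaining through Proposition~\ref{prop:normal_form}. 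Your single invariant $(q_I,0)\rightsquigarrow^w([N^R(w)]_\approx,R(w))$ for arbitrary $w$ fuses these two steps: by carrying the normal form in the state component you get the reset-function equality and the language equality in one pass, and you avoid the detour through Proposition~\ref{prop:normal_form} and through the automaton's own monotonicity. The paper's decomposition is more modular (it reuses Proposition~\ref{prop:normal_form}), while yours is more direct and self-contained.
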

\section{Myhill-Nerode Style Characterization}
\label{sec:syntactic}
In this section, given a language \(L\), we define the \emph{syntactic profile} of \(L\), which is a machine-independent profile based on the syntactic reset function \(\rl\).
Using the syntactic profile and together with the results of the previous section, we obtain a Myhill-Nerode style characterization of \(1\)-DTA recognizable languages.
\paragraph{Syntactic Profile.}
The syntactic profile of \(L\) is \((\rl,\approx^{L})\) where \(\rl\) is the syntactic reset function of \(L\) and \(\approx^{L}\subseteq \HI\times \HI\) is the equivalence that identifies two half-integral words whenever their \(\rl\) values are equal and their residual languages are the same.
Formally, we define \[u\approx^{L}v \iff \rl(u) = \rl(v)~~\text{and}~~u^{-1}L = v^{-1}L \enspace.
\]

As shown next, when \(L\) is a \(1\)-DTA recognizable language \((\rl,\approx^{L})\) satisfies all the conditions of Definition~\ref{def:monotonic}.
We are thus in position to define the \emph{canonical acceptor} of \(L\) as \(\Aa_{(\rl, \approx^L)}\).
Moreover, the maximum constant of \(\Aa_{(\rl, \approx^L)}\) is \(K_{L}\) where we recall that \(K_L\) is the smallest constant \(K\) such that there exists a \(1\)-DTA with maximum constant \(K\) that accepts \(L\).
\begin{proposition}
	\label{prop:mh}
	If \(L\) is a \(1\)-DTA recognizable language then \((\rl,\approx^{L})\) is \(L\)-preserving, monotonic and has finite index such that \(K_{(\rl,\approx^{L})}=K_{L}\).
\end{proposition}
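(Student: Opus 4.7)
The plan is to fix, by the first part of Theorem~\ref{thm:computing-rl-acceptor}, a $1$-DTA $\Aa$ accepting $L$ with maximum constant $K_L$ and whose induced reset function coincides with $\rl$. I would then establish the four assertions in the following order: (i) $L$-preservation, (ii) finite index of $\approx^L$, (iii) the upper bound $K_{(\rl,\approx^L)} \leq K_L$, (iv) monotonicity of $(\rl,\approx^L)$, and finally (v) the matching lower bound $K_{(\rl,\approx^L)} \geq K_L$ via Proposition~\ref{prop:monotone-to-automaton}.

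For (i), the definition of $\approx^L$ gives $u^{-1}L = v^{-1}L$ whenever $u \approx^L v$ with $u,v\in\HI$, hence $u\in L \Leftrightarrow v\in L$; the second clause of $L$-preservation is exactly Proposition~\ref{prop:normal_form} applied to $\Aa$. For (ii), the second part of Theorem~\ref{thm:computing-rl-acceptor} places $\rl$ in $[0,K_L]\setminus\mathbb{N}_{>0}$, so on $\HI$ the values of $\rl$ are half-integral and range over the finite set $\{0,\tfrac{1}{2},\tfrac{3}{2},\ldots,K_L-\tfrac{1}{2}\}$. Each residual $u^{-1}L$ equals $\Ll(\Aa(u),\rl(u))$, so there are at most $|Q|\cdot(K_L+1)$ distinct residuals; combining the two observations, $\approx^L$ has finite index.

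For (iii), I would verify $C(K_L)$ directly. Given $u\in\HI$, $a\in\Sigma$, and $n\in\mathbb{N}$ with $\rl(u)+\tfrac{n}{2}>K_L$, both delays $\tfrac{n}{2}$ and $K_L+\tfrac{1}{2}$ from the configuration $(\Aa(u),\rl(u))$ satisfy the guard $x>K_L$, so determinism forces them to take the same transition $\theta$. Theorem~\ref{thm:computing-rl-acceptor} requires the resulting $\rl$-values to lie in $[0,K_L]\setminus\mathbb{N}_{>0}$, but $\rl(u)+\tfrac{n}{2}$ and $\rl(u)+K_L+\tfrac{1}{2}$ both exceed $K_L$, so $\theta$ must reset. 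Thus $u(\tfrac{n}{2}\cdot a)$ and $u(K_L+\tfrac{1}{2}\cdot a)$ reach the same configuration in $\Aa$, yielding equal $\rl$-values and equal residuals, hence $\approx^L$-equivalence. Conditions (a) and (b) of monotonicity then follow directly: (a) because $u\approx^L v$ implies $\rl(u)=\rl(v)$, and (b) because $(u(\tfrac{n}{2}\cdot a))^{-1}L = (\tfrac{n}{2}\cdot a)^{-1}(u^{-1}L)$, so equal residuals propagate and the recursive definition of $\rl$ ensures that both extensions either reset to $0$ or add $\tfrac{n}{2}$ to their (equal) predecessors.

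The main obstacle is condition (c): $\rl(u)\equiv\rl(N^\rl(u))$ for all $u\in\mathbb{T}\Sigma^*$. I would prove this by induction on $|u|$. The inductive step on $u(t\cdot a)$ uses Proposition~\ref{prop:normal_form} to conclude that $u(t\cdot a)$ and $N^\rl(u(t\cdot a)) = N^\rl(u)(t'\cdot a)$ traverse the same transition $\theta$ in $\Aa$; if $\theta$ resets, both $\rl$-values are $0$, so the claim is trivial. Otherwise $\rl(u(t\cdot a)) = \rl(u)+t$ and $\rl(N^\rl(u(t\cdot a))) = \rl(N^\rl(u))+t'$, and the definition of $N^\rl$ yields $\rl(u)+t \equiv \hi(\rl(u))+t'$. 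The key observation is that $\rl(N^\rl(u))$ is half-integral since $N^\rl(u)\in\HI$, while by the induction hypothesis it lies in the same region as $\rl(u)$; uniqueness of the half-integral representative within a region then forces $\rl(N^\rl(u)) = \hi(\rl(u))$, closing the step. With monotonicity established, Proposition~\ref{prop:monotone-to-automaton} produces an acceptor for $L$ of maximum constant $K_{(\rl,\approx^L)}$, and the minimality of $K_L$ yields (v), concluding $K_{(\rl,\approx^L)} = K_L$.
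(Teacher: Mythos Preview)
Your proposal is correct and follows the same overall strategy as the paper: fix via Theorem~\ref{thm:computing-rl-acceptor} an automaton for $L$ with reset function $\rl$ and constant $K_L$, verify $C(K_L)$ with it, derive monotonicity (a) and (b) directly from the definition of $\approx^L$, and obtain the lower bound $K_L\le K_{(\rl,\approx^L)}$ from Proposition~\ref{prop:monotone-to-automaton}.

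Two points of execution differ. For finite index, the paper goes through Lemma~\ref{lemma:finite_index}, which needs a \emph{strict} acceptor and in turn relies on Proposition~\ref{prop:comparison_cL_strict} and Lemma~\ref{lem:bad-state-same-language-for-all-x}; your direct counting via the pair $(\Aa(u),\rl(u))$ is more elementary and does not require strictness at all. For condition (c), the paper simply notes that the chosen automaton sends $u$ and $N^{\rl}(u)$ to the same state; this, together with the relation $R(u)\equiv R(N^R(u))$, is exactly what the proof of Proposition~\ref{prop:normal_form} establishes. Your explicit induction re-derives that same fact, so you could shorten the step by invoking Proposition~\ref{prop:normal_form} (or rather its proof) directly.
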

From Proposition~\ref{prop:monotone-to-automaton} and~\ref{prop:mh}, we get the first part of the Myhill-Nerode style characterization: \(L\) is \(1\)-DTA recognizable if and only if \((\rl,\approx^{L})\) is \(L\)-preserving, monotonic and has finite index.

The second part of a Myhill-Nerode style characterization concerns “minimality” of \(\Aa_{(\rl, \approx^L)}\).
Here, minimality refers to the number of states and the value of the maximum constant.
While the minimality of the constant is achieved by \(K_L\), minimality w.r.t.\ states is more involved.
To resolve this, we need to analyze the syntactic reset function \(\rl\) more closely which we proceed to do now.
\paragraph{Strictness.}
We single out a subclass of acceptors to which \(\Aa_{(\rl, \approx^L)}\) belongs called \emph{strict} acceptors.
Strict acceptors are a syntactic restriction of acceptors, analogous to the \emph{strict IRTAs} \cite{bhaveAddingDenseTimedStack2017} class: every transition with an equality guard, i.e.\ a guard of the form ``\(x = n\)'' for some \(n \in \mathbb{N}\), resets the clock.
Restricting to strict acceptors does not reduce expressiveness as shown next.

\begin{lemma}%
	\label{lemma:steptwo}
	Given a \(K\)-acceptor we can compute a language equivalent strict \(K\)-acceptor.
\end{lemma}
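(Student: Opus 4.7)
The plan is to bypass any direct syntactic surgery on $\Aa$ and instead route through the canonical-acceptor machinery developed earlier in the section. Given a $K$-acceptor $\Aa$, set $L = \Ll(\Aa)$. Because $\Aa$ accepts $L$ with maximum constant $K$, we immediately get $K_L \leq K$. By Proposition~\ref{prop:mh} the syntactic profile $(\rl, \approx^L)$ is $L$-preserving, monotonic, of finite index, and satisfies $K_{(\rl, \approx^L)} = K_L$. Feeding this profile into Proposition~\ref{prop:monotone-to-automaton} yields an acceptor $\Aa_{(\rl, \approx^L)}$ of maximum constant $K_L$ recognising $L$. This is the candidate strict acceptor.

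The substantive step is to check that $\Aa_{(\rl, \approx^L)}$ is already strict. Inspecting the construction, each transition $([u]_{\approx}, [v]_{\approx}, a, g, s)$ has $v \approx^L u(\frac{n}{2} \cdot a)$ for some $n$, guard $g = \phi_{K_L}(\rl(u) + \frac{n}{2})$, and $s = 0$ iff $\rl(v) = 0$ (well-defined because $\rl$ is constant on $\approx^L$-classes). The guard $g$ is an equality $x = m$ precisely when $\rl(u) + \frac{n}{2}$ equals an integer $m \in \{0, \ldots, K_L\}$. From the inductive definition of $\rl$ we know $\rl(v) \in \{0, \rl(u) + \frac{n}{2}\} = \{0, m\}$. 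If $m \geq 1$, the second bullet of Theorem~\ref{thm:computing-rl-acceptor} rules out $\rl(v) = m$ because $\rl(v) \in [0, K_L] \setminus \mathbb{N}_{>0}$, forcing $\rl(v) = 0$ and hence $s = 0$. If $m = 0$ both candidates collapse to $0$ and again $s = 0$. Thus every equality transition in $\Aa_{(\rl, \approx^L)}$ resets, so the automaton is the desired strict $K_L$-acceptor with $K_L \leq K$ accepting $L$.

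The main obstacle I expect is precisely this strictness verification, which rests entirely on the ``integer gap'' $\rl(v) \notin \mathbb{N}_{>0}$; without that gap an equality guard could accompany a non-resetting target, and Proposition~\ref{prop:monotone-to-automaton} would hand back a non-strict acceptor. An alternative route would be a direct product construction on $\Aa$ where each state is paired with an integer shift $s$ tracking the difference $x - y$ between the original and new clocks, so that each non-resetting $x = n$ edge is replaced by a resetting $y = n - s$ edge with the shift updated to $n$. That approach is clean on integer and open regions, but handling the unbounded region $(K, \infty)$ under a positive shift forces splitting the corresponding states by subregion of $y$, introducing fresh equality guards that in turn demand further resets and threaten to inflate the maximum constant beyond $K$; the profile-based proof above sidesteps these complications entirely.
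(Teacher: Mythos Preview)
Your argument is circular. Both Proposition~\ref{prop:mh} and the second bullet of Theorem~\ref{thm:computing-rl-acceptor}, on which your proof rests, themselves depend on Lemma~\ref{lemma:steptwo}. In the proof of Proposition~\ref{prop:mh}, finite index of \(\approx^L\) is obtained by first invoking Lemma~\ref{lemma:steptwo} to get a strict acceptor and then applying Lemma~\ref{lemma:finite_index}, whose hypothesis requires strictness. The proof of Theorem~\ref{thm:computing-rl-acceptor} begins by applying Lemmas~\ref{lemma:stepone} and~\ref{lemma:steptwo} to pass to a strict acceptor before eliminating bad states, and the range statement \(\rl(u)\in[0,K_L]\setminus\mathbb{N}_{>0}\) in the second bullet is deduced precisely from the strictness of the resulting automaton. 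So the ``integer gap'' \(\rl(v)\notin\mathbb{N}_{>0}\) that you use to certify strictness of \(\Aa_{(\rl,\approx^L)}\) is a \emph{consequence} of Lemma~\ref{lemma:steptwo}, not an independent fact you may feed back into its proof.

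The paper therefore proves Lemma~\ref{lemma:steptwo} by the very kind of direct syntactic surgery you set aside: it picks the largest integer \(m\) such that some state has region \([m]_{\equiv^K}\), resets the clock on every transition entering such a state, and shifts all guards on the non-resetting paths emanating from it down by \(m\) (adding copies \(\overline{p}\) of the states reached along those paths to record the shift). Iterating removes all integer-region states and yields a strict acceptor; the constant can only stay the same or decrease. This elementary construction is what unlocks Theorem~\ref{thm:computing-rl-acceptor} and, through it, Propositions~\ref{prop:mh} and~\ref{prop:rl_is_strict}; your proposal tries to run that chain backwards.
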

The next proposition shows that \(\Aa_{(\rl, \approx^L)}\) achieves minimality within the class of strict acceptors.
Example~\ref{example:counterexample_minimality} in Appendix~\ref{app:myhill-nerode} shows that \(\Aa_{(\rl, \approx^L)}\) is not necessarily minimal among all acceptors.

\begin{proposition}
	\label{prop:rl_is_strict}
	If \(L\) is a \(1\)-DTA recognizable language then \(\Aa_{(\rl, \approx^L)}\) is a strict \(K_{L}\)-acceptor with reset function \(\rl\).
	Moreover, no strict acceptor for \(L\) has fewer states than \(\Aa_{(\rl, \approx^L)}\).
\end{proposition}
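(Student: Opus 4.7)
The proof splits into two parts matching the two claims of the proposition.

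For the first part, Proposition~\ref{prop:mh} asserts that $(\rl, \approx^L)$ is $L$-preserving, monotonic, and of finite index with $K_{(\rl, \approx^L)} = K_L$; applying Proposition~\ref{prop:monotone-to-automaton} then yields $\Aa_{(\rl, \approx^L)}$ as a $K_L$-acceptor of profile $(\rl, \approx^L)$ with language $L$, so its reset function is $\rl$ and its maximum constant is $K_L$. To establish strictness, I inspect the transitions $([u]_{\approx^L}, [v]_{\approx^L}, a, g, s)$ of the construction: $g$ is an equality guard $x = c$ precisely when $c = \rl(u) + n/2 \in \{0, \ldots, K_L\}$. The case $c = 0$ forces $n = 0$ and $\rl(u) = 0$, so the recursive definition of $\rl$ at $v = u(0 \cdot a)$ gives $\rl(v) = 0$ and $s = 0$. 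The case $c > 0$ would require $\rl(v) = c \in \mathbb{N}_{>0}$, contradicting Theorem~\ref{thm:computing-rl-acceptor}; so again $\rl(v) = 0$ and $s = 0$.

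For the minimality claim, let $\Bb$ be any strict $K$-acceptor for $L$, trim without loss of generality. By Proposition~\ref{prop:normal_form} every reachable state is reached by some half-integral word, so $u \mapsto \Bb(u)$ surjects $\HI$ onto the state set of $\Bb$. It suffices to show that this surjection factors through the quotient $\HI \twoheadrightarrow \HI/\approx^L$, which is equivalent to proving $\Bb(u) = \Bb(v) \Rightarrow u \approx^L v$ for $u, v \in \HI$. The residual equality $u^{-1}L = v^{-1}L$ follows from the $K$-acceptor property combined with standard region equivalence: $R^\Bb(u), R^\Bb(v) \in \region(\Bb(u))$ are $\equiv^K$-related, so the timed futures from the two configurations coincide and equal the respective residuals since $\Bb$ recognizes $L$.

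The main obstacle is establishing $\rl(u) = \rl(v)$, since $\Bb$ may reset less often than $\rl$ (so $R^\Bb \neq \rl$ in general). Residual equality immediately yields $\rl(u) = 0 \iff \rl(v) = 0$ by definition of $\rl$; the real work is the case where both values are positive. The plan here exploits two facts. First, Theorem~\ref{thm:computing-rl-acceptor} restricts $\rl$ on $\HI$ to the finite set $\{0, \frac{1}{2}, \frac{3}{2}, \ldots, K_L - \frac{1}{2}\}$, whose elements each lie in a distinct $K$-region (as $K_L \leq K$); so it suffices to prove the weaker $\rl(u) \equiv^K \rl(v)$. Second, whenever $\rl(w) > 0$ along a prefix $w$ of $u$, $\Bb$ cannot reset on the final transition of $w$: a reset would make $w^{-1}L$ equal to the language of the sub-automaton of $\Bb$ starting from $\Bb(w)$ with clock $0$, which is $1$-DTA recognizable and hence would force $\rl(w) = 0$, a contradiction. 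Combining this with strictness --- which forbids the clock from crossing any positive integer $\leq K$ without a reset --- gives tight control over $R^\Bb$ along the suffix following the last prefix $u_\rl$ of $u$ at which $\rl$ vanishes, yielding $R^\Bb(u) = R^\Bb(u_\rl) + \rl(u)$, and analogously $R^\Bb(v) = R^\Bb(v_\rl) + \rl(v)$. A further careful bookkeeping of the half-integer values of $R^\Bb(u_\rl)$ and $R^\Bb(v_\rl)$ using strictness, together with the acceptor identity $R^\Bb(u) \equiv^K R^\Bb(v)$, then transports to $\rl(u) \equiv^K \rl(v)$, closing the argument.
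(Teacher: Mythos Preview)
Your first part is correct and matches the paper's approach: the paper likewise derives strictness from the range constraint $\rl(w) \in [0,K_L] \setminus \mathbb{N}_{>0}$ of Theorem~\ref{thm:computing-rl-acceptor}, and the acceptor/constant claims from Propositions~\ref{prop:mh} and~\ref{prop:monotone-to-automaton}.

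For minimality your plan---show $\approx^{\Bb}$ refines $\approx^L$ for any strict acceptor $\Bb$---is exactly the paper's Lemma~\ref{lemma:finite_index}, but your execution has a real gap. First, the parenthetical ``strictness forbids the clock from crossing any positive integer $\le K$ without a reset'' is false: from clock $\tfrac{1}{2}$ one may elapse $1$, arrive at $\tfrac{3}{2}$, and fire a transition with guard $1<x<2$ without resetting. What strictness actually buys is only that $R^{\Bb}(w)$ is never a positive integer $\le K$. Using that, your bookkeeping \emph{does} work in the bounded case: if $R^{\Bb}(u)\le K$ then $R^{\Bb}(u_{\rl})$ must be $0$ (otherwise $R^{\Bb}(u) = R^{\Bb}(u_{\rl}) + \rl(u)$ is a positive integer $\le K$), whence $\rl(u)=R^{\Bb}(u)=R^{\Bb}(v)=\rl(v)$.

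The gap is the unbounded case $\region(\Bb(u))=(K,\infty)$. Here $R^{\Bb}(u)\equiv^K R^{\Bb}(v)$ conveys nothing beyond both being $>K$, and nothing in your sketch prevents, say, $R^{\Bb}(u_{\rl})=\tfrac{3}{2}$, $\rl(u)=\tfrac{1}{2}$ versus $R^{\Bb}(v_{\rl})=\tfrac{1}{2}$, $\rl(v)=\tfrac{3}{2}$ with $K=1$. The paper rules this case out by proving that any state with region $(K,\infty)$ is \emph{bad}---its residual is $1$-DTA recognizable (Remark~\ref{remark:bad_states}, via the construction in Lemma~\ref{lemma:stepone})---so $\rl(u)=0$, contradicting the positivity assumption. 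More globally, the paper packages all of this as Proposition~\ref{prop:comparison_cL_strict}: for any strict acceptor $\Bb$, $\rl(u)\in\{0,R^{\Bb}(u)\}$. That single statement, once combined with $R^{\Bb}(u)=R^{\Bb}(v)$ (bounded case) and $\rl(u)=0\iff\rl(v)=0$, immediately yields $\rl(u)=\rl(v)$. But Proposition~\ref{prop:comparison_cL_strict} itself rests on the bad-state propagation machinery (Proposition~\ref{prop:bad_state_charact} and Lemma~\ref{lem:bad-to-next-transition}), which your sketch does not supply.
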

Combining the results of Proposition~\ref{prop:monotone-to-automaton}, \ref{prop:mh} and \ref{prop:rl_is_strict} we finally get:
\begin{theorem}[Myhill-Nerode style characterization for \(1\)-DTA]
	\label{thm:good-equivalence-0}

	For a language \(L \subseteq \mathbb{T}\Sigma^{*}\) the following statements hold:
	\begin{itemize} %
		\item[{\upshape(\itshape a\upshape)}] \(L\) is \(1\)-DTA recognizable if and only if \((\rl,\approx^{L})\) is \(L\)-preserving, monotonic %
		      and has finite index.
		\item[{\upshape(\itshape b\upshape)}] No strict acceptor for \(L\) has fewer states or smaller constant than \(\Aa_{(\rl,\approx^{L})}\).
	\end{itemize}
\end{theorem}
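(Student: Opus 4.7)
The plan is to assemble this theorem from the three building blocks already established in the preceding sections: Theorem~\ref{thm:computing-rl-acceptor}, Proposition~\ref{prop:mh}, and Proposition~\ref{prop:rl_is_strict}, together with Proposition~\ref{prop:monotone-to-automaton} for the construction direction. The theorem is essentially the synthesis of these results into a single Myhill-Nerode-flavored statement, so the proof will consist in routing each claim to the appropriate earlier result without reproving anything.

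For part (a), I would handle the two implications separately. The ``only if'' direction is exactly the content of Proposition~\ref{prop:mh}: if \(L\) is \(1\)-DTA recognizable, then the syntactic profile \((\rl, \approx^L)\) is \(L\)-preserving, monotonic, and has finite index. For the ``if'' direction, assume \((\rl, \approx^L)\) satisfies these three properties. Then Proposition~\ref{prop:monotone-to-automaton} applied to this profile yields an acceptor \(\Aa_{(\rl, \approx^L)}\) whose recognized language is \(L\); since every acceptor is in particular a \(1\)-DTA, this shows \(L\) is \(1\)-DTA recognizable.

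For part (b), there are two minimality statements to dispatch. The state-minimality claim follows directly from the second sentence of Proposition~\ref{prop:rl_is_strict}, which asserts precisely that no strict acceptor for \(L\) has fewer states than \(\Aa_{(\rl, \approx^L)}\). For constant-minimality, observe that by Proposition~\ref{prop:mh} the maximum constant of \(\Aa_{(\rl, \approx^L)}\) equals \(K_L\); on the other hand, any strict acceptor for \(L\) is in particular a \(1\)-DTA for \(L\), so its maximum constant is at least \(K_L\) by the very definition of \(K_L\) recalled in Theorem~\ref{thm:computing-rl-acceptor}. Hence \(\Aa_{(\rl, \approx^L)}\) attains the minimum constant among all strict (and even all) acceptors.

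The only genuine work lies hidden inside Proposition~\ref{prop:rl_is_strict}, which carries the state-minimality argument; this is where a classical Myhill-Nerode-style coarsest-equivalence argument is presumably needed, adapted to cope with resets via the fact that \(\rl\) is the canonical reset function and \(\approx^L\) identifies half-integral words by residual and reset value. At the level of the present theorem, however, no additional argument is required beyond pointing at this proposition; thus the present proof is essentially a short bookkeeping combination of previously proven facts.
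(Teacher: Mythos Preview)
Your proposal is correct and matches the paper's own approach, which simply states that the theorem follows by ``combining the results of Section~\ref{sec:syntactic-reset}, Proposition~\ref{prop:mh} and~\ref{prop:rl_is_strict}.'' You have made the routing explicit (Proposition~\ref{prop:mh} for the forward direction of (a), Proposition~\ref{prop:monotone-to-automaton} for the converse, Proposition~\ref{prop:rl_is_strict} for state-minimality, and the definition of \(K_L\) together with \(K_{(\rl,\approx^L)}=K_L\) for constant-minimality), which is exactly the intended decomposition.
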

\paragraph{Computing the canonical acceptor.}
\label{par:computing_the_canonical_acceptor}
Given a \(1\)-DTA \(\Aa\) for \(L\), we compute the canonical acceptor \(\Aa_{(\rl,\approx^L)}\) as follows.
By applying Theorem~\ref{thm:computing-rl-acceptor}, we transform \(\Aa\) into a \(1\)-DTA whose reset function is \(\rl\).
We then transform this \(1\)-DTA into an acceptor \(\Aa_{\rl}\).
This preserves the reset function so that the reset function of \(\Aa_{\rl}\) is also \(\rl\).
However, the equivalence on half-integral words induced by \(\Aa_{\rl}\) need not coincide with \(\approx^{L}\): there may be two half-integral words \(u\) and \(u'\) such that \(u \approx^{L}u'\) but \(u \not\approx^{\Aa_{\rl}}u'\).
Since \(u \approx^{L}u'\) we have that \(\rl(u)=\rl(u')\) and \(u^{-1}L=u'^{-1}L\).
Let \(\frac{n}{2}=\rl(u)\), \(q=\Aa_{\rl}(u)\) and \(q'=\Aa_{\rl}(u')\).
Then \(\region(q)=\region(q')\) and \(\mathcal{L}(q,\frac{n}{2})=\mathcal{L}(q',\frac{n}{2})\).
Conversely, if two states \(q\) and \(q'\) of \(\Aa_{\rl}\) have the same region and the same language when evaluated at the half-integral value of their region, this witnesses the existence of two half-integral words for which \(\approx^{L}\) and \( \approx^{\Aa_{\rl}}\) disagree.
Hence, to transform \(\Aa_{\rl}\) into an acceptor whose equivalence is \(\approx^{L}\) we apply on \(\Aa_{\rl}\) the following state merging algorithm.
For every pair of states \(q\) and \(q'\) such that \(\region(q)=\region(q')\) let \(\frac{n}{2}\) be the unique half-integral value in \(\region(q)\).
If \(\mathcal{L}(q,\frac{n}{2})=\mathcal{L}(q',\frac{n}{2})\), we merge \(q\) and \(q'\) by keeping one of them, say \(q\), redirecting all transitions entering \(q'\) to \(q\) and then deleting \(q'\) together with all its transitions.
Firstly notice that this modification preserves the set of half-integral words that are accepted.
Secondly, the redirection does not change the reset function, hence the modified automaton still implements \(\rl\).
Hence, from Theorem~\ref{thm:1-DTAs-as-hi-reset} the automaton obtained by merging the states still accepts the same language.
Deciding \(\mathcal{L}(q,\frac{n}{2})=\mathcal{L}(q',\frac{n}{2})\) reduces to deciding equivalence of \(1\)-DTAs as explained in Appendix~\ref{app:deciding_language_equality_for states}.

After merging all states that can be merged the resulting acceptor \(\Bb\) has profile \((\rl,\approx^L)\).
However, its constant may still be higher than \(K_L\).
The final step, is thus to adjust the constant.

Let \(K\) be the constant of \(\Bb\) and, \(T\) its set of transitions.
For a state \(q\) and a letter \(a\in \Sigma\) define \(q_a\) to be the target state of the transition with label \(a\) and guard ``\(K<x\)''.
Let \(C_{\Bb}(m)\) be the proposition ``for every state \(q\) and every letter \(a\), all \(a\)-labelled transitions from \(q\) requiring the clock to be strictly above \(m\) go to state \(q_a\)''.
Formally, \(C_{\Bb}(m): \forall (q,q',a,\phi_K(t),r)\in T\), \(t>m\implies q'=q_a\).
\begin{lemma}
	\label{lemma:computing_canonical_acc}
	\(C_{\Bb}(K_L)\) holds.
\end{lemma}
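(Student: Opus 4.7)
The central ingredient is Proposition~\ref{prop:mh}, which gives $K_{(\rl, \approx^L)} = K_L$. Unpacking the definition of $K_{(R,\approx)}$, this says that the profile-level condition $C(K_L)$ holds for $(\rl, \approx^L)$: for every $u \in \HI$, every $a \in \Sigma$ and every $n \in \mathbb{N}$ with $\rl(u) + \frac{n}{2} > K_L$, one has $u(\frac{n}{2} \cdot a) \approx^L u(K_L + \frac{1}{2} \cdot a)$. The plan is to translate this profile-level identification into an identification of the targets of the outgoing transitions of $\Bb$.

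Since $\Bb$ is an acceptor with profile $(\rl, \approx^L)$, every (reachable) state $q$ can be written as $[u]_{\approx^L}$ for some $u \in \HI$ with $\Bb(u) = q$; moreover, a straightforward induction on word length shows that $\rl(u)$ is half-integral whenever $u \in \HI$. Now fix a transition $(q, q', a, \phi_K(t), r) \in T$ with $t > K_L$ and pick a half-integral representative $u$ of $q$. Because the acceptor construction from a $1$-DTA only introduces outgoing transitions whose guards are reachable from $\region(q)$ by a nonnegative delay, there exists $n \in \mathbb{N}$ with $\rl(u) + \frac{n}{2} \in \denote{\phi_K(t)}$. By determinism of $\Bb$, the run on $u(\frac{n}{2} \cdot a)$ traverses precisely this transition, so $q' = [u(\frac{n}{2} \cdot a)]_{\approx^L}$. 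Since $t > K_L$ forces every value in $\denote{\phi_K(t)}$ to exceed $K_L$, we get $\rl(u) + \frac{n}{2} > K_L$, and applying $C(K_L)$ yields $q' = [u(K_L + \frac{1}{2} \cdot a)]_{\approx^L}$. The same argument, applied to the unique $a$-transition from $q$ with guard $K < x$ (picking any $m \in \mathbb{N}$ with $\rl(u) + \frac{m}{2} > K \geq K_L$), gives $q_a = [u(K_L + \frac{1}{2} \cdot a)]_{\approx^L}$. Hence $q' = q_a$, which is exactly $C_\Bb(K_L)$.

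The main obstacle I expect is the bookkeeping around representatives and guard reachability: one must check that every state of $\Bb$ admits a half-integral representative, that $\rl$ preserves half-integrality on $\HI$, and that each transition $(q, q', a, \phi_K(t), r)$ with $t > K_L$ is indeed triggered from $u$ by some half-integral delay. These are straightforward consequences of how $\Bb$ is built (via Theorem~\ref{thm:computing-rl-acceptor}, Lemma~\ref{lemma:acceptorbis}, and the state-merging step), but they need to be stated explicitly before the clean double application of $C(K_L)$ at the word $u(K_L + \frac{1}{2} \cdot a)$ can deliver the conclusion.
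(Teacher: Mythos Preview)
Your proposal is correct and follows essentially the same approach as the paper: both arguments invoke Proposition~\ref{prop:mh} to obtain the profile-level condition $C(K_L)$ for $(\rl,\approx^L)$, then use that $\Bb$'s induced equivalence coincides with $\approx^L$ to transport the identification of $u(\tfrac{n}{2}\cdot a)$ with the ``above-$K$'' extension down to an equality of target states. The paper's proof is terser---it writes $u(t'\cdot a)\approx^L u(K+\tfrac{1}{2}\cdot a)$ directly, implicitly using transitivity through the pivot $u(K_L+\tfrac{1}{2}\cdot a)$ that you spell out---but the substance is identical, and your extra care about half-integral representatives and guard reachability only makes explicit what the paper takes for granted.
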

Due to Lemma~\ref{lemma:computing_canonical_acc} for every state \(q\) we group together all transitions of the form \((q,q_a,a,\phi_K(t),0)\) with \(K_L<t\) (all these transitions are necessarilly resetting since \(\region(q_a)=\{0\}\)) into one transition \((q,q_a,a,K_L<x,0)\).
The constant of the automaton thereof is \(K_L\).
We finish with examples related to Theorem~\ref{thm:good-equivalence-0}.
\begin{example}
	\hspace{0pt}
	\begin{enumerate}
		\item The language \(L=\{(t\cdot a) \mid t\in \mathbb{N}\}\) is not \(1\)-DTA recognizable.
		      For every \(K\) we have \((K +1 \cdot a) \in L\) but \((K+\frac{1}{2} \cdot a) \notin L\).
		      Thus, \(K_{(\mathbf{R}^{L}, \approx^{L})} =\infty \).
		\item The language \(L=\{(t_{1}\cdot a)(t_{2}\cdot b)\mid t_{1}\in (0,1)\backslash \{\frac{1}{2}\}, t_{1}+t_{2}<1\}\cup \{(\frac{1}{2}\cdot a)(0\cdot b)\}\) is not \(1\)-DTA recognizable as its syntactic profile is not monotonic due to condition (c) of Definition~\ref{def:monotonic}.
		      We have \(R^{L}(0.1\cdot a)=0.1\) and \(R^{L}(\frac{1}{2}\cdot a)=0\).
    \item Consider the language \(L=\{(t_{1}\cdot a)\dots (t_{n}\cdot a)\mid \exists i \le j, t_{i}+\dots + t_{j}=1\}\) of all timed words over the alphabet \(\Sigma=\{a\}\). 
          The residual \((0.3 \cdot a)^{-1}L\) is equal to the union of \(L\) and the language comprising all words whose delays add up to \(0.7\).
		      It cannot be recognized using a clock initialized to zero thus, \(\rl(0.3 \cdot a)=0.3\).
		      Hence, \(N^{\rl}((0.3 \cdot a)( 1.2 \cdot a))=( \frac{1}{2} \cdot a)( 1 \cdot a)\).
		      Since \(( \frac{1}{2} \cdot a)( 1 \cdot a)\in L\) and \((0.3 \cdot a)( 1.2 \cdot a)\notin L\) the syntactic profile of \(L\) is not \(L\)-preserving.
		      Thus, \(L\) is not \(1\)-DTA recognizable.
	\end{enumerate}
\end{example}

\section{Learning the Canonical Acceptor}
\label{sec:learning}
In this section, we apply our Myhill-Nerode characterization for active learning of \(1\)-DTA languages.
We present our Angluin-style \(\lstar\)~\cite{angluinLearningRegularSets1987} algorithms for learning the canonical acceptor \(\Aa_{(\rl,\approx^{L})}\) defined in Section~\ref{sec:syntactic}.
In this setting, a Learner aims to learn an unknown language \(L\) from a Teacher through \emph{membership} and \emph{equivalence} queries.
In a membership query, the Learner gives a word \(w\) to the Teacher, who responds saying whether \(w \in L\) or not.
In an equivalence query, the Learner gives a candidate automaton (a \emph{conjecture}) to the Teacher.
If the automaton accepts \(L\), the Teacher responds with a Yes, else the Teacher gives a counterexample to the conjecture.
All the information obtained using the queries is maintained by the Learner as an \emph{observation table}.

We first adapt observation tables to our setting, and then present two algorithms, in the same spirit as in the work of An et al.~\cite{anLearning1DTA2020} on \(1\)-DTA learning.
However, our objective is to learn the canonical acceptor and this is the new challenge which is not tackled by An et al.~\cite{anLearning1DTA2020}.
In our first algorithm, we assume a \emph{Smart Teacher} who can inform the Learner about the \(\rl\) function.
In this case, the task of the Learner lies in identifying \(\approx^L\).
The algorithm is almost similar to the classical \(\lstar\), except that now we need to also identify the unknown constant \(K\).
In the second algorithm, we remove the Smart Teacher assumption and use a technique, partly inspired by the approach of An et al.~\cite{anLearning1DTA2020}, that guesses all possible resets and converges to a strict acceptor for the language.
We show that using some additional equivalence queries, we can derive the canonical acceptor.

\paragraph{Observation tables.}
Given a finite subset \(S\subseteq \HI\) of half-integral words we define the ``one letter'' \emph{extensions} of \(S\) to be the words in \(S\Sigma_{K}\), where \(\Sigma_{K}=\hi([0,K+1)) \times\Sigma\).
Recall that \(\hi([0, K+1])\) stands for the set \(\{0, 0.5, \dots, K, K+0.5\}\).
An \emph{Observation Table (OT)} is a tuple \(\Tt=(K,S,E,T)\) where \(K\in \mathbb{N}\) is called \emph{the constant of \(\Tt\)}, \(S\subseteq \HI\) is a prefix-closed finite subset of half-integral words, \(E\subseteq \HI\) is a suffix-closed finite subset of half-integral words and \(T:(S\cup S\Sigma_{K}) \times E\to \{0,1\} \times \hi(\mathbb{R}_{\geq 0}\setminus\mathbb{N}_{>0})\).
\footnote{\(\mathbb{R}_{\geq 0}\setminus\mathbb{N}_{>0}\) is the set of real numbers minus the set of strictly positive integers.}

The \(\emph{reset assignment}\) of \(\Tt\) is the function \(r:(S\cup S\Sigma_{K})\times E \to \hi(\mathbb{R}_{\geq 0}\setminus\mathbb{N}_{>0})\) defined by \(r(s,e)=T(s,e)(2)\) where \(T(s,e)(i)\) denotes the \(i\)-th component of \(T(s,e)\).
We abuse notations and write \(r(s)\) for \(r(s,\epsilon)\).

Intuitively, if \(L\) is the unknown language, we will compute OTs such that \(T(s,e)(1)=1\) if{}f \(se\in L\) and \(T(s,e)(2)\) is a guess for \(\rl(se)\).

We view \(\Tt\) as a table of entries \((x,y)\in \{0,1\}\times \hi(\mathbb{R}_{\geq 0}\setminus\mathbb{N}_{>0})\) with rows \(S\cup S\Sigma_{K}\) and columns \(E\) labeled by half-integral words.
We write \(\row(s)\) for the row vector in \(\Tt\) indexed by \(s\) and, \(\row(s)(e)\) for the entry \(T(s,e)\).
For \(i\in \{1,2\}\), \(\row(s)(e)(i)=T(s,e)(i)\).
Table~\ref{tab:exampleOTs} shows examples of OTs.
In the OT \(\Tt'_0\), the entry \((0, \frac{1}{2})\) for the word \(u = (\frac{1}{2} \cdot a) \cdot \epsilon\) says that \(u \notin L\) and the value of the clock on reading \(u\) is \(\frac{1}{2}\).
In the OT \(\Tt''_0\), \(\row((\frac{1}{2}\cdot a)(1+\frac{1}{2}\cdot a))=((1, 0))\).
\begin{definition}\label{def:observation-table-closed-consistent}
	An OT \(\Tt\) of constant \(K\) is \emph{valid} when its reset assignment has range included in \([0,K]\setminus\mathbb{N}_{>0}\).
	It is \emph{closed} if for every \(w \in S\Sigma_{K}\) there is \(s\in S\) such that \(\row(w)=\row(s)\).
	We say that \(\Tt\) is \emph{consistent} if
	\begin{itemize}%
    \item[{\upshape(\itshape a\upshape)}] for every \(s_1, s_2 \in S\) if the equality \(\row(s_1) = \row(s_2)\) holds then we find that\linebreak \(\row(s_1(t \cdot a))= \row(s_2(t \cdot a))\) for every \((t \cdot a) \in\Sigma_{K}\).
		\item[{\upshape(\itshape b\upshape)}] for every \(s(t \cdot a) \in S\cup S\Sigma_{K}\) if the inequality \(K<r(s)+t\) holds then\linebreak \( \row(s(t \cdot a))= \row(s(K+\frac{1}{2} \cdot a))\).
	\end{itemize}
\end{definition}

\paragraph{Conjecture of an OT.}
A closed, consistent and valid OT \(\Tt\) of constant \(K\) induces a strict \(K\)-acceptor \(\Aa_{{\Tt}}\).
The automaton \(\Aa_{{\Tt}}\) has states \(\{ \row(s) \mid s \in S\}\).
The initial state is \(\row(\epsilon)\) and a state \(\row(s)\) is final if \(\row(s)(\epsilon)(1)=1\).
There is a transition \(( \row(s), \row(s'), a, g, d)\) if there exists \((t \cdot a) \in \Sigma_{K} \) such that: \(\row(s(t \cdot a)) = \row(s')\), \(g = \phi_K(\mathit{r}(s) + t)\) and \(d = 0\) if \(\mathit{r}(s')=0 \) and, \(d = 1\) otherwise.
Since the range of the reset assignments in an OT exclude \(\mathbb{N}_{>0}\), the resulting acceptor will in fact be strict, leading us to state this proposition.
\begin{proposition}
	\label{prop:acceptor}
	If \(\Tt\) is closed, consistent and valid, and has constant \(K\) then \(\Aa_{{\Tt}}\) is a strict \(K\)-acceptor such that \(R^{\Aa_{{\Tt}}}:\mathbb{R}_{\geq 0}\to [0,K]\setminus\mathbb{N}_{>0}\).
	Moreover, for every \(s\in S\), \((\row(\epsilon),0) \rightsquigarrow^{s} (\row(s),r(s))\)
\end{proposition}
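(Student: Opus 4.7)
My plan is to organize the proof around the inductive identity in the ``Moreover'' clause, and then derive the remaining claims (well-definedness of \(\Aa_\Tt\) as a deterministic complete automaton, strictness, $K$-acceptor structure, and the reset range) as consequences. First I would check that \(\Aa_\Tt\) is a well-defined deterministic $1$-DTA of maximum constant $K$: closedness ensures every transition lands in a state \(\row(s')\) for some $s'\in S$; consistency (a) makes the definition insensitive to the chosen representative of a row; consistency (b) resolves the only potential collision of guards from a given state, namely when distinct values of $t$ with $r(s_0)+t>K$ yield the common guard $K<x$, by forcing targets and reset bits to coincide. For fixed $\row(s_0)$ and letter $a$, the family $\{\phi_K(r(s_0)+t)\}_{t\in\hi([0,K+1))}$ tiles the $K$-regions reachable from $r(s_0)$ by elapsing time, yielding both determinism and completeness, with all guard constants bounded by $K$.

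Next I would prove the ``Moreover'' clause $(\row(\epsilon),0)\rightsquigarrow^{s}(\row(s),r(s))$ by induction on $|s|$ for $s\in S$. The base case $s=\epsilon$ is immediate, using the invariant $r(\epsilon)=0$ (every reset function starts at $0$). For the inductive step, by prefix-closure of $S$ we write $s=s_0(t\cdot a)$ with $s_0\in S$ and $(t\cdot a)\in\Sigma_K$, and apply the induction hypothesis. The construction of $\Aa_\Tt$ provides a transition $(\row(s_0),\row(s),a,\phi_K(r(s_0)+t),d)$ with $d=0$ iff $r(s)=0$; elapsing delay $t$ brings the clock to $r(s_0)+t$, which satisfies $\phi_K(r(s_0)+t)$ by construction. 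Firing this transition leaves the clock at $d\cdot(r(s_0)+t)$, which equals $r(s)$ when $r(s)=0$. The hard sub-case is $r(s)\neq 0$, where I must argue $r(s_0)+t=r(s)$. This relies on the hidden invariant that a reset assignment behaves as a reset function on extensions, i.e.\ $r(s_0(t\cdot a))\in\{0,r(s_0)+t\}$. Combined with validity (which excludes positive integers from the range of $r$), it forces the non-zero value to coincide with $r(s_0)+t$ precisely when $r(s_0)+t\notin\mathbb{N}_{>0}$; in the complementary case, validity forces a reset, contradicting $r(s)\neq 0$. Justifying this invariant is the main obstacle: it is not part of the stated OT definition, but follows from the intended semantics ``$T(s,\epsilon)(2)$ is a guess for $\rl(s)$'' and should be maintained by the OT-construction procedure used in the learning algorithm.

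The remaining assertions fall out as corollaries. Strictness: an equality guard $\phi_K(r(s_0)+t)=(x=n)$ means $r(s_0)+t=n\in\mathbb{N}$; for $n\geq 1$, validity rules out $r(s_0(t\cdot a))=n$, so by the invariant $r(s_0(t\cdot a))=0$ and the transition resets, and for $n=0$ we have $r(s_0)=t=0$, again forcing a reset by the definition of $d$. The $K$-acceptor structure comes from setting $\region(\row(s))=[r(s)]_{\equiv^K}$—well defined because equal rows share $r$-values—combined with the fact that every incoming transition lands with clock value either $0$ (into a state with $r=0$) or $r(s_0)+t$ (into a state with $r=r(s_0)+t$ by the invariant), hence inside the stipulated region. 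Finally, the range statement about $R^{\Aa_\Tt}$ follows from strictness: whenever the clock assumes a value in $\mathbb{N}_{>0}\cap[0,K]$ at the moment of reading a letter, the applicable guard is an equality guard, which by strictness resets, so $R^{\Aa_\Tt}$ never returns such values.
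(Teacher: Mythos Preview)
Your proposal is essentially correct and covers the same ground as the paper, but with a different organisation and one small gap worth flagging.

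The paper front-loads determinism: it picks two transitions out of the same row with the same guard and shows, via a case split on whether \(r(s_1)+t_1\le K\) (using that each bounded \(K\)-region contains a unique half-integral value, hence \(t_1=t_2\), then consistency~(a)) or \(r(s_1)+t_1>K\) (using consistency~(b) to reduce both to the \(K+\tfrac12\) extension, then consistency~(a)), that target and reset bit agree. Everything else (acceptor structure, range, the ``Moreover'') is dispatched in one line each. You instead make the induction for the ``Moreover'' the centrepiece and derive the rest as corollaries; your determinism argument via the tiling observation is correct but more compressed than the paper's explicit case analysis. Your explicit isolation of the invariant \(r(s_0(t\cdot a))\in\{0,\,r(s_0)+t\}\) is a genuine improvement: the paper silently relies on it (it is enforced by construction in both the Smart and Normal Teacher settings) but never names it, and it is indeed needed for the inductive step, for strictness, and for the acceptor's region assignment to be coherent.

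The one gap: your final sentence deduces the range claim from strictness alone, arguing that equality guards force resets and hence \(R^{\Aa_\Tt}\) avoids \(\mathbb{N}_{>0}\cap[0,K]\). That is fine, but it does not yet bound the range by \(K\): strictness says nothing about the guard \(K<x\). You need one more line: when \(r(s_0)+t>K\), validity forbids \(r(s_0(t\cdot a))=r(s_0)+t\), so by your invariant \(r(s_0(t\cdot a))=0\) and that transition also resets; equivalently, since every state has region \([r(s)]_{\equiv^K}\) with \(r(s)\in[0,K]\setminus\mathbb{N}_{>0}\), no state carries the unbounded region, whence \(R^{\Aa_\Tt}\le K\). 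The paper's proof simply says ``since \(\Tt\) is valid'' here, which hides exactly this step.
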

\begin{table}[tbh]%
	\caption{Examples of Observation Tables}
	\label{tab:exampleOTs}
	\begin{minipage}{0.31\textwidth}
		\centering
		\subcaption{Initial with \(K=0\)}
		\begin{tabular}{|c|@{\hspace{2pt}}c@{\hspace{2pt}}|}
			\hline
			\(\Tt_{0}, K=0\)         & \(\epsilon\)        \\
			\hline
			\(\epsilon\)             & \((0,0)\)           \\
			\hline
			\((0\cdot a)\)           & \((0,0)\)           \\
			\((\frac{1}{2}\cdot a)\) & \((0,\frac{1}{2})\) \\
			\hline
		\end{tabular}
	\end{minipage}
	\hfill
	\begin{minipage}{0.21\textwidth}
		\centering
		\subcaption{With \(K=1\)}
		\begin{tabular}{|c|@{\hspace{2pt}}c@{\hspace{2pt}}|}
			\hline
			\(\Tt'_0, K=1\)            & \(\epsilon\)        \\
			\hline
			\(\epsilon\)               & \((0,0)\)           \\
			\hline
			\((0\cdot a)\)             & \((0,0)\)           \\
			\((\frac{1}{2}\cdot a)\)   & \((0,\frac{1}{2})\) \\
			\((1\cdot a)\)             & \((0,0)\)           \\
			\((1+\frac{1}{2}\cdot a)\) & \((0,0)\)           \\
			\hline
		\end{tabular}
	\end{minipage}
	\hfill
	\begin{minipage}{0.32\textwidth}
		\centering
		\subcaption{With \(K=1\)}
		\begin{tabular}{|c|@{\hspace{2pt}}c@{\hspace{2pt}}|}
			\hline
			\(\Tt''_0, K=1\)                                     & \(\epsilon\)        \\
			\hline
			\(\epsilon\)                                         & \((0,0)\)           \\
			\((\frac{1}{2}\cdot a)\)                             & \((0,\frac{1}{2})\) \\
			\hline
			\((0\cdot a)\)                                       & \((0,0)\)           \\
			\((1\cdot a)\)                                       & \((0,0)\)           \\
			\((1+\frac{1}{2}\cdot a)\)                           & \((0,0)\)           \\
			\((\frac{1}{2}\cdot a)(\{0,\frac{1}{2},1\}\cdot a)\) & \((0,0)\)           \\
			\((\frac{1}{2}\cdot a)(1+\frac{1}{2} \cdot a)\)      & \((1,0)\)           \\
			\hline
		\end{tabular}
	\end{minipage}%
\end{table}
\paragraph{Learning with a Smart Teacher.}
Our first learning algorithm, the \emph{Smart Teacher algorithm}, works under the assumption that the Teacher provides reset information.
More precisely, we assume that for every word \(u\in \mathbb{T}\Sigma^{*}\) the Teacher provides the value \(\rl(u)\).
Hence, in all the OTs computed the reset assignment corresponds to \(\rl\) that is for every row \(s\) and column \(e\), \(r(s,e)=\rl(s\cdot e)\).

The algorithm proceeds as follows.
The Learner starts with the OT \(\Tt_{0}=(0,\{\epsilon\},\{\epsilon\},T_0)\) like the one shown in Table~\ref{tab:exampleOTs}.
At each moment the Learner maintains an OT \(\Tt\) and executes the following main loop.
This loop uses the two auxiliary procedures \(\mathit{Process}(\Tt)\) and \(\mathit{Refine}(\Tt,w)\) where \(\Tt\) is an OT and \(w\in \mathbb{T}\Sigma^{*}\).
Both procedures return an OT and are described after the main loop.
\begin{itemize}
	\item if \(\Tt\) is not closed or not consistent or not valid then \(\Tt \gets \mathit{Process}(\Tt)\).
	\item if \(\Tt\) is closed, consistent and valid the Learner conjectures \(\Aa_{{\Tt}}\) and asks the Teacher whether the equivalence \(L(\Aa_{\Tt})=L\) holds.
	      \begin{itemize}
		      \item If \(L(\Aa_{{\Tt}})=L\) holds the algorithm terminates and returns \(\Aa_{{\Tt}}\).
		      \item Else, Teacher gives a counterexample \(w\in \mathbb{T}\Sigma^*\) and \(\Tt \gets \mathit{Refine}(\Tt,w)\).
	      \end{itemize}
\end{itemize}

\paragraph{Processing \(\Tt\).}
Given an OT \(\Tt=(K,S,E,T)\), \(\mathit{Process}(\Tt)\) outputs a closed, consistent valid OT of constant \(K'\geq K\).
It consists of applying the following \emph{four actions} until none of them is applicable.
\begin{enumerate} %
	\item If for \(w\in S\Sigma_{K}\) \(\row(w)=\row(s)\) for no \(s\in S\), move \(w\) from \(S\Sigma_{K}\) to \(S\).
	\item If for \(s_{1}, s_{2}\in S\), \(\row(s_1)=\row(s_2)\) but \(\row(s_1(t\cdot a))(e)\neq \row(s_2(t\cdot a))(e)\) for some \(e\in E\), add \((t\cdot a)e\) to \(E\).%
	\item If \(\Tt\) is not consistent w.r.t.\ item {\upshape (b)} in Definition~\ref{def:observation-table-closed-consistent}, increase \(K\) by one.
	\item If for \(s\in S\cup S\Sigma_{K}\) and \(e\in E\), \(r(s,e)>K\) , increase \(K\) by one.
\end{enumerate}
Action 1 corresponds to “closing” \(\Tt\), action 2 and 3 correspond to “making \(\Tt\) consistent” and action 4 to “making it valid”.
Notice that adding \((t \cdot a) e\) to \(E\) adds new entries that need to be filled using additional membership queries.
Similarly, incrementing \(K\) enlarges \(\Sigma_K\), and hence \(S \Sigma_K\).
Once again, new entries are created which need to filled using membership queries.

\paragraph{Refining \(\Tt\) with \(w\).}
Given an OT \(\Tt\) and a word \(w\in \mathbb{T}\Sigma^{*}\) (which may not be a half-integral word), \(\mathit{Refine}(\Tt,w)\) computes its normal form \(N^{\rl}(w)\) and then adds it along with its prefixes to \(S\).
Notice that if \(w\in \HI\) then \(N^{\rl}(w)=w\) (Remark~\ref{remark:normalforHI}) and that if \(w\notin \HI\), \(N^{\rl}(w)\) is computable from the values \(\rl(w')\) for every prefix \(w'\) of \(w\), which the Teacher provides.
\begin{proposition}
	\label{prop:ST}
	Smart Teacher algorithm terminates and returns \(\Aa_{(\rl, \approx^{L})}\).
	Moreover, the algorithm processes at most \(K_{L}+n\) OTs and the number of membership queries is \(O(mn^2 K_L |\Sigma|)\) where \(n\) is the number of equivalence classes for \(\approx^{L}\) and \(m\) the length of the longest counterexample.
\end{proposition}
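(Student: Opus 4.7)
The plan is to prove three claims in sequence: termination, correctness of the output, and the stated complexity bound.

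For \emph{termination}, I would track two quantities: the number of pairwise distinct rows in $S$ and the constant $K$. Both are bounded: distinct rows correspond to distinct equivalence classes of $\approx^L$ restricted to words in $S$, and by Proposition~\ref{prop:mh} there are at most $n$ such classes; the constant $K$ need never exceed $K_L$, because by Proposition~\ref{prop:mh} we have $K_{(\rl, \approx^L)} = K_L$, so the consistency condition~(b) and the validity requirement are satisfiable already at $K = K_L$. Within a call to $\mathit{Process}$, each of the four actions strictly increases one of these two quantities: actions~1 and~2 introduce a new distinct row, while actions~3 and~4 increment $K$. Hence $\mathit{Process}$ terminates. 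Between iterations of the main loop, a counterexample $w$ returned by the Teacher triggers a $\mathit{Refine}$ that adds $N^{\rl}(w)$ and its prefixes to $S$, which on the subsequent $\mathit{Process}$ strictly increases either the number of distinct rows or $K$; otherwise the previous conjecture would already have agreed with $L$ on $w$. This bounds the number of OTs processed by $K_L + n$.

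For \emph{correctness}, when the algorithm halts we have $L(\Aa_\Tt) = L$. By Proposition~\ref{prop:acceptor}, $\Aa_\Tt$ is a strict acceptor; since the Smart Teacher supplies $r(s) = \rl(s)$ at every row, the reset function of $\Aa_\Tt$ equals $\rl$. Closure and consistency, combined with the validity range restriction on $r$, ensure that the rows of $S$ are in bijection with the equivalence classes of $\approx^L$, and the transitions built from $\Tt$ faithfully encode the monotonicity clauses of Definition~\ref{def:monotonic}. Moreover, at termination $K = K_L$: if $K < K_L$ then $C(K)$ would fail on the canonical profile and the Teacher would produce a counterexample exposing this. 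Hence the induced strict $K_L$-acceptor matches the construction of $\Aa_{(\rl, \approx^L)}$ from Section~\ref{sec:syntactic}, so by Proposition~\ref{prop:monotone-to-automaton} the output equals the canonical acceptor.

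For the \emph{complexity bound}, I would argue that $|E| \le n$, since each suffix added to $E$ strictly refines the row partition and there are at most $n$ classes. Each call to $\mathit{Refine}$ adds at most $m+1$ elements to $S$, so over the $K_L + n$ iterations $|S| = O(m(K_L + n))$; with $|\Sigma_K| = O(K_L |\Sigma|)$ we get $|S \cup S\Sigma_K| \cdot |E| = O(m \, n^2 K_L |\Sigma|)$ cells, each filled by a single membership query under the Smart Teacher.

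The hard part will be showing that each iteration of the main loop strictly advances one of the two tracked dimensions, and that when both reach their respective maxima the conjecture must coincide exactly with $\Aa_{(\rl, \approx^L)}$ rather than merely be language-equivalent. This ties the counting argument to correctness and requires combining Proposition~\ref{prop:acceptor}, Proposition~\ref{prop:monotone-to-automaton}, and the characterization of $K_L$ from Proposition~\ref{prop:mh} so that the final acceptor's constant, states, transitions, and reset function all line up with the canonical ones simultaneously.
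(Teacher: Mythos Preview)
Your approach is essentially the paper's: bound the number of distinct rows by \(n\) (since the Smart Teacher supplies \(\rl\), distinct rows witness distinct \(\approx^L\)-classes) and \(K\) by \(K_L\) (via the range of \(\rl\)), show that each of the four processing actions and each counterexample strictly advances one of these two quantities (the paper packages the counterexample case as a separate lemma), and count table cells for the membership-query bound. One minor arithmetic slip: from your \(|S| = O(m(K_L+n))\) together with \(|E|\le n\) and \(|\Sigma_K|=O(K_L|\Sigma|)\) you obtain \(O\bigl(mn(n+K_L)K_L|\Sigma|\bigr)\), not the stated \(O(mn^2K_L|\Sigma|)\); the paper instead asserts the sharper \(|S|\le nm\), which is what yields the claimed bound.
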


\paragraph{Learning with a Normal Teacher.}
We now consider an algorithm, with no assumption on the Teacher.
The idea is similar to the Normal teacher algorithm of An et al.~\cite{anLearning1DTA2020}, but now, we are able to give minimality guarantees.
Since we do not have the reset information upfront, we need to guess all possible ways to reset the clock.
However, going down a wrong reset branch may not ensure termination.
Hence we need to explore a tree of OTs in a lock-step fashion, extending each leaf by one step.
The branch with the reset assignment corresponding to \(\rl\) produces a valid conjecture, and once we reach this node in the exploration tree, the algorithm terminates.

At each moment \(i\) the Learner keeps a pool \(P_{i}\) of OTs starting with the pool \(P_0=\{\Tt_{0},\dots, \Tt_{n}\}\) where all the \(\Tt_{i}\)'s are defined for \(K=0\), only contain the word \(\epsilon\) and differ from each other on the reset assignments such that all the possible reset assignments for the extensions of \(\epsilon\) appear in \(P_0\).
Given a pool \(P_{i}\) of OTs the Learner picks (and removes) one OT \(\Tt\in P_{i}\) and applies exactly \emph{one step} of the Smart Teacher algorithm: if \(\Tt\) is not closed or is inconsistent or is not valid, a single action among {\upshape 1--4} of the procedure \(\mathit{Process}(\Tt)\) is applied; if \(\Tt\) is closed, consistent and valid, the Learner makes a conjecture and processes the counterexample if one is returned, as explained next.

\paragraph{Processing counterexamples.}
The difference from the Smart Teacher algorithm is that if the Teacher returns a counterexample \(w\) that is not a half-integral word we do not know the values \(\rl(w')\) for the prefixes \(w'\) of \(w\) and thus cannot compute \(N^{\rl}(w)\).
Instead, the Learner computes all possible normal forms of \(w\) and adds them, along with all their prefixes, to the OT.
For example for \(w=(0.2 \cdot a)(1.3 \cdot a)\) there are two possible normal forms \(N_1(w)=(\frac{1}{2} \cdot a)(1+\frac{1}{2} \cdot a)\) and \(N_2(w)=(\frac{1}{2} \cdot a)(1 \cdot a)\).

Applying a single step of the Smart Teacher algorithm results in an extended OT, adding new entries — either new row words (from action~{\upshape 1} or from processing a counterexample), new suffix words (from action~{\upshape 2}), or new extensions from increasing the constant (action~{\upshape 3} and~{\upshape 4}).
For these new entries, the Learner must assign reset values.
To do so, it generates all possible reset assignments extending the assignment of \(\Tt\).
We require a possible reset assignment to satisfy \(r(\epsilon)=r(\epsilon,\epsilon)=0\), for every \(s(t\cdot a)\in S\cup S\Sigma_K\), \(r(s(t\cdot a))\in \{r(s)+t, 0\}\) and, for every \(u_1,u_2\in S\cup S\Sigma_K\) and every \(v_1,v_2\in E\) if \(u_1\,v_1 = u_2\, v_2\) then we require that \(r(u_1,v_1)=r(u_2,v_2)\).

Each such assignment yields a distinct OT; The Learner adds all these OTs to \(P_{i+1}\) and moves on to the next OT from \(P_{i}\).
Once all the OTs in \(P_{i}\) have been processed, \(P_{i}\) becomes empty and the new pool is \(P_{i+1}\).
Termination follows from the termination of the Smart Teacher algorithm and the fact that one of the branches corresponds to the execution of the Smart Teacher algorithm.
\begin{theorem}
	\label{theorem:normal_teacher}
	The Normal Teacher algorithm terminates and returns a strict acceptor for \(L\) of constant \(K\geq K_{L}\) and reset function \(R:\mathbb{R}_{\geq 0}\to [0,K]\setminus\mathbb{N}_{>0}\).
 	Moreover, the algorithm processes \(O(2^{H^{3}m2^m|\Sigma|}H)\) observation tables and performs \(O(2^{H^{3}m2^m|\Sigma|}H^3m2^m|\Sigma|)\) membership queries where \(H=K_{L}+n\), \(n\) is the number of equivalence classes for \(\approx^{L}\) and \(m\) the length of the longest counterexample.

\end{theorem}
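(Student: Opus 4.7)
The plan is to identify a ``golden'' branch in the Normal Teacher's exploration tree that replicates the run of the Smart Teacher algorithm on the unknown language, and then to piggyback termination and correctness onto Proposition~\ref{prop:ST}. At each expansion the Learner generates \emph{all} reset assignments for new rows subject to \(r(\epsilon)=0\), \(r(s(t\cdot a))\in\{r(s)+t,\,0\}\), and the coherence condition \(r(u_1,v_1)=r(u_2,v_2)\) whenever \(u_1v_1=u_2v_2\). Since \(\rl\) itself satisfies all these constraints, exactly one of the generated extensions agrees with \(\rl\) on the new entries. An induction on the step count then yields a branch of the exploration tree whose OTs agree entry-wise with those produced by running the Smart Teacher algorithm.

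Correctness and termination follow readily from this golden branch. By Proposition~\ref{prop:ST}, the Smart Teacher algorithm processes at most \(H=K_L+n\) OTs before conjecturing \(\Aa_{(\rl,\approx^L)}\), which the Teacher accepts. Hence after at most \(H\) expansions of the golden branch within the lock-step BFS, an accepted conjecture is produced and the algorithm halts. Whenever termination occurs, it is on some closed, consistent, valid OT \(\Tt\) whose conjecture passes the equivalence query; Proposition~\ref{prop:acceptor} then ensures that the returned automaton \(\Aa_{\Tt}\) is a strict \(K\)-acceptor with reset function into \([0,K]\setminus\mathbb{N}_{>0}\), where \(K\) is the constant of \(\Tt\). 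Since \(\Ll(\Aa_{\Tt})=L\), the minimality of \(K_L\) forces \(K\ge K_L\).

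For the complexity bounds I first estimate the size of \(S\cup S\Sigma_K\) when the algorithm halts. A counterexample \(w\) of length at most \(m\) has at most \(2^m\) normal forms (one reset choice per letter), each contributing up to \(m\) prefixes, so at most \(m\,2^m\) new rows are added per equivalence query. Over the at most \(H\) equivalence queries of the golden branch this gives \(|S|=O(Hm2^m)\); combined with \(|\Sigma_K|=O(H|\Sigma|)\), the total row count is bounded by \(O(H^3 m 2^m |\Sigma|)\). Each new row admits two reset choices, so the branching at each expansion is at most \(2^{H^3 m 2^m |\Sigma|}\), which multiplied by the golden-branch depth \(H\) gives the pool bound \(O(2^{H^3 m 2^m |\Sigma|}\,H)\). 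Each OT requires at most \(O(H^3 m 2^m |\Sigma|)\) membership queries to fill its first-coordinate entries, yielding the stated query bound.

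The main obstacle is the invariant proof that the golden branch is preserved through every action of \(\mathit{Process}\) as well as through counterexample refinement. Showing that the enlarged OT always admits a reset extension coinciding with \(\rl\) on the new entries requires a careful case analysis over the four actions and over the procedure that introduces all normal forms of a counterexample, using the coherence condition to reconcile reset values for shared row/column contents. Once this invariant is in place, termination and correctness are immediate from Proposition~\ref{prop:ST}, and the complexity bounds follow from the counting argument above.
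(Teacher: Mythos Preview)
Your termination and correctness argument is essentially the paper's: identify the branch whose reset assignment agrees with \(\rl\) at every step, observe that it replays the Smart Teacher run, and invoke Proposition~\ref{prop:ST} together with Proposition~\ref{prop:acceptor}. That part is fine.

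The complexity derivation, however, has a genuine gap. You bound the per-step branching factor by \(B'=2^{H^{3}m2^{m}|\Sigma|}\) (essentially \(2^{\text{total number of cells}}\)) and then write ``multiplied by the golden-branch depth \(H\)'' to obtain the pool bound \(O(B'\cdot H)\). But if each expansion has branching at most \(B'\) and the tree has depth \(H\), the number of nodes is \(O(H\cdot (B')^{H})\), not \(O(H\cdot B')\). With your \(B'\) this would yield \(2^{H^{4}m2^{m}|\Sigma|}\), not the claimed bound. The paper instead bounds the per-step branching more tightly by analysing, action by action, how many \emph{new} cells are created at a single step; this gives \(B=2^{O(H^{2}m2^{m}|\Sigma|)}\), and then \(B^{H}=2^{O(H^{3}m2^{m}|\Sigma|)}\) yields the stated bound. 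Your final numbers happen to match only because two errors cancel: an over-loose per-step bound and an under-aggressive aggregation.

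There is also a smaller arithmetic slip: from \(|S|=O(Hm2^{m})\) and \(|\Sigma_K|=O(H|\Sigma|)\) you get \(|S\cup S\Sigma_K|=O(H^{2}m2^{m}|\Sigma|)\), not \(H^{3}\); the extra factor \(H\) only appears once you multiply by \(|E|\le H\) to count \emph{cells}, not rows. Since reset values are assigned per cell \(r(s,e)\), the branching should be phrased in terms of cells throughout.
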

\paragraph{Obtaining the canonical acceptor.}
Let \(\Tt\) be the final OT returned by the Normal Teacher algorithm and \((R, \approx)\) be the profile of \(\Aa_\Tt\).
A priori, the reset function \(R\) may differ from \(\rl\).
Consequently, \(\Aa_\Tt\) need not be isomorphic to \(\Aa_{(\rl,\approx^{L})}\).
We show next how to learn \(\Aa_{(\rl,\approx^{L})}\) from \(\Aa_\Tt\) using only equivalence queries.
We do it in two steps.
From $\Aa_\Tt$ we build a language equivalent acceptor $\Aa_{\rl}$ whose reset function is $\rl$, using Algorithm~\ref{alg:learning-rl-acceptor}.
Secondly, from $\Aa_{\rl}$ we perform a state merging to obtain the automaton which induces $\approx^L$ by the method explained in the subsection on computing canonical acceptor in Section~\ref{sec:syntactic}.
They key point is that deciding whether $q$ and $q'$ should be merged can be implemented by doing the merge and then checking whether the resulting automaton has an equivalent language, through an equivalence query to the Teacher.

We now explain Algorithm~\ref{alg:learning-rl-acceptor}.
Given a strict acceptor \(\Aa\) for \(L\) Algorithm~\ref{alg:learning-rl-acceptor} modifies one by one every state \(q\) at which the reset function of \(\Aa\) differs from \(\rl\) i.e., \(q=\Aa(u)\) for some \(u\in \mathbb{T}\Sigma^{*}\) and \(R^{\Aa}(u)\neq \rl(u)\).
Notice that necessarily \(\region(q)\neq \{0\}\).
The acceptor returned at the end has no such states, thus its reset function is \(\rl\).
The condition \(R^{\Aa}(u)\neq \rl(u)\) is checked by the equivalence query \(\Ll(\Aa_q) = L\) where \(\Aa_q\) is a modification of the automaton which only affects transitions and the region of state \(q\).
This region now becomes \(\{0\}\).
As a result, the number of states at which the reset function differs from \(\rl\) strictly decreases.

Given a strict \(K\)-acceptor \(\Aa\), and a state \(q\) such that \(\region(q) \neq \{0\}\), we define the strict acceptor \(\Aa_q\) obtained by modifying \(\Aa\) as follows:
\begin{itemize}
  \item Replace every transition \((q', q, a, \phi, 1)\) with \(q' \neq q\) in \(\Aa\) (there are no transitions incoming to \(q\) where clock is reset) by the transition \((q', q, a, \phi, 0)\) (clock is necessarily reset).
	\item Remove all transitions whose source is \(q\).
  \item Add transitions \(\{(q, q_a, a, x=0, 0),(q, q_a, a, 0<x<1, 0),\ldots,(q, q_a, a, x=K, 0),(q, q_a, a, K<x, 0)\}\), where \(q_a\) is the state corresponding to the transition \((q, q_a, a, x = K, 0)\) in \(\Aa\).
\end{itemize}
\begin{algorithm}[htb]
	\caption{Learning a acceptor with reset function \(\rl\)}
	\label{alg:learning-rl-acceptor}
	\KwIn{A strict acceptor \(\Aa\) for \(L\)}
	\KwOut{An acceptor with reset function \(\rl\)}
	\ForEach{state \(q\) in \(\Aa\) with \(\region(q)\neq\{0\}\)}{
		\lIf{Teacher confirms \(\Ll(\Aa_q) = L\)}{change \(\Aa\) to \(\Aa_q\)}
	}
	\Return \(\Aa\)\;
\end{algorithm}

Correctness of Algorithm~\ref{alg:learning-rl-acceptor} is given in Appendix~\ref{app:eliminating_bad_states}.

For the complexity: Algorithm~\ref{alg:learning-rl-acceptor} requires a number of equivalence queries equal at most to the number of states of \(\Aa_\Tt\) which is bounded by \(Hm2^m\) (by proof Theorem~\ref{theorem:normal_teacher}).
Minimization of \(\Aa_{\rl}\) uses at most \((Hm2^m)^2\) equivalence queries.

\section{Conclusion}
We have presented a Myhill-Nerode characterization for languages recognized by \(1\)-DTAs.
Theoretically, this is appealing since we are now able to identify a canonical automaton for a \(1\)-DTA language \(L\) and are also able to learn it.
Moreover, the canonical automaton has the minimal number of states among all strict acceptors for \(L\).
The work~\cite{wagaActiveLearningDeterministic2023} on the full class of deterministic timed automata presents one aspect of a Myhill-Nerode characterization: a language \(L\) is recognized by a DTA iff a certain equivalence has finite index.
It does not go on to identify canonical candidate automata.
One interesting future direction is to examine their results specifically for \(1\)-DTAs, equipped with our new insights.
From a practical perspective, the algorithms for \(1\)-DTA learning~\cite{anLearning1DTA2020} were enhanced~\cite{DBLP:conf/atva/XuAZ22} using satisfiability solvers to encode the guesses for the resets.
Since we work with strict acceptors, resets are fixed at transitions with guards \(x = c\).
This means that, in principle, we need fewer guesses.
A comprehensive optimization of our learning algorithm based on techniques from DFA learning, implementing and comparing to existing tools on \(1\)-DTA learning, is an interesting direction for future work.

\bibliographystyle{splncs04}%
\clearpage
\appendix

\section{Notational Conventions}\label{app:conventions}

For clarity and brevity we sometimes abuse notations related to clock constraints, especially in pictures.
Indeed we may write \(0 \leq x\) for the guard of a transition, e.g. \((q, q', a, 0 \leq x, r)\), even though \(\denote{0\leq x}\) is no \(K\)-region.
Such a transition actually stands for a set of transitions which cover exactly \(\denote{0\leq x}\) using \(K\)-regions.
For instance if \(K=2\) then \((q, q', a, 0 \leq x, r)\) stands for \((q, q', a, 0=x, r)\), \((q, q', a, 0<x<1 , r)\), \((q, q', a, 1=x, r)\), \((q, q', a, 1<x<2 , r)\), \((q, q', a, 2 = x, r)\), and \((q, q', a, 2<x , r)\).
A similar reasoning applies for guards like \( x\notin (0,1) \) or \( x\neq 2\).

\section{Proofs from Section~\ref{sec:1dtas_HI_reset_function}}

\begin{propositionstar}[\ref{prop:normal_form}]
	If \(R\) is the reset function of a \(1\)-DTA \(\Aa\), then for every word \(u\),
	the runs of \(u\) and \(N^{R}(u)\) in \(\Aa\) follow the same sequence of transitions. 
	In particular, \(u \in L(\Aa)\) iff \(N^{R}(u) \in L(\Aa)\).
\end{propositionstar}
\begin{proof}
	By induction on the length of words we show that for every word \(u\) the runs of \(u\) and \(N^{R}(u)\) follow the same transitions, end in the same state and that moreover, \(R(u) \equiv R(N^{R}(u))\).
	The base case \(u=\epsilon\) is true since \(N^{R}(\epsilon)=\epsilon\).
	For the inductive step, assume the statement holds for some word \(u\).
	Let \(u(t\cdot a)\in \mathbb{T}\Sigma^{*}\) and \(N^{R}(u(t \cdot a))=N^{R}(u)(t' \cdot a)\).
	Since \(R(u) \equiv R(N^{R}(u))\) we have \(\hi(R(u))=R(N^{R}(u))\).
	Then, by definition of \(t'\), \(R(u) +t\equiv R(N^{R}(u)) +t'\), which implies that after elapsing time \(t\) from \(R(u)\) and after elapsing time \(t'\) from \(R(N^{R}(u))\) the same guards are satisfied.
	Hence, \(u(t \cdot a)\) and \(R(N^{R}(u(t \cdot a)))\) reach the same state of the \(1\)-DTA and \(R(u(t \cdot a))\equiv R(N^{R}(u(t \cdot a)))\).
	\qed
\end{proof}
\section{Proofs from Section~\ref{par:acceptors-and-strict-acceptors}}
\begin{lemmastar}[\ref{lemma:acceptorbis}]
	Given a \(1\)-DTA of constant \(K\) we can compute an equivalent \(K\)-acceptor.
\end{lemmastar}
\begin{proof}
	For every state \(q\), we make \(2K + 2\) copies: \(q_0, q_{(0,1)}, q_1, \dots, q_K, q_{(K, \infty)}\) corresponding to the \(K\)-regions.
	For every original transition \((p, q, a, \phi, r)\) in the \(1\)-DTA, we have transition in the \(K\)-acceptor as follows:
	\begin{enumerate}
		\item if \(r = 0\), then add \((p_i, q_0, a, \phi, r)\) for all copies \(p_i\),
		\item else, from all copies \(p_i\) of \(p\), add an outgoing transition \((p_i, q_{\denote{\phi}}, a, \phi, r)\).
	\end{enumerate}
	\qed
\end{proof}
\begin{propositionstar}[\ref{prop:automaton-to-monotone}]
	For an acceptor \(\Aa\), the profile \((R^\Aa, \approx^\Aa)\) is monotonic, \(\mathcal{L}(\Aa)\)-preserving and has finite index.
\end{propositionstar}
\begin{proof}
	Finite-index of \(\approx^{\Aa}\) comes from finiteness of the set of states.
	For \(K\) the maximum constant of \(\Aa\) \(C(K)\) holds, thus \(K_{(R^\Aa, \approx^\Aa)}\leq K\) is finite.
	Therefore, \((R^\Aa, \approx^\Aa)\) has finite index.
	\(\mathcal{L}(\Aa)\)-preservation comes from Proposition~\ref{prop:normal_form} and the fact that two equivalent words reach the same state, thus either both are in the language or both are not in the language.

	We now prove monotonicity.
	Item {\upshape(\itshape a\upshape)} holds because each state of \(\Aa\) is associated to a unique region: If \(u \approx^\Aa v\) then \(\Aa\) reaches the same state, say \(q\), on reading \(u\) and \(v\).
	Therefore, \(R^\Aa(u) \in \region(q)\) and \( R^\Aa(v)\in \region(q)\).
	Thus, \(R^\Aa(u) \equiv^{K_{(R^\Aa, \approx^\Aa)}} R^\Aa(v)\).
	For item {\upshape(\itshape b\upshape)}, assume \(u \approx^\Aa v\) and let \(n\in \mathbb{N}\) and \( a\in \Sigma\).
	Since, \(\Aa(u)=\Aa(v)\) and \(R^\Aa(u) = R^\Aa(v)\) we have \(R^\Aa(u)+\frac{n}{2} = R^\Aa(v)+\frac{n}{2}\).
	Therefore, the last transitions on \(u (\frac{n}{2} \cdot a)\) and \(v (\frac{n}{2} \cdot a)\) are the same and so the two words end up in the same state.
	Thus, \(\Aa(u (\frac{n}{2} \cdot a))=\Aa(v (\frac{n}{2} \cdot a))\) and so \(u (\frac{n}{2} \cdot a) \approx^\Aa v (\frac{n}{2} \cdot a)\).
	Item {\upshape(\itshape c\upshape)} is also a consequence of Proposition~\ref{prop:normal_form} (see its proof).
	\qed
\end{proof}
\begin{propositionstar}[\ref{prop:monotone-to-automaton}]
	Let \((R, \approx)\) be an \(L\)-preserving, monotonic profile with finite index.
	Then \(\Aa_{(R, \approx)}\) is an acceptor of profile \((R, \approx)\) such that \(\Ll(\Aa_{(R, \approx)}) = L\).
\end{propositionstar}
\begin{proof}
	An easy induction on the length of half-integral words shows that for every \(u\in \HI\), \(([\epsilon]_{\approx},0) \rightsquigarrow^{u} ([u]_{\approx},R(u)) \).
	In particular this imply that \(\forall u\in \HI,R^{\Aa_{(R, \approx)}}(u)= R(u)\).
	Moreover, by \(L\)-preservation, we deduce that \(\HI(L)=\HI(L(\Aa_{(R, \approx)}))\).
	We will show that \(R = R^{\Aa_{(R, \approx)}}\).
	Then the following equivalences hold: using \(R = R^{\Aa_{(R, \approx)}}\) and Proposition~\ref{prop:normal_form}, \[ w\in L(\Aa_{(R, \approx)}) \iff N^{R}(w)\in L(\Aa_{(R, \approx)}) \enspace.
	\]
	Since \(\HI(L)=\HI(L(\Aa_{(R, \approx)}))\),
	\[N^{R}(w)\in L(\Aa_{(R, \approx)})\iff N^{R}(w)\in L  \enspace,\]
	and by \(L\)-preservation
	\[N^{R}(w)\in L \iff w \in L  \enspace.\]
	Therefore, for every \(w\in \mathbb{T}\Sigma^{*}\), \(w\in L(\Aa_{(R, \approx)})\) if{}f \(w\in L \).
	Thus, \(L(\Aa_{(R, \approx)})=L\).

	To conclude the proof we need to show that \(R =R^{\Aa_{(R, \approx)}}\).
	Let \(R'=R^{\Aa_{(R, \approx)}}\).
	We have already shown that for every \(w\in \HI\), \(R(w) = R'(w)\) holds.
	We now need to show it also holds for words that are not half-integral.
	We proceed by induction on the length of the words.
	Assume it holds for words of length at most \(n\).
	Let \(u(t\cdot a)\in \mathbb{T}\Sigma^{*}\) be a word of length \(n+1\).
	We will show that \( R(u(t\cdot a)) \equiv R'(u(t\cdot a))\).
	Since \(R(u) = R'(u)\) (induction hypothesis), \(R(u(t\cdot a))=(R(u)+t)s\) and \( R'(u(t\cdot a))=(R'(u) +t)s'\) for some \(s,s'\in \{0,1\}\) this will imply \( R(u(t\cdot a)) = R'(u(t\cdot a))\).
	By monotonicity {\upshape(\itshape c\upshape)} of the profile \((R,\approx)\) and the profile of \(\Aa_{(R, \approx)}\) we have \( R(u(t\cdot a))\equiv R(N^{ R}(u(t\cdot a)))\) and \( R'(u(t\cdot a))\equiv R'(N^{ R'}(u(t\cdot a)))\).
	Hence, by transitivity, it suffices to show that \( R(N^{ R}(u(t\cdot a)))\equiv R'(N^{ R'}(u(t\cdot a)))\) (we actually have equality since both values are half-integral).
	By induction hypothesis for every prefix \(v\) of \(u\) we have \(R(v) = R'(v)\).
	This imply \(N^{ R}(u)=N^{R'}(u)\) and \(R(u)+t=R'(u)+t\).
	Since \(N^{ R}(u(t\cdot a))\) and \(N^{ R'}(u(t\cdot a))\) are defined based on these two elements we deduce \(N^{ R}(u(t\cdot a)) =N^{ R'}(u(t\cdot a))\).
	Thus, \(R(N^{ R}(u(t\cdot a))) = R'(N^{ R'}(u(t\cdot a)))\).
	\qed
\end{proof}

\section{Proof of Theorem~\ref{thm:computing-rl-acceptor}}
\label{app:constructing-rl-acceptor}
\begin{theoremstar}[\ref{thm:computing-rl-acceptor}]
	\begin{itemize}
		\item Given a \(1\)-DTA for \(L\), we can compute an equivalent \(1\)-DTA with the same constant such that it induces the syntactic  reset function \(\rl\).
		\item If \(L\) is a \(1\)-DTA recognizable language then for every timed word \(u\), \(\rl(u)\in [0,K_L]\setminus\mathbb{N}_{>0}\)  where \(K_L\) is the smallest constant \(K\) such that there exists a \(1\)-DTA with maximum constant \(K\) that accepts \(L\).
	\end{itemize}
\end{theoremstar}
Our proof of Theorem~\ref{thm:computing-rl-acceptor} proceeds as follows: starting from a \(1\)-DTA of constant \(K\) accepting \(L\) we first compute using Lemma~\ref{lemma:acceptorbis} an equivalent \(K\)-acceptor.
Let \(\Aa\) be this acceptor.
Our objective is to compute an acceptor for \(L\) whose reset function is \(\rl\) starting from \(\Aa\).
Our procedure first identifies the ``bad'' states of \(\Aa\) defined next.
\begin{definition}
	\label{def:bad_state}
	For a \(K\)-acceptor \(\Aa\) accepting \(L\) a state \(q\) is bad if \(q=\Aa(u)\) for some \(u\in \mathbb{T}\Sigma^{*}\) such that \(R^{\Aa}(u)\neq 0\) and \(u^{-1}L\) is a \(1\)-DTA recognizable language.
\end{definition}
Subsequently, each bad state is eliminated by a modification of the automaton that preserves the language.
In the end the resulting automaton has no bad states.
Therefore, by the next lemma, its reset function is \(\rl\).

\begin{lemma}\label{lemma:no-bad-rl}
	If \(\Aa\) is an acceptor for \(L\) with no bad states then \(R^{\Aa}=\rl\).
\end{lemma}
\begin{proof}
	By induction on the length of the words we show that \(\forall u\in \mathbb{T}\Sigma^{*}, R^{\Aa}(u)=\rl(u)\).
	It holds for \(\epsilon\).
	Let \(u(t\cdot a)\in \mathbb{T}\Sigma^{*}\).
	If \(R^{\Aa}(u(t\cdot a))=0\) then \(u(t\cdot a)^{-1}L\) is a \(1\)-DTA recognizable language, thus \(\rl(u(t\cdot a))=0\).
	If \(R^{\Aa}(u(t\cdot a))\neq 0\) then \(u(t\cdot a)^{-1}L\) is not a \(1\)-DTA recognizable language or otherwise \(\Aa(u(t\cdot a))\) would be a bad state.
	Using the induction hypothesis we find \(R^{\Aa}(u(t\cdot a))=R^{\Aa}(u)+t=\rl(u)+t\).
	Using the definition of \(\rl\) we find \(\rl(u(t\cdot a))=\rl(u)+t\).
	Thus, \(R^{\Aa}(u(t\cdot a))=\rl(u(t\cdot a))\).
\end{proof}

Using the two lemmas below we compute an equivalent strict \(K\)-acceptor whose reset function has range contained in \([0,K]\).

\begin{lemma}%
	\label{lemma:stepone}
	Given a \(K\)-acceptor we can compute an equivalent \(K\)-acceptor such that its reset function has range included in \([0,K]\).
\end{lemma}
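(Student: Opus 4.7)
The plan is to eliminate the unbounded region $\{x > K\}$ by collapsing every state whose associated region is $\{x > K\}$ into the region $\{0\}$. Let $Q_\infty = \{q \in Q : \region(q) = \{x > K\}\}$. Once no state carries an unbounded region, all remaining regions sit inside $[0, K]$, and therefore the reset function of the resulting acceptor returns values in $[0, K]$. The key observation enabling the collapse is that in a $K$-acceptor, any outgoing transition from $q \in Q_\infty$ must have guard $\{x > K\}$ (no smaller $K$-region is reachable from $\{x > K\}$ by elapsing time), so for each letter the unique outgoing transition of $q$ fires on every non-negative delay, making the next state a deterministic function of just the letter read.

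The construction preserves the state set, the initial state and the final states. It redefines $\region'(q) = \{0\}$ for $q \in Q_\infty$, leaving all other regions untouched. Transitions are rewritten in two ways:
\begin{itemize}
  \item[(i)] every incoming transition $(p, q, a, \{x > K\}, r)$ with $q \in Q_\infty$ is changed to $(p, q, a, \{x > K\}, 0)$, which keeps the original firing condition but makes the clock at $q$ equal $0$, as required by $\region'(q)$;
  \item[(ii)] every outgoing transition $(q, q', a, \{x > K\}, r)$ with $q \in Q_\infty$ is replaced by the family $\{(q, q', a, \phi', 0) : \phi' \text{ is a $K$-region}\}$, so that from $q$, now entered with clock $0$, each delay selects exactly one transition, all leading to the same $q'$ as the original.
\end{itemize}

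Verifying the $K$-acceptor invariant uses the observation that an original non-resetting transition from a $Q_\infty$-state with guard $\{x > K\}$ is forced to land in $Q_\infty$, so the targets outside $Q_\infty$ that receive new resetting transitions already had region $\{0\}$ in $\Aa$; together with the fact that every state in $Q_\infty$ now has only resetting incoming transitions, the uniform region constraint is preserved. Language equivalence is then shown by induction on the length of a timed word $w$: the two acceptors visit the same sequence of states on $w$, which is immediate outside $Q_\infty$ and, inside $Q_\infty$, is a direct consequence of the delay-independent outgoing behaviour identified above. I expect the main subtlety to be exactly this invariant preservation step — ensuring that the rewirings at $Q_\infty$-states do not break the acceptor's uniform-region constraint — rather than the language argument itself.
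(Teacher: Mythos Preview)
Your construction is correct and essentially the same as the paper's: the paper calls your $Q_\infty$-states ``large'', renames each to a fresh copy $\overline{q}$ with region $\{0\}$, and splits the transition rewiring into three explicit cases (small$\to$large, large$\to$large non-resetting, large$\to$small resetting) that together amount exactly to your (i) and (ii). For precision you should restrict (i) to sources $p \notin Q_\infty$, since transitions between two $Q_\infty$-states are already handled by (ii); the overlap is harmless, but the paper avoids it by making the case split on source/target explicit.
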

\begin{proof}
	Let \(\Aa=(Q, q_{I},T,F)\) be a \(K\)-acceptor.
	We will compute a \(K\)-acceptor \(\Bb\) such that \(L(\Aa)= L(\Bb)\) and \(\Bb\) has no states \(q\) such that \(\region(q)=[K+\frac{1}{2}]_{\equiv^{K}}\).
	Then the reset function of \(\Bb\) has range included in \([0,K]\).
	In the following, we say that a state \(q\) is large if \(\region(q)=[K+\frac{1}{2}]_{\equiv^{K}}\).
	A state is small if it is not large in the previous sense.
	We modify \(\Aa\) as follows.
	\begin{enumerate}
		\item Replace every large state \(q\) by a new state \(\overline{q}\).
		\item Replace every transition \((p,q,a,K<x,1)\in T\) where \(p\) is small and \(q\) is large, by a transition \((p,\overline{q},a,K<x,0)\).
		\item Replace every transition \((s, s', a, K<x, 1)\in T\) where \(s\) and \(s'\) are large by a transition \((\overline{s}, \overline{s}', a, 0 \leq x, 0)\).
		\item Replace every resetting transition \((q, p, a, K<x, 0)\in T\) where \(q\) is large by a transition \((\overline{q}, p, a, 0\leq x, 0)\).
	\end{enumerate}
  Observe that the conventions of Section~\ref{app:conventions} apply for the last two items.
	We define the initial state of \(\Bb\) to be \(q_{I}\) and its accepting states to be all the states \(p\) and \(\overline{p}\) such that \(p\in F\).

	This modification removes all transitions of the form \((p,q,a,K<x,1)\) and only adds resetting transitions with target states of region \(\{0\}\).
	Thus, \(\Bb\) has no states \(q\) such that \(\region(q)=[K+\frac{1}{2}]_{\equiv^{K}}\).
	We now explain why \(L(\Aa)= L(\Bb)\).
	In a run of \(\Aa\), after reaching a large state and as long as we have not reset, we can only visit large states.
	This happens because when above the constant only transitions with guard ``\(K<x\)'' are enabled.
	Moreover, such transitions allow all time delays.
	Thus, such a ``large'' part of a run of \(\Aa\) is simulated by \(\Bb\) using the transitions of items {\upshape 3}.
	A transition from item {\upshape 2} corresponds to the first occurrence of a large state after a ``small'' portion of the run and, a transition from item {\upshape 4} corresponds to the last occurrence of a large state after a ``large'' portion of the run.

	To conclude, observe that the number of states of \(\Aa\) and \(\Bb\) coincide and so does their constants \(K\).
	\qed
\end{proof}
\begin{lemmastar}[\ref{lemma:steptwo}]%
	Given a \(K\)-acceptor we can compute a language equivalent strict \(K\)-acceptor.
\end{lemmastar}
\begin{proof}
	Let \(\Aa=(Q, q_{I},T,F)\) be a \(K\)-acceptor.
  We compute a \(K\)-acceptor \(\Bb\) such that \(L(\Aa)= L(\Bb)\) and \(\Bb\) has no states \(q\) with \(\region(q)=[m]_{\equiv^{K}}\) where \(m=\max(\{n\in\mathbb{N}\mid \exists p\in Q, \region(p)=[n]_{\equiv^K}\})\).
  We then repeat this construction: from \(\Bb\) we compute a \(K\)-acceptor with no states \(q\) with \(\region(q)=[m-1]_{\equiv^{K}}\) and so on until we obtain an acceptor with no states \(q\) with \(\region(q)\in\mathbb{N}_{>0}\).

	In the following we say that a state \(q\) is critical if \(\region(q)=[m]_{\equiv^{K}}\).
	We say it is good if it is not critical.
	Following Lemma~\ref{lemma:stepone} we can assume \(\Aa\) has no states \(q\) with \(\region(q)=[K+\frac{1}{2}]_{\equiv^{K}}\) (not a necessary assumption but limits the copies we add in item {\upshape 2} below).
	We modify \(\Aa\) as follows.

	\begin{enumerate}
		\item Replace every critical state \(q\) by a new state \(\overline{q}\) and remove all the transitions associated to \(q\).
		\item For every state \(p\) s.t. \(\region(p)=(i, i+1)\) for \( m\leq i\) add a copy \(\overline{p}\).
		\item For every transition \((p,q,a,x=m,1)\in T\) where \(p\) is good and \(q\) is critical add a transition \((p,\overline{q},a,x=m,0)\).
		\item For every transition \((s, s', a,x=m,1)\in T\) where \(s\) and \(s'\) are critical add a transition \((\overline{s}, \overline{s}', a, x=0, 0)\).
		\item For every transition \((s, s', a, \phi, 1)\in T\) s.t. \(\region(s)\in \{[m]_{\equiv^{K}}\}\cup \{[(i, i+1)]_{\equiv^{K}}\mid  m\leq i\}\) and \( \denote{\phi}=(c, c+1)\) for some \( m\leq c\) add a transition \((\overline{s}, \overline{s}', a, c -m< x  < c-m +1, 1)\).
		\item  For every transition \((s, s', a, \phi, 0)\in T\) s.t. \(\region(s)\in \{[m]_{\equiv^{K}}\}\cup \{[(i, i+1)]_{\equiv^{K}}\mid  m\leq i\}\) add a transition \((\overline{s}, s', a, \mathit{modify}(\phi), 0)\) where,
		      \[
			      \mathit{modify}(\phi)=
			      \begin{cases}
				      c-m < x < c-m +1 & \quad\text{if \(\phi\) is \(c < x  < c +1\)}      \\
				      K-m < x          & \quad\text{if \(\phi\) is \(K < x \)}             \\
				      x = c-m          & \quad\text{else (\(\phi\) is \(x=c\))} \enspace .
			      \end{cases}
		      \]
	\end{enumerate}
	Finally, we define the initial state of \(\Bb\) to be \(q_{I}\) and its accepting states to be all the states \(p\) and \(\overline{p}\) such that \(p\in F\).
  It is worth pointing that, following the conventions of Section~\ref{app:conventions}, modifying the guard \(K < x\) of a transition to \(K-m < x\) for \(m>0\) will result in adding several transitions each one associated with a \(K\)-region.

  We introduced no critical states \(q\) with  \(\region(q)\in\mathbb{N}_{>0}\) since the only non resetting transitions that we added (item {\upshape 5}) came with non equality guards.
	Since we removed all critical states (item {\upshape 1}) \(\Bb\), has no critical states.

	In a run of \(\Aa\) the first occurrence of a critical state is simulated by a transition in item {\upshape 1}.
	A ``critical portion'' of a run (meaning after reaching a critical state and as long as we haven't yet reset) is simulated thank to transitions in items {\upshape 4} and {\upshape 5}.
	Notice that in item {\upshape 5} guards are necessarily of the form \( \denote{\phi}=(c, c+1)\) for \(m \leq c\) because \(m\) is maximal and because we have assumed \(\Aa\) has no states \(q\) with \(\region(q)=[K+\frac{1}{2}]_{\equiv^{K}}\).
	Finally transitions in item {\upshape 6} are used to exit from a critical portion of the run.
	\qed
\end{proof}
\begin{remark}
	\label{remark:bad_states}
	The proofs of Lemma~\ref{lemma:stepone} and Lemma~\ref{lemma:steptwo} show that states with regions \( [1]_{\equiv^{K}}, [2]_{\equiv^{K}},\ldots, [K]_{\equiv^{K}}, [K+\frac{1}{2}]_{\equiv^{K}}\) are bad.
\end{remark}
In the acceptor we have computed so far every bad state has region of the form \((i,i+1)\) for some \(i\in \{0, \dots, K-1\}\).
Our next step is to remove these remaining bad states.
In Section~\ref{app:bad_states_and_their_properties} we present certain properties of bad states that we use to obtain in Section~\ref{app:characterization} a characterization of the remaining bad states.
In Section~\ref{app:eliminating_bad_states} we give an algorithm to eliminate the remaining bad states which is based on the characterization of Section~\ref{app:characterization}.
In Section~\ref{app:proof} we prove the second item of the theorem.

\subsection{Bad States and their Properties}
\label{app:bad_states_and_their_properties}
\begin{lemma}\label{lem:bad-state-same-language-for-all-x}
	Let \(q\) be a bad state in an acceptor.
	Then, \(\Ll(q, y) = \Ll(q, y')\) for all \(y, y' \in \region(q)\).
\end{lemma}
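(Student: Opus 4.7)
The plan is to split based on the shape of \(\region(q)\). When \(\region(q) = \{n\}\) is a singleton (including \(\{0\}\)), the statement follows immediately since \(y = y' = n\). The substantive cases are \(\region(q) = (i, i+1)\) for some \(i \in \{0, \dots, K-1\}\) or \(\region(q) = (K, \infty)\). By the definition of a bad state, there exists \(u \in \mathbb{T}\Sigma^{*}\) with \(\Aa(u) = q\), \(y_0 := R^{\Aa}(u) \in \region(q) \setminus \{0\}\), and \(u^{-1}L\) being \(1\)-DTA recognizable. Since \(\Ll(q, y_0) = u^{-1}L\) by the acceptor semantics, I would invoke Lemma~\ref{lemma:acceptorbis} to fix an acceptor \(\Bb\) with \(\Ll(\Bb) = \Ll(q, y_0)\) and whose initial configuration sits at clock \(0\).

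The goal then reduces to proving \(\Ll(q, y) = \Ll(q, y_0)\) for every \(y \in \region(q)\). I would proceed by induction on the length of a word \(w\), establishing that \(w \in \Ll(q, y) \iff w \in \Ll(q, y_0)\). The base case \(w = \epsilon\) collapses to \(q \in F\), which is trivially independent of the clock value. For the inductive step on \(w = (t, a) w'\), I would compare the unique transitions fired by \(\Aa\) from \((q, y)\) and from \((q, y_0)\) on \((t, a)\). When \(y + t\) and \(y_0 + t\) lie in the same \(K\)-region the two runs fire the same transition and end in the same successor state with new clock values lying in a common region (either both \(0\) after a reset, or in the same open region), so the inductive hypothesis applied at the successor state---after arguing the successor inherits a form of the ``bad'' property from the relevant residual of \(u^{-1}L\) recognized by a sub-automaton of \(\Bb\)---yields the desired equivalence on \(w'\).

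The main obstacle is the region-crossing case, in which \(y + t\) and \(y_0 + t\) belong to distinct \(K\)-regions and the two \(\Aa\)-runs branch into different transitions, potentially ending in different states with different clock values. Here I would exploit \(\Bb\): since \(\Ll(\Bb) = \Ll(q, y_0)\) and the acceptance of \(w\) by \(\Bb\) is determined solely by \(w\) starting from clock \(0\)---hence completely oblivious to \(y_0\)---the behaviour of \(\Aa\) from \((q, y_0)\) is constrained to coincide with a \(y_0\)-independent decision. By carefully matching the guards and reset bits of the first transition fired from \((q, y)\) and from \((q, y_0)\) against the transition \(\Bb\) fires on \((t, a)\) from clock \(0\), I expect to force the acceptance status of \(w\) from \((q, y)\) and from \((q, y_0)\) to agree. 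Rigorously establishing this structural matching in the region-crossing case---chaining it through shorter suffixes via the inductive hypothesis---is the core technical difficulty of the proof.
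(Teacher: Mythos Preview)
Your induction on word length has a genuine gap in both branches.

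\textbf{Same-region case.} After a non-resetting transition you land at $(q',y+t)$ and $(q',y_0+t)$, and you want to invoke the inductive hypothesis at $q'$. For that you need $q'$ to be bad, i.e.\ some $\Ll(q',z)$ with $z\in\region(q')$ must be $1$-DTA recognizable. You propose to get this from ``a sub-automaton of $\Bb$'', but the residual $(t\cdot a)^{-1}\Ll(\Bb)$ equals $\Ll_{\Bb}(p',\nu')$ for the configuration $\Bb$ reaches after one step, and $\nu'$ is in general \emph{nonzero}. That language is not a priori $1$-DTA recognizable---this is exactly the phenomenon the whole theory is about. In the paper, ``successor of a bad state is bad'' is Lemma~\ref{lem:bad-to-next-transition}, and its proof \emph{uses} the present lemma; invoking it here would be circular.

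\textbf{Region-crossing case.} When $y+t$ and $y_0+t$ lie in different $K$-regions, $\Aa$ fires two different transitions, possibly to different states $q_1\neq q_2$ with clock values in different regions. Your inductive hypothesis says nothing relating $\Ll(q_1,\cdot)$ to $\Ll(q_2,\cdot)$. Bridging through $\Bb$ does not help either: $\Bb$ reads $(t\cdot a)$ from clock $0$ and gives you \emph{one} residual, which you can hope to match with one of $\Ll(q_1,\cdot),\Ll(q_2,\cdot)$, not both.

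The paper avoids induction on words entirely. It adds a fresh source state $\bar q$ with a non-resetting transition into $q$ guarded by $\region(q)$, and a fresh source state $\bar p$ with a \emph{resetting} transition into the initial state $p$ of an automaton for $\Ll(q,x)$, same guard. The core is a time-shifting claim: given $x\equiv x'$ and any delay sequence $t_1,\dots,t_m$, one can choose $t'_1,\dots,t'_m$ so that every prefix sum $x+\sum_{\le j} t$ matches $x'+\sum_{\le j} t'$ in region, \emph{and} every suffix sum $\sum_{i}^{j} t$ matches $\sum_{i}^{j} t'$. The second condition is what guarantees that, regardless of where an automaton resets, the two timed words follow the same transitions in \emph{every} $1$-DTA. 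This yields $\Ll(\bar q,0)=\Ll(\bar p,0)$ directly, and then $\Ll(q,y)=(y\cdot a)^{-1}\Ll(\bar q,0)=(y\cdot a)^{-1}\Ll(\bar p,0)=\Ll(p,0)$ for every $y\in\region(q)$, which is independent of $y$. The manufacturing claim is the missing idea your sketch lacks; without it, the region-crossing case does not close.
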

\begin{proof}
	Let \(\Bb\) be an acceptor and \(q\) a bad state in \(\Bb\).
	We have \(\region(q) \neq \{0\}\).
	Moreover, there is \(x \in \region(q)\) such that \(\Ll(q, x)\) is \(1\)-DTA recognizable.
	Let \(\Cc_{(q,x)}\) be a \(1\)-DTA that recognizes \(\Ll(q,x)\), and let \(p\) be the initial state of \(\Cc_{(q,x)}\).
	We have \(\Ll(q, x) = \Ll(p,0)\).
	We now modify \(\Bb\) and \(\Cc_{(q,x)}\) by one transition in each:
	\begin{itemize}
		\item add \((\bar{q}, q, a, \phi_{K}(\region(q)), 1)\) to \(\Bb\),
		\item add \((\bar{p}, p, a, \phi_{K}(\region(q)), 0)\) to \(\Cc_{(q,x)}\).
	\end{itemize}

	We will prove that \(\Ll(\bar{q}, 0) = \Ll(\bar{p}, 0)\) (this proof appears below).
	Before that, we show how the lemma becomes true if \(\Ll(\bar{q}, 0) = \Ll(\bar{p}, 0)\) is true.
	For any \(y \in \region(q)\), we have:
	\begin{align*}
		(y \cdot a)^{-1} \Ll(\bar{q}, 0) & = \Ll(q, y) & \text{and} &  & (y \cdot a)^{-1} \Ll(\bar{p}, 0) & = \Ll(p, 0)\enspace .
	\end{align*}
	Assuming \(\Ll(\bar{q}, 0) = \Ll(\bar{p}, 0)\), we get \(\Ll(q, y) = \Ll(p, 0)\).
	Coupled with \(\Ll(q, x) = \Ll(p, 0)\), we get \(\Ll(q, y) = \Ll(q,x)\) for all \(y \in \region(q)\).
	This proves the lemma.

	To prove \(\Ll(\bar{q}, 0) = \Ll(\bar{p}, 0)\) we will need the following claim.
	\begin{claim}
		Let \(x, x' \in \mathbb{R}_{\ge 0}\) such that \(x \equiv x'\).
		Then for every sequence \(t_1, t_2 \dots, t_m\), with \(m \ge 1\) and each \(t_i \in \mathbb{R}_{\ge 0}\), there exists a sequence \(t'_1, t'_2, \dots, t'_m\) such that: 
		\begin{itemize}
			\item \(x + t_1 + \cdots + t_j \equiv x' + t'_1 + \cdots + t'_j\) for all \(1 \le j \le m\), and
			\item \(t_i + t_{i+1} + \cdots + t_j \equiv t'_i + t'_{i+1} + \cdots + t'_j\), for all \(1 \le i \le j \le m\).
		\end{itemize}
	\end{claim}
	\begin{proof}
		Choose \(\lfloor t'_i \rfloor = \lfloor t_i \rfloor\) for every \(i\).
		The fractional values need to be chosen appropriately.
		Start with \(t'_1\).
		Consider the order between \(\{t_1\}\) and \(1 -\{x\}\): if \(\{t_1\} = 1 - \{x\}\), then set \(\{t'_1\} = 1 - \{x'\}\); if \(\{t_1\} < 1 - \{x\}\), choose a value for \(\{t'_1\}\) between \(0\) and \(1 - \{x'\}\); else choose a value in the interval \((1 - \{x'\}, 1)\).
		For \(t_2\), we consider two numbers \(1 - \{x + t_1\}\) and \(1 - \{t_1\}\) and compare the relation of \(\{t_2\}\) with these numbers.
		We choose \(\{t'_2\}\) in such a way so that the same relation is satisfied by it w.r.t \(1 - \{x' + t'_1\}\) and \(1- \{t'_1\}\).
		Indeed, assume we have picked \(t'_1, \dots, t'_{j-1}\), choose \(\{t'_j\}\) such that:
		\begin{itemize}
			\item \(1 - \{x + t_1 + \cdots t_{j-1}\} \le \{t_j\}\) if{}f  \(1 - \{x' + t'_1 + \cdots t'_j \} \le \{t'_j\}\),
			\item \(\{t_j\} \le 1 - \{x + t_1 + \cdots t_{j-1}\}\) if{}f \(\{t'_j\} \le 1 - \{x' + t'_1 + \cdots t'_j \}\),
			\item \(1 - \{t_i + t_{i+1} + \cdots t_{j-1}\} \le \{t_j\}\) if{}f  \(1 - \{t'_i + t'_{i+1} + \cdots t'_{j-1}\} \le \{t'_j\}\), for all \(1 \le i \le j-1\)
			\item \(\{t_j\} \le \{t_i + t_{i+1} + \cdots t_{j-1}\}\) if{}f \(\{t'_j\} \le 1 - \{t'_i + t'_{i+1} + \cdots t'_{j-1}\}\), for all \(1 \le i \le j-1\).
		\end{itemize}
		Due to the density of real numbers, such a choice for \(\{t'_j\}\) is always possible.
		This way, we pick the fractional values \(\{t'_1\}, \{t'_2\}\) and so on, one by one. \(\blacksquare\)
	\end{proof}
	We now prove \(\Ll(\bar{q}, 0) = \Ll(\bar{p}, 0)\).
	Pick \((z \cdot a)\,w \in \Ll(\bar{q},0)\) for some \(z \in \region(q)\).
	Using the claim, we can manufacture a word \(w'\) such that \((x \cdot a)\,w'\) has a run with the same sequence of transitions as \((z \cdot a)\,w\), and hence \((x \cdot a)\,w' \in \Ll(\bar{q},0)\).
	This implies \(w' \in \Ll(q, x)\).
	Therefore, \(w' \in \Ll(p, 0)\).
	Now, consider the word \(w\).
	Because of the properties of \(w\) and \(w'\), every \(1\)-DTA induces the same sequence of transitions in their respective runs.
	Since \(w' \in \Ll(p, 0)\), we get \(w \in \Ll(p, 0)\) too.
	By construction of the transition from \(\bar{p}\) to \(p\), we deduce that \((z \cdot a)\,w \in \Ll(\bar{p}, 0)\).
	Hence \(\Ll(\bar{q}, 0) \subseteq \Ll(\bar{p},0)\).

	For the converse, pick \((z\cdot a)\,w \in \Ll(\bar{p}, 0)\).
	Consider the same \((x \cdot a)\,w'\) as discussed in the above paragraph (every \(1\)-DTA follows the same sequence of transitions while reading both \((z \cdot a)\,w\) and \((x \cdot a)\,w'\)).
	We have \((x\cdot a)\,w' \in \Ll(\bar{p},0)\), from which we conclude \(w' \in \Ll(p, 0) = \Ll(q, x)\).
	Therefore, \((x \cdot a)\,w' \in \Ll(\bar{q},0)\).
	Once again, since \((z \cdot a)\,w\) and \((x \cdot a)\,w'\) are distinguishable by no \(1\)-DTA, we can conclude that \((z \cdot a)\,w \in \Ll(\bar{q}, 0)\).
	This proves \(\Ll(\bar{p}, 0) \subseteq \Ll(\bar{q}, 0)\).
	\qed
\end{proof}
Next we show that badness propagates along transitions that do not reset the clock.
\begin{lemma}\label{lem:bad-to-next-transition}
	Let \(q\) be a bad state in a strict acceptor.
	For every outgoing transition \(\theta = (q, q', a, \phi, r)\) from \(q\) such that
	\begin{itemize}
		\item \(r = 1\) (clock is not reset) and
		\item \(\phi\) is either an interval of the form \((j, j+1)\) for \(j \ge i\) (constant used in \(\phi\) is at least \(i\)) or, it is ``\(K<x\)'' for \(K\) the maximum constant of the acceptor.
	\end{itemize}
	the target state \(q'\) is bad too.
\end{lemma}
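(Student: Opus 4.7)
The plan is to take the assumed witness $u$ for $q$'s badness and produce a one-letter extension $u' = u(t \cdot a)$ whose run in $\Aa$ fires the transition $\theta$, then verify that $u'$ witnesses $q'$'s badness.

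Concretely, I would unpack $u$: $\Aa(u)=q$, $R^\Aa(u)>0$, and $u^{-1}L$ is recognized by some $1$-DTA; since $\region(q)=(i,i+1)$ with $i\in\{0,\dots,K-1\}$, $R^\Aa(u)\in(i,i+1)$. A delay $t\ge 0$ with $R^\Aa(u)+t\in\denote{\phi}$ exists by a routine case split: take $t=0$ when $\phi=(j,j+1)$ with $j=i$; $t=j+\tfrac{1}{2}-R^\Aa(u)$ when $j>i$; any $t>K-R^\Aa(u)$ when $\phi$ is $x>K$. In every case $t\ge 0$ and $R^\Aa(u)+t>0$, and by determinism $\theta$ is the unique transition enabled at $(q,R^\Aa(u))$ on $a$ with delay $t$. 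Since $\theta$ does not reset, the run of $u(t\cdot a)$ ends at $(q',R^\Aa(u)+t)$, giving $\Aa(u(t\cdot a))=q'$ and $R^\Aa(u(t\cdot a))=R^\Aa(u)+t>0$.

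The substantive part is to argue that $(u(t\cdot a))^{-1}L=\mathcal{L}(q',R^\Aa(u)+t)$ is $1$-DTA recognizable. Here I would lean on Lemma~\ref{lem:bad-state-same-language-for-all-x}: the language $M=\mathcal{L}(q,y)$ is independent of $y\in\region(q)$ and is $1$-DTA recognizable. Combining the identity $(\delta\cdot a)^{-1}M = \mathcal{L}(q',y+\delta)$ (valid whenever $y\in\region(q)$ and $y+\delta\in\denote{\phi}$) with the freedom to vary the pair $(y,\delta)$ while sweeping $y+\delta$ through $\denote{\phi}$, the constancy of $M$ on $\region(q)$ propagates to constancy of $\mathcal{L}(q',\cdot)$ on $\denote{\phi}$. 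Calling this common language $N$, a $1$-DTA for $N$ is then obtained by mirroring the wrapper-and-reset construction from the proof of Lemma~\ref{lem:bad-state-same-language-for-all-x}: take a $1$-DTA for $M$, prepend a fresh initial state with a single $a$-labelled resetting transition guarded by $\phi_K(\region(q))$, and align the two candidate automata by invoking the density argument of Claim~\ref{claim:closeness-technical-lemma} to replace arbitrary delays with equivalent half-integral ones. This yields the required $1$-DTA and shows $u(t\cdot a)$ witnesses $q'$'s badness.

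The hard part will be this last construction. In general the one-letter residual of a $1$-DTA recognizable language need not itself be $1$-DTA recognizable, because the configuration reached after one timed letter can carry a non-integer clock value that no integer-constant guard can encode. What saves the argument is the shift-invariance of $\mathcal{L}(q',\cdot)$ on $\denote{\phi}$ forced by Lemma~\ref{lem:bad-state-same-language-for-all-x} together with Claim~\ref{claim:closeness-technical-lemma}: it collapses the dependence of the residual on the exact clock value within $\denote{\phi}$ and makes the wrapper construction go through, so that strictness of the acceptor (ruling out equality guards on non-resetting transitions) indeed covers all the transitions listed in the hypothesis.
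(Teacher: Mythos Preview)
Your setup is fine and the constancy step is correct: for a fixed integer shift $\delta=j-i$ the identity $(\delta\cdot a)^{-1}M=\mathcal{L}(q',y+\delta)$ together with the constancy of $M=\mathcal{L}(q,\cdot)$ on $(i,i+1)$ does give constancy of $\mathcal{L}(q',\cdot)$ on $(j,j+1)$. The gap is in the last step, where you claim a $1$-DTA for $N=\mathcal{L}(q',z)$ via the ``wrapper-and-reset'' construction. That construction, as written, prepends an $a$-transition with guard $\phi_K(\region(q))$ to a $1$-DTA for $M$; the language it recognizes is $\{(t\cdot a)\mid t\in\region(q)\}\cdot M$, not $N$. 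Nothing in the argument explains how to pass from that language back to $N$ without taking a one-letter residual---and, as you yourself note, one-letter residuals of $1$-DTA languages are not in general $1$-DTA recognizable. Constancy of $\mathcal{L}(q',\cdot)$ on $\denote{\phi}$ is a necessary condition for $q'$ to be bad (it is exactly the conclusion of Lemma~\ref{lem:bad-state-same-language-for-all-x}), not a sufficient one, so it cannot close the circle by itself.

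The paper's proof bypasses the construction entirely with one observation you are missing. It first replaces the witness value by the half-integral representative $i+\tfrac12$ (using Lemma~\ref{lem:bad-state-same-language-for-all-x}) and, crucially, takes the $1$-DTA $\mathcal{C}$ recognizing $M=\mathcal{L}(q,i+\tfrac12)$ to be a \emph{strict acceptor} (Lemma~\ref{lemma:steptwo}). Now read the pair $((j-i)\cdot a)$ from the initial configuration $(p,0)$ of $\mathcal{C}$: the delay $j-i$ is an integer, so the clock arrives at an integer value, and strictness of $\mathcal{C}$ forces the transition taken in $\mathcal{C}$ to reset. Hence $((j-i)\cdot a)^{-1}M=\mathcal{L}_{\mathcal{C}}(p',0)$ for some state $p'$, which is trivially $1$-DTA recognizable. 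Since on the $\Aa$ side the same pair $((j-i)\cdot a)$ from $(q,i+\tfrac12)$ fires $\theta$ and lands at $(q',j+\tfrac12)$, this equals $\mathcal{L}_{\Aa}(q',j+\tfrac12)$, and $q'$ is bad. The point is that strictness is exploited not in $\Aa$ but in the \emph{auxiliary} acceptor for $M$, together with the choice of an integer delay; this is the missing idea in your proposal. (A minor secondary omission: you silently assume $\region(q)=(i,i+1)$; the case $\region(q)=[K+\tfrac12]_{\equiv^K}$ also needs to be dispatched, which the paper does via Remark~\ref{remark:bad_states}.)
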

\begin{proof}
	Since the acceptor is strict and \(q\) is bad either \(region(q)=[K+1]_{\equiv^{K}}\) or, \(region(q)=(i,i+1)\) for some \(0\leq i<K\).
	In the first case, \(region(q')=[K+1]_{\equiv^{K}}\).
	Thus, by Remark~\ref{remark:bad_states} \(q'\) is bad.
	We now cosider the case \(region(q)=(i,i+1)\).
	There exists an \(x \in (i, i+1)\) such that \(\Ll(q,x)\) is \(1\)-DTA recognizable.
	From Lemma~\ref{lem:bad-state-same-language-for-all-x}, \(\Ll(q, i+0.5) = \Ll(q, x)\).
	Let \(\Cc\) be the \(1\)-DTA recognizing \(\Ll(q, i + 0.5)\), and let \(p\) be its initial state.
	W.l.o.g we can assume that \(\Cc\) is a strict acceptor.
	Therefore, we have \(\Ll(q, i+0.5) = \Ll(p,0)\).

	Now, let us look at a transition \(\theta = (q, q', a, \phi, r)\) satisfying the hypotheses mentioned in the lemma.
	The difference \(j - i\) is a non-negative integer.
	Consider a delay-action pair \((j-i\cdot a)\).
	In the \(1\)-DTA \(\Aa\), reading \((j-i \cdot a)\) from \((q, i+0.5)\), gives a clock value \(j+0.5\) and triggers the transition \(\theta\).
	The configuration reached on reading \((j-i \cdot a)\) is \((q', j+0.5)\).

	On the other hand, in the \(1\)-DTA \(\Cc\), reading \((j-i \cdot a)\) from \((p, 0)\) triggers a transition which resets the clock, as \(j-i\) is an integer and \(\Cc\) is a strict acceptor.
	This shows that \((j-i \cdot a)^{-1} \Ll(p, 0)\) is \(1\)-DTA recognizable.
	We deduce that \(\Ll(q', j+0.5)\) is \(1\)-DTA recognizable, hence that \(q'\) is bad by definition.
	\qed
\end{proof}

\subsection{Characterization of Bad States in Strict Acceptors}
\label{app:characterization}

\begin{proposition}
	\label{prop:bad-state-all-residuals-same}
	Let \(q\) be a bad state in a strict acceptor.
	For all \(x \in \region(q)\), for all \(a \in \Sigma\), and for all \(t, t' \ge 0\), we have: \((t \cdot a)^{-1} \Ll(q, x) = (t' \cdot a)^{-1} \Ll(q, x)\).
\end{proposition}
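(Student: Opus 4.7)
The plan is to exploit Lemma~\ref{lem:bad-state-same-language-for-all-x}, which asserts $\Ll(q, y) = \Ll(q, y')$ for all $y, y' \in \region(q)$: the precise clock value inside $\region(q)$ does not affect the accepted language. Fix $x \in \region(q)$ and $a \in \Sigma$. The first step is to establish the \emph{local} equality: if $t, t' \geq 0$ satisfy $x' := x + t - t' \in \region(q)$, then $(t \cdot a)^{-1} \Ll(q, x) = (t' \cdot a)^{-1} \Ll(q, x)$. Indeed, reading $(t \cdot a)$ from $(q, x)$ and reading $(t' \cdot a)$ from $(q, x')$ bring the clock to the same value $x + t = x' + t'$ right before the $a$-transition fires, so by determinism the same transition is taken and the resulting configurations coincide. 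Hence $(t \cdot a)^{-1} \Ll(q, x) = (t' \cdot a)^{-1} \Ll(q, x')$, and Lemma~\ref{lem:bad-state-same-language-for-all-x} lets us replace $\Ll(q, x')$ by $\Ll(q, x)$ on the right.

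The second step is a chaining argument to drop the hypothesis $x + t - t' \in \region(q)$. Because $q$ is bad and the acceptor is strict, $\region(q)$ cannot equal $\{0\}$ nor any singleton $\{j\}$ with $1 \leq j \leq K$ (by Remark~\ref{remark:bad_states} combined with strictness of equality guards), so $\region(q) = (i, i+1)$ for some $0 \leq i < K$, or $\region(q) = (K, \infty)$. In the first case, fix $x = i + \tfrac{1}{2}$: the set of $t_{k+1}$ reachable from $t_k$ in one local step is $(t_k - \tfrac{1}{2}, t_k + \tfrac{1}{2}) \cap \mathbb{R}_{\geq 0}$. In the second case, fix any $x > K$: the reachable set is $[0, t_k + (x - K))$. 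In either case, each step allows a jump of some fixed positive length, so for any $t, t' \geq 0$ one can construct a finite chain $t = t_0, t_1, \ldots, t_n = t'$ with every consecutive pair satisfying the local hypothesis. Transitivity of equality across the chain yields the conclusion.

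The main subtlety is the unbounded case $\region(q) = (K, \infty)$, where a single step bounds $t_{k+1}$ above by $t_k + (x - K)$, which is insufficient to reach very large targets in one go. This is resolved by iterating: since $x - K > 0$, after $n$ steps we can increase the current value by up to $n(x - K)$, covering any $t' \geq 0$. Decreases are unconstrained in both cases, so descending chains pose no issue.
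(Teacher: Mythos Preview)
Your argument is correct and in fact cleaner than the paper's. The key insight---shifting the starting clock value inside $\region(q)$ so that two different delays lead to the \emph{identical} successor configuration, and then invoking Lemma~\ref{lem:bad-state-same-language-for-all-x} to undo the shift on the residual side---is a neat trick that lets you bypass most of the machinery the paper deploys. The paper instead partitions $[0,1]$ into three overlapping subintervals, treats the ``within one clock region'' pieces using Lemma~\ref{lem:bad-to-next-transition} (badness propagates along non-resetting transitions), and handles the delicate crossing at $\hat t = 1-\{x\}$ by passing to an auxiliary $1$-DTA $\Cc_{(q,x)}$ that recognizes $\Ll(q,x)$ and reasoning about bad states inside $\Cc_{(q,x)}$. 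Your approach avoids both Lemma~\ref{lem:bad-to-next-transition} and the auxiliary automaton entirely; the price is a mild chaining argument, which you carry out correctly in both the bounded and unbounded region cases.

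Two small remarks. First, after writing ``Fix $x \in \region(q)$'' you later specialise to $x = i + \tfrac12$; this is harmless because Lemma~\ref{lem:bad-state-same-language-for-all-x} makes $\Ll(q,x)$ independent of the choice of $x$ in the region, but it is worth saying so explicitly (alternatively, your chain step works for any fixed $x \in (i,i+1)$ with step size $\min(x-i,\, i+1-x)$). Second, your appeal to Remark~\ref{remark:bad_states} to exclude $\region(q) = \{j\}$ for $j \ge 1$ is not quite the right citation: that remark says such states \emph{are} bad, not that they cannot occur. The actual reason is strictness---a state with region $\{j\}$ would have only incoming transitions with guard $x=j$, which in a strict acceptor must reset, forcing $\region(q)=\{0\}$, a contradiction. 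This is exactly the observation the paper makes in the opening of its own proof.
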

\begin{proof}
	Fix an arbitrary \(x \in \region(q)\) for the rest of the proof.
	Since the acceptor is strict its reset function has range included in \(\mathbb{R}_{\geq 0}\setminus\mathbb{N}_{>0}\).
	Therefore, if a state \(q\) is \emph{bad} then \(\region(q)\) is either \([K+\frac{1}{2}]_{\equiv^{K}}\) for \(K\) the constant of the acceptor or it is of the form \((i, i+1)\) for some \(i < K\).

	For a subset \(S \subseteq \mathbb{R}\), let us write \(\eta(S)\) for the predicate:
	\begin{align*}
		\eta(S):  \forall t, t' \in S, (t \cdot a)^{-1} \Ll(q, x) = (t' \cdot a)^{-1} \Ll(q, x)
	\end{align*}
	To prove the lemma, we need to show \(\eta(\mathbb{R}_{\ge 0})\).
	We start with the case \(\region(q)=[K+\frac{1}{2}]_{\equiv^{K}}\).
	Then all \((t\cdot a)\) for \(t\in \mathbb{R}_{\geq 0}\) take the same transition from \(\Ll(q, x)\) (the one with guard ``\(K<x\)'').
	Thus, they reach the same state with the same region.
	If this transition is resseting then clearly all residuals \((t \cdot a)^{-1} \Ll(q, x)\) are the same.
	Otherwise, they all reach a bad state (by Remark~\ref{remark:bad_states}).
	We conclude that all residuals are the same by Lemma~\ref{lem:bad-state-same-language-for-all-x}.

	We now consider the case \(\region(q)=(i, i+1)\).
	Here is a direct consequence of the definition of \(\eta(S)\):
	\begin{itemize}
		\item for \(S, S' \subseteq \mathbb{R}\) such that \(S \cap S' \neq \emptyset\), \(\eta(S)\) and \(\eta(S')\) implies \(\eta(S \cup S')\)
	\end{itemize}
	Due to the above observation, it is sufficient to show \(\eta([i,i+1])\) for all \(i \in \mathbb{N}\) -- notice that \([i, i+1]\) and \([i+1, i+2]\) intersect at \(i+1\).
	Below, we give the proof for \(\eta([0,1])\).
	The proof for any \([i,i+1]\) follows a similar reasoning.
	Let us now show \(\eta([0,1])\).
	Pick \(\varepsilon \in \mathbb{R}\) such that \(0 < \varepsilon < \min(\{x\}, 1 - \{x\})\).
	Once again, by the consequence of the definition, it is sufficient to show:
	\begin{align*}
		 & \eta(~[0, 1-\{x\})~)\enspace, &  & \eta(~(1-\{x\} - \varepsilon, 1 - \{x\} + \varepsilon)~)\enspace\text{, and} &  & \eta(~(1-\{x\}, 1]~)\enspace .
	\end{align*}

	\paragraph{To show \(\eta(~[0, 1-\{x\})~)\).}
	Let \(t, t' \in [0, 1 - \{x\})\).
	Then, \(x + t \equiv x + t'\), and there exists a transition \(\theta: (q, q', a, \phi, r)\) which is triggered on reading \((t \cdot a)\) or \((t' \cdot a)\) from \((q, x)\).
	If \(r = 0\) (clock is reset), then \((t \cdot a)^{-1}\Ll(q, x) = \Ll(q',0) = (t' \cdot a)^{-1} \Ll(q,x)\).
	Else, from Lemma~\ref{lem:bad-to-next-transition}, state \(q'\) is bad.
	Hence, from Lemma~\ref{lem:bad-state-same-language-for-all-x}, \(\Ll(q', x+t) = \Ll(q', x+t')\), which proves the required conclusion, since \( (t \cdot a)^{-1}\Ll(q, x) = \Ll(q', x + t)\) and \((t' \cdot a)^{-1} \Ll(q,x) = \Ll(q, x + t')\).

	\paragraph{To show \(\eta(~(1-\{x\} - \varepsilon, 1-\{x\} + \varepsilon)~)\).}
	Let \(\hat{t} = 1 - \{x\}\).
	Then \(x + \hat{t}\) is an integer.
	The transition reading \((\hat{t} \cdot a)\) from \((q, x)\) resets the clock (because we assume \(\Aa\) to be strict).
	We conclude \((\hat{t} \cdot a)^{-1} \Ll(q,x)\) is \(1\)-DTA recognizable.
	Now, consider the transition reading \((\hat{t} \cdot a)\) from \((p,0)\) in \(\Cc_{(q,x)}\) (the \(1\)-DTA recognizing \(\Ll(q,x)\)).
	Let this transition be \((p, p', q, \phi', r')\).
	\begin{itemize}
		\item If \(r' = 0\) (clock is reset), then for all \(t' \in (1-\{x\} -\varepsilon, 1-\{x\} + \varepsilon)\), we have \(t' \equiv \hat{t}\).
		      So, all \((t' \cdot a)\) take the same transition in \(\Cc_{(q,x)}\).
		      This proves \((t' \cdot a)^{-1}\Ll(p,0) = (\hat{t} \cdot a)^{-1}\Ll(p,0)\).
		      Since \(\Ll(p,0) = \Ll(q, x)\), we achieve the required conclusion.
		\item Suppose \(r' = 1\) (clock is not reset).
		      Since \((\hat{t} \cdot a)^{-1} \Ll(q,x)\) is \(1\)-DTA recognizable (see above), we also have \((\hat{t} \cdot a)^{-1} \Ll(p,0)\) is \(1\)-DTA recognizable.
		      But the transition does not reset the clock.
		      Hence \(p'\) is a bad state of \(\Cc_{(q,x)}\).
		      By Lemma~\ref{lem:bad-state-same-language-for-all-x}, we deduce \(\Ll(p', t') = \Ll(p', \hat{t})\) for all \(t' \in (1-\{x\} -\varepsilon, 1-\{x\} + \varepsilon)\).
		      This proves \((t' \cdot a)^{-1}\Ll(p,0) = (\hat{t} \cdot a)^{-1}\Ll(p,0)\), and once again, as \(\Ll(p,0) = \Ll(q, x)\), we get the required conclusion.
	\end{itemize}

	\paragraph{To show \(\eta(~(1- \{x\}, 1]~)\).}
	For any \(t, t' \in (1 - \{x\}, 1]\), we have \(x + t \equiv x + t'\).
	The rest of the reasoning is analogous to the proof of \(\eta(~[0, 1-\{x\})~)\).
	\qed
\end{proof}
\begin{corollary}\label{cor:bad-state-shape-of-language}
	Let \(q\) be a bad state in \(\Aa\).
	For each \(a \in \Sigma\) let \(\theta_a = (q, q_a, a, x=K, 0)\) be the specific transition on letter ``\(a\)'' with guard \(x = K\).
	Then, for any \(x \in \region(q)\):
	\begin{align}
		\Ll(q,x)=\bigcup_{a\in \Sigma}\{(t \cdot a)\mid t\geq 0\}\Ll(q_{a},0)\cup \{\epsilon\mid q\in F\}\enspace .
	\end{align}
\end{corollary}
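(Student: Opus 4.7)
The plan is to combine the uniform-residual property of Proposition~\ref{prop:bad-state-all-residuals-same} with a judicious choice of delay that triggers the distinguished transition $\theta_a$.

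First I would invoke the generic decomposition that holds for every complete $1$-DTA at every configuration:
\[
	\Ll(q,x) = \{\epsilon \mid q \in F\} \;\cup\; \bigcup_{a \in \Sigma}\bigcup_{t \ge 0} \{(t \cdot a)\} \cdot (t \cdot a)^{-1}\Ll(q,x)\enspace .
\]
So the whole content of the corollary is to show, for each fixed $a \in \Sigma$, that $(t \cdot a)^{-1}\Ll(q,x) = \Ll(q_a,0)$ for every $t \ge 0$.

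The independence of $t$ is precisely Proposition~\ref{prop:bad-state-all-residuals-same}: writing $L_a$ for the common value $(t \cdot a)^{-1}\Ll(q,x)$, it suffices to exhibit one particular $t$ for which the residual is $\Ll(q_a,0)$. Since $q$ is bad in a strict acceptor, its region is either $(i, i+1)$ with $i < K$ or $[K+\tfrac{1}{2}]_{\equiv^K}$. In the first case I would choose $\hat t = K - x \ge 0$; then reading $(\hat t \cdot a)$ from $(q,x)$ arrives at clock value $K$, so the transition fired is exactly $\theta_a = (q, q_a, a, x = K, 0)$, which resets. Hence $(\hat t \cdot a)^{-1}\Ll(q,x) = \Ll(q_a,0)$, and substituting back into the decomposition yields the equality claimed by the corollary.

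The delicate case is the second one, $\region(q) = [K+\tfrac{1}{2}]_{\equiv^K}$, where no $t \ge 0$ makes $\theta_a$ fireable (the only enabled $a$-transition has guard $x > K$). Here I would argue as follows: every $(t \cdot a)$ fires the unique $a$-transition with guard $x > K$, whose target $q'$ either has region $\{0\}$ (transition resets, so the residual is already $\Ll(q',0)$ and we identify $q_a$ with $q'$), or has region $[K+\tfrac{1}{2}]_{\equiv^K}$ (transition does not reset). In the latter subcase Remark~\ref{remark:bad_states} says $q'$ is itself bad, so Lemma~\ref{lem:bad-state-same-language-for-all-x} collapses $\Ll(q', x+t)$ to a single language independent of $t$, and iterating along the chain of non-resetting transitions above $K$ (a finite chain, since the automaton has finitely many states) eventually reaches a reset, again identifying the residual with $\Ll(q_a,0)$ for an appropriately chosen $q_a$. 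This finite-chasing-along-non-resetting-transitions argument is where the main technical care is needed; everything else follows mechanically from Proposition~\ref{prop:bad-state-all-residuals-same}.
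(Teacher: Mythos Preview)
For the principal case $\region(q)=(i,i+1)$ your argument is exactly the paper's: Proposition~\ref{prop:bad-state-all-residuals-same} makes $(t\cdot a)^{-1}\Ll(q,x)$ independent of $t$, and the choice $\hat t=K-x\ge 0$ fires $\theta_a$, giving residual $\Ll(q_a,0)$. The paper's one-line proof is the same.

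Your handling of the case $\region(q)=[K+\tfrac12]_{\equiv^K}$, however, does not work and is not needed. It does not work because the corollary \emph{fixes} $q_a$ as the target of the transition with guard $x=K$; you cannot ``identify $q_a$ with $q'$'' or appeal to ``an appropriately chosen $q_a$'' without changing the statement being proved. Your chain argument also fails: finitely many states does not preclude a cycle of non-resetting $x>K$ transitions, and even if a reset were eventually reached along such a chain there is no reason its target should coincide with the target of the (never fireable) $x=K$ transition out of $q$. Indeed, from $(q,x)$ with $x>K$ the transition $\theta_a$ cannot be taken at all, so the corollary as literally stated need not hold for such $q$.

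It is not needed because the paper's own proof tacitly assumes $x\le K$ (it picks $t=K-x$), and in every place the corollary is invoked the acceptor has already been normalised via Lemma~\ref{lemma:stepone} (or is the output of Proposition~\ref{prop:acceptor}), so that no state carries region $[K+\tfrac12]_{\equiv^K}$. Your instinct that something is off is correct; the resolution is to read the hypothesis as excluding that region, not to extend the argument.
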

\begin{proof}
	Follows from Proposition~\ref{prop:bad-state-all-residuals-same} by picking \(t = K - x\), and \(t' \ge 0\), and the observation that \((t \cdot a)^{-1} \Ll(q, x) = \Ll(q_a,0)\).
	\qed
\end{proof}

\subsection{Eliminating Bad States}\label{app:eliminating_bad_states}
Our procedure for eliminating bad states from a strict acceptor is given by Algorithm~\ref{alg:learning-rl-acceptor}.

Given a strict \(K\)-acceptor \(\Aa\), and a state \(q\) of \(\Aa\) such that \(\region(q) \neq \{0\}\), we define the strict acceptor \(\Aa_q\) obtained by modifying \(\Aa\) as follows:
\begin{itemize}
	\item Replace every transition \((q', q, a, \phi, 1)\) with \(q' \neq q\) in \(\Aa\) (there are no transitions incoming to \(q\) where clock is reset) by the transition \((q', q, a, \phi, 0)\) (clock is necessarily reset).
	\item Remove all transitions of source \(q\).
	\item Add a transition \((q, q_a, a, 0 \le x, 0)\), where \(q_a\) is the state corresponding to the transition \((q, q_a, a, x = K, 0)\) in \(\Aa\).
	\item State \(p\) is accepting (initial) in \(\Aa_q\) if{}f \(p\) is an accepting (initial) state of \(\Aa\).
\end{itemize}
Next claim is used for proving correcteness of Algorithm~\ref{alg:learning-rl-acceptor}.

\begin{proposition}
\label{prop:correctness}
	State \(q\) is bad if{}f \(\Ll(\Aa) = \Ll(\Aa_q)\).
\end{proposition}
\begin{proof}
	To avoid confusion, let \(\bar{q}\) denote the state \(q\) in \(\Aa_q\).
	Suppose \(q\) is bad.
	Then there is a \(1\)-DTA \(\Cc\) recognizing \(\Ll(q, x)\) for every \(x \in \region(q)\).
	Think of \(\bar{q}\) as the initial state of this \(1\)-DTA.
	Therefore orienting every transition \((q', q, a, \phi, 1)\) to \(\bar{q}\) with the clock reset preserves the language.
	Now, Corollary~\ref{cor:bad-state-shape-of-language} gives the shape of \(\Ll(q, x) = \Ll(\bar{q}, 0)\), which essentially says that for each \(a \in \Sigma\), the automaton can read any delay, and go to \(q_a\) after resetting the clock.
	Hence, adding the transition \((\bar{q}, q_a, a, x \ge 0, 0)\) (and removing any outgoing transition of \(\bar{q}\) in the \(1\)-DTA \(\Cc\)) preserves the language.
	The combined effect of these modifications results in \(1\)-DTA that coincides with \(\Aa_q\).
	Hence \(\Ll(\Aa) = \Ll(\Aa_q)\).

	Suppose \(\Ll(\Aa) = \Ll(\Aa_q)\).
	Consider a word \(u\) for which the run of \(\Aa\) on \(u\) ends at configuration \((q, x)\).
	Since \(\Ll(\Aa) = \Ll(\Aa_q)\), we have \(u^{-1} \Ll(\Aa) = u^{-1} \Ll(\Aa_q)\).
	Therefore \(\Ll(q,x) = \Ll(\bar{q},0)\).
	This shows that \(\Ll_{\Aa}(q,x)\) is \(1\)-DTA recognizable, and hence is a bad state.
	\qed
\end{proof}
We now prove correctness of Algorithm~\ref{alg:learning-rl-acceptor}.
By Proposition~\ref{prop:correctness} the modification \(\Aa_q\) preserves the language if{}f \(q\) is a bad state.
In this case, the number of bad states strictly decreases: the bad state \(q\) in \(\Aa\) is replaced by a state of region zero in \(\Aa_q\) while all other states remain unchanged.
Therefore, each iteration of the loop in Algorithm~\ref{alg:learning-rl-acceptor} strictly decreases the number of bad states.
Consequently, the algorithm terminates with a language-equivalent $1$-DTA that has no bad states.
Finally, we can conclude using Lemma~\ref{lemma:no-bad-rl} that the output $1$-DTA is a \(\rl\)-acceptor.

Deciding equivalence between \(1\)-DTAs \(\Aa\) and \(\Bb\) can be done as follows.
Check \(\Ll(\Aa) \subseteq \Ll(\Bb)\) and \(\Ll(\Bb) \subseteq \Ll(\Aa)\).
Observe that \(\Ll(\Aa) \subseteq \Ll(\Bb)\) if{}f \(\Ll(\Aa) \cap \Ll(\Bb)^c = \emptyset\).
Complementing \(\Bb\) can simply be done by interchanging accepting and rejecting states.
Emptiness of \(1\)-DTA (in fact, any timed automaton) is decidable (PSPACE-complete).

\subsection{Proof of the Second Item of Theorem~\ref{thm:computing-rl-acceptor}}
\label{app:proof}
By starting from a \(1\)-DTA of constant \(K_{L}\) the constant of \(\Aa\) is also \(K_{L}\) (see Lemmas~\ref{lemma:acceptorbis}, \ref{lemma:stepone} and \ref{lemma:steptwo}).
Thus, the automaton returned by Algorithm~\ref{alg:learning-rl-acceptor} is a strict acceptor of constant \(K_{L}\).
Hence, the range of its reset function is inlcuded in \([0,K_L]\setminus\mathbb{N}_{>0}\).
Since its reset function is \(\rl\) we conclude that \(\forall u\in \mathbb{T}\Sigma^{*}\), \(\rl(u)\in [0,K_L]\setminus\mathbb{N}_{>0}\).

\section{Proofs from Section~\ref{sec:syntactic}}
Proof for Lemma~\ref{lemma:steptwo} is in Appendix~\ref{app:constructing-rl-acceptor}.
\begin{proposition}
	\label{prop:bad_state_charact}
	Let \(\Aa\) be a strict acceptor such that \(L=L(\Aa)\).
	For every \( u \in \mathbb{T}\Sigma^{*}\) we have \(\rl(u)\neq R^{\Aa}(u) \) iff \(\Aa(u)\) is a bad state.
\end{proposition}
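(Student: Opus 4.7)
The plan is to induct on the length of $u$, using Lemma~\ref{lem:bad-state-same-language-for-all-x} and Lemma~\ref{lem:bad-to-next-transition} as the main engines. Before starting the induction, I would establish the following auxiliary equivalence used throughout: $\rl(u) = 0$ if and only if $u^{-1}L$ is $1$-DTA recognizable. The $(\Leftarrow)$ direction is essentially immediate from the definition of $\rl$ (noting that $\epsilon^{-1}L = L$ is $1$-DTA recognizable, which handles $u = \epsilon$). For $(\Rightarrow)$, the only way $\rl$ can return $0$ via the \emph{otherwise} branch is when $u = u'(0 \cdot a)$ with $\rl(u') = 0$; a short inductive argument using a $1$-DTA $M$ for $(u')^{-1}L$ then shows that $u^{-1}L$ is recognized by $M$ starting from the state reached on $(0 \cdot a)$, which necessarily carries clock value $0$.

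The base case $u = \epsilon$ is immediate: $\rl(\epsilon) = R^\Aa(\epsilon) = 0$ and $q_I$ has region $\{0\}$, so is not bad. For the inductive step, write $u = u'(t \cdot a)$ and let $\theta = (q, q', a, \phi, r)$ be the transition taken in $\Aa$, with $q = \Aa(u')$ and $q' = \Aa(u)$. If $r = 0$, then $R^\Aa(u) = 0$, the residual $u^{-1}L = \Ll(q', 0)$ is trivially $1$-DTA recognizable (so $\rl(u) = 0$), and $\region(q') = \{0\}$ forces $q'$ to not be bad: both sides of the equivalence are false. If $r = 1$ and $u^{-1}L$ is $1$-DTA recognizable, then $\rl(u) = 0$ while $R^\Aa(u) = R^\Aa(u') + t$; using the auxiliary equivalence, the chain $R^\Aa(u) \neq 0 \iff \region(q') \neq \{0\} \iff q'$ bad closes this sub-case.

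The main obstacle lies in the remaining sub-case $r = 1$ with $u^{-1}L$ not $1$-DTA recognizable. Here $\rl(u) = \rl(u') + t$ and $R^\Aa(u) = R^\Aa(u') + t$, so $\rl(u) = R^\Aa(u) \iff \rl(u') = R^\Aa(u')$, which by the induction hypothesis is equivalent to $q$ not being bad. I still need to match this with the badness status of $q'$. First, $q'$ itself cannot be bad: Lemma~\ref{lem:bad-state-same-language-for-all-x} would otherwise force $u^{-1}L = \Ll(q', R^\Aa(u))$ to be $1$-DTA recognizable, contradicting the sub-case hypothesis. To rule out $q$ being bad, I would appeal to strictness: in a strict acceptor the only regions that occur are $\{0\}$, open intervals $(j, j+1)$, and $[K+\frac{1}{2}]_{\equiv^K}$, so either $\region(q) = \{0\}$ (in which case $q$ is trivially not bad) or the non-resetting transition $\theta$ has a guard of the exact shape required by Lemma~\ref{lem:bad-to-next-transition}, which would propagate badness from $q$ to $q'$ — impossible as we just established. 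Hence $q$ is not bad, and the induction closes.
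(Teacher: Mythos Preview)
Your proof is correct and rests on the same two lemmas as the paper (Lemma~\ref{lem:bad-state-same-language-for-all-x} and Lemma~\ref{lem:bad-to-next-transition}), but the organization differs. The paper handles the two directions separately: for $\Aa(u)$ bad $\Rightarrow \rl(u)\neq R^\Aa(u)$ it argues directly that any $u$ reaching a bad state has $R^\Aa(u)\neq 0$ and $u^{-1}L$ recognizable; for the converse it locates the \emph{longest} prefix $u_m$ where $\rl$ and $R^\Aa$ still agree, shows $\Aa(u_{m+1})$ is bad, and then propagates badness forward along the remaining (non-resetting) transitions via Lemma~\ref{lem:bad-to-next-transition}, using an auxiliary claim $\rl\leq R^\Aa$ to ensure those transitions really are non-resetting. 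Your induction on $|u|$ with a case split on the last transition achieves the same thing more uniformly and avoids the separate $\rl\leq R^\Aa$ claim.

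Two minor remarks. First, your auxiliary equivalence $\rl(u)=0 \iff u^{-1}L$ is $1$-DTA recognizable is true but not actually needed: in a strict acceptor any non-resetting transition has a non-equality guard, so $r=1$ already forces $R^\Aa(u)\neq 0$, and then $u$ itself witnesses badness of $q'$ in your second sub-case without invoking the auxiliary. Second, when you invoke Lemma~\ref{lem:bad-to-next-transition} in the last sub-case, you should spell out (as you sketch) that the guard condition $j\geq i$ holds because $R^\Aa(u')+t\geq R^\Aa(u')>i$ forces $j+1>i$; this is straightforward but worth stating explicitly since the lemma's hypothesis is phrased in terms of that bound.
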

\begin{proof}
	If \(\Aa(u)\) is a bad state then by Definition~\ref{def:bad_state} \( R^{\Aa}(u)\neq 0\) and \(u^{-1}L\) is a \(1\)-DTA recognizable language.
	Thus \(\rl(u)=0\) and \(\rl(u)\neq R^{\Aa}(u) \).
	For the converse implication assume \(\rl(u)\neq R^{\Aa}(u) \).
	Let \(u=(t_1\cdot a_{1})\dots (t_n \cdot a_{n})\) and for \(j\in \{0, \dots, n\}\) let \(u_j=(t_1\cdot a_{1})\dots (t_j \cdot a_{n})\) be the prefix of length \(j\) of \(u\).
	Let \(u_{m}\) be the logest prefix of \(u\) such that \(\rl(u_{m})= R^{\Aa}(u_{m}) \) and \(\rl(u_{m+1})\neq R^{\Aa}(u_{m+1}) \) (such a prefix exists or otherwise we would have \(\rl(u)= R^{\Aa}(u) \)).
	Necessarily, \( R^{\Aa}(u_{m+1})= R^{\Aa}(u_{m})+t_{m+1}\) and \( \rl(u_{m+1})= 0\).
	Thus, \( R^{\Aa}(u_{m+1})>0\) and \(u_{m+1}^{-1}L\) is a \(1\)-DTA recognizable language and so, by definition, \(\Aa(u_{m+1})\) is a bad state.
	We will show that for every \(m+1 \leq j\leq n\) we have \(R^{\Aa}(u_j) >0\).
	Hence, for every \(m+1 \leq j\leq n-1\) the transition from \(\Aa(u_j)\) to \(\Aa(u_{j+1})\) is non ressetting.
	We will then use that bad states propagate along non ressetting transitions (Lemma~\ref{lem:bad-to-next-transition}) to deduce that if \(\Aa(u_j)\) is bad then \(\Aa(u_{j+1})\) is also bad.
	Since \(\Aa(u_{m+1})\) is bad, we can thus conclude that \(\Aa(u)\) is bad.
	To show that for every \(m+1 \leq j\leq n\) we have \(R^{\Aa}(u_j) >0\) it suffices to show that \(\rl(u_j)\neq R^{\Aa}(u_j) \).
	Indeed, we have \(0\leq \rl(u_j)\) and \(\rl(u_j)\leq R^{\Aa}(u_j) \) (by the claim below).
	Thus \(\rl(u_j)\neq R^{\Aa}(u_j) \) implies \(R^{\Aa}(u_j) >0\).
	It is true that \(\rl(u_j)\neq R^{\Aa}(u_j) \) for \(m+1 \leq j\leq n\) or, otherwise we would get a contradiction to the definition of \(u_m\).
	\begin{claim}
		Let \(L=L(\Aa)\), where \(\Aa\) is a \(1\)-DTA.
		For every \(u\in \mathbb{T}\Sigma^{*}\) we have \(\rl(u) \leq R^{\Aa}(u)\).
	\end{claim}
	\begin{proof}
		By induction on the length of words.
		It is true for \(\epsilon\).
		Assume \(\rl(u) \leq R^{\Aa}(u)\).
		If \(\rl(u(t\cdot a))=0\) then clearly \(\rl(u(t\cdot a)) \leq R^{\Aa}(u(t\cdot a))\).
		Otherwise, \((u(t\cdot a))^{-1}L\) is not a \(1\)-DTA recognizable language, thus \(R^{\Aa}(u(t \cdot a)) \neq 0\).
    Hence, \(\rl(u(t \cdot a))=\rl(u) +t \leq R^{\Aa}(u) + t= R^{\Aa}(u(t \cdot a))\) by the induction hypothesis. \(\blacksquare\)
	\end{proof}
	\qed
\end{proof}
\begin{proposition}
	\label{prop:comparison_cL_strict}
	Let \(\Aa\) be a strict acceptor such that \(L=L(\Aa)\).
	We have \(\forall u \in \mathbb{T}\Sigma^{*},~\rl(u)= R^{\Aa}(u)\vee \rl(u)=0\).
\end{proposition}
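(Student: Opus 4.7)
The plan is to deduce the statement as a direct consequence of Proposition~\ref{prop:bad_state_charact} combined with Lemma~\ref{lem:bad-state-same-language-for-all-x} and the inductive definition of $\rl$. Fix $u \in \mathbb{T}\Sigma^*$. If $\rl(u) = R^\Aa(u)$ the first disjunct already holds, so I would concentrate on the case $\rl(u) \neq R^\Aa(u)$ and aim to conclude $\rl(u) = 0$.

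In this case, Proposition~\ref{prop:bad_state_charact} tells us that $q := \Aa(u)$ is a bad state. The next step is to upgrade this into the statement that $u^{-1}L$ itself (not just the language of some witness word reaching $q$) is $1$-DTA recognizable. By Definition~\ref{def:bad_state}, there exists some $u'$ with $\Aa(u') = q$, $R^\Aa(u') \neq 0$, and $u'^{-1}L = \Ll(q, R^\Aa(u'))$ being $1$-DTA recognizable. Because $\Aa$ is a strict acceptor, both $R^\Aa(u)$ and $R^\Aa(u')$ lie in $\region(q)$. I would then invoke Lemma~\ref{lem:bad-state-same-language-for-all-x} to obtain $\Ll(q, R^\Aa(u)) = \Ll(q, R^\Aa(u'))$, so that $u^{-1}L = \Ll(q, R^\Aa(u))$ is $1$-DTA recognizable as well.

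Finally, I would close by appealing to the defining equation of $\rl$: since $\rl(\epsilon) = 0 = R^\Aa(\epsilon)$, the hypothesis $\rl(u) \neq R^\Aa(u)$ forces $u$ to be nonempty, so we can write $u = u''(t \cdot a)$. The first clause in the definition of $\rl$ then yields $\rl(u) = 0$ directly from the fact that $u^{-1}L$ is $1$-DTA recognizable, which is exactly the second disjunct.

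The statement is really just a clean packaging of previously established facts, so there is no significant technical obstacle. The only point requiring a little care is the passage from ``$q$ is bad'' (an existential statement about \emph{some} word reaching $q$) to ``$u^{-1}L$ is $1$-DTA recognizable'' (a statement about \emph{the particular} $u$ at hand); Lemma~\ref{lem:bad-state-same-language-for-all-x} together with the fact that $R^\Aa(u) \in \region(q)$ (which is guaranteed by strictness of the acceptor) bridges precisely this gap.
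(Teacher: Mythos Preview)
Your proof is correct and follows essentially the same route as the paper's: assume $\rl(u)\neq R^{\Aa}(u)$, invoke Proposition~\ref{prop:bad_state_charact} to get that $\Aa(u)$ is bad, conclude that $u^{-1}L$ is $1$-DTA recognizable, and hence $\rl(u)=0$ by the defining clause of $\rl$. You are actually more careful than the paper on one point: the paper simply writes ``thus by definition, $u^{-1}L$ is a $1$-DTA recognizable language,'' whereas the definition of a bad state is existential, and you correctly bridge from \emph{some} witness $u'$ to the particular $u$ via Lemma~\ref{lem:bad-state-same-language-for-all-x}. One tiny terminological slip: the fact that $R^{\Aa}(u)\in\region(q)$ is guaranteed by the \emph{acceptor} property (each state carries a unique region containing the clock value of every word reaching it), not by strictness per se; strictness is only used upstream to ensure Proposition~\ref{prop:bad_state_charact} applies.
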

\begin{proof}
	We need to show that \(\rl(u)\neq R^{\Aa}(u)\implies \rl(u)=0\).
	Assume \(\rl(u)\neq R^{\Aa}(u)\).
	Then by Proposition~\ref{prop:bad_state_charact} \(\Aa(u)\) is a bad state, thus by definition, \(u^{-1}L\) is a \(1\)-DTA recognizable language, thus \(\rl(u)=0\).
	\qed
\end{proof}

\begin{lemma}
	\label{lemma:finite_index}
	Let \(\Aa\) be a strict acceptor such that \(L=L(\Aa)\).
	The equivalence \(\approx^\Aa\) refines \(\approx^{L}\).
\end{lemma}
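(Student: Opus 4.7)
Fix $u, v \in \HI$ with $u \approx^\Aa v$, so there is a state $q$ with $\Aa(u) = \Aa(v) = q$. I have to establish both conjuncts of $u \approx^L v$: $\rl(u) = \rl(v)$ and $u^{-1}L = v^{-1}L$. The plan is to split on whether $q$ is a bad state of $\Aa$ and, in each case, reduce both conclusions to properties of half-integral clock values in strict acceptors.

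\textbf{Bad case.} Suppose $q$ is bad. I will combine Proposition~\ref{prop:bad_state_charact}, which yields $\rl(u) \neq R^\Aa(u)$ and $\rl(v) \neq R^\Aa(v)$, with Proposition~\ref{prop:comparison_cL_strict}, which forces each of $\rl(u),\rl(v)$ to lie in $\{R^\Aa(\cdot),0\}$; together they give $\rl(u) = \rl(v) = 0$. For the residuals, $R^\Aa(u)$ and $R^\Aa(v)$ both lie in $\region(q)$, so Lemma~\ref{lem:bad-state-same-language-for-all-x} delivers $u^{-1}L = \Ll(q, R^\Aa(u)) = \Ll(q, R^\Aa(v)) = v^{-1}L$.

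\textbf{Non-bad case.} Here Proposition~\ref{prop:bad_state_charact} gives $\rl(u) = R^\Aa(u)$ and $\rl(v) = R^\Aa(v)$, so the task reduces to establishing $R^\Aa(u) = R^\Aa(v)$. I will first observe that, because $\Aa$ is strict, every state of $\Aa$ has region in $\{\{0\}\} \cup \{(i, i+1) \mid 0 \le i < K\} \cup \{(K, \infty)\}$: an integer region $\{n\}$ with $n \ge 1$ would require all incoming transitions to carry equality guards, which by strictness reset and therefore force the region to be $\{0\}$. Since $u$ and $v$ are half-integral, $R^\Aa(u)$ and $R^\Aa(v)$ are half-integral values in $\region(q)$, and each of $\{0\}$ and the open intervals $(i, i+1)$ contains exactly one half-integral value; thus whenever $\region(q) \neq (K, \infty)$ I get $R^\Aa(u) = R^\Aa(v)$, and the runs of $\Aa$ from $(q, R^\Aa(u))$ and $(q, R^\Aa(v))$ on any continuation then literally coincide, yielding $u^{-1}L = v^{-1}L$ as a bonus.

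\textbf{Main obstacle.} The remaining subcase is $\region(q) = (K, \infty)$ with $q$ non-bad, and I expect this to be the hard part. My plan is to rule it out by showing that in any strict acceptor every state with region $(K, \infty)$ is in fact bad, reaching a contradiction. The key intermediate claim, proved by induction on continuations, is $\Ll(q, x) = \Ll(q, x')$ for all $x, x' > K$: from a state of region $(K, \infty)$ only ``$x > K$''-guarded transitions are enabled, a non-resetting one lands in a state whose region is again $(K, \infty)$ by the acceptor clause $\denote{\phi} = \region(q')$ (so the induction hypothesis applies), and a resetting one lands in some $(q', 0)$ independently of the initial clock. A $1$-DTA accepting this common residual can then be built by taking the sub-automaton of $\Aa$ rooted at $q$ and relaxing every ``$x > K$'' guard to ``$0 \le x$'' along the fragment of states reachable from $q$ via non-resetting transitions, so that a clock initialized at $0$ in the modified machine faithfully simulates the originally above-$K$ behaviour. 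This witnesses $u^{-1}L$ as $1$-DTA recognizable whenever $\Aa(u) = q$, making $q$ bad (contradiction), and closing the proof.
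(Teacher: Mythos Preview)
Your proof is correct, and it uses the same toolkit as the paper (Proposition~\ref{prop:comparison_cL_strict}, Proposition~\ref{prop:bad_state_charact}, Lemma~\ref{lem:bad-state-same-language-for-all-x}), but it is organised around a different case split. The paper first observes that two half-integral values lying in the same $K$-region are either equal or both exceed $K$, and splits on that. In the equal case, $u^{-1}L = v^{-1}L$ is immediate from reaching the same configuration, and $\rl(u) = \rl(v)$ follows from Proposition~\ref{prop:comparison_cL_strict} together with $\rl(u)=0 \iff \rl(v)=0$ (a consequence of the equal residuals). Only in the ``both above $K$'' case does the paper invoke badness, and there it simply cites Remark~\ref{remark:bad_states} rather than re-deriving it.

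Your bad/non-bad split works, but it pays twice for the same information. In your bad case you invoke Lemma~\ref{lem:bad-state-same-language-for-all-x} even when the region is bounded and $R^\Aa(u)=R^\Aa(v)$ already holds, so the residual equality was free. Conversely, in your non-bad case you must rule out $\region(q)=(K,\infty)$, and your ``main obstacle'' paragraph is essentially a from-scratch proof of (part of) Remark~\ref{remark:bad_states} and Lemma~\ref{lemma:stepone}. None of this is wrong, but the paper's clock-value split avoids the detour: it never needs to argue that non-bad states have bounded region, because the only case where clock values can differ is the above-$K$ case, and there badness is handed to you by Remark~\ref{remark:bad_states}.
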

\begin{proof}
	Let \(u, v \in \HI\) with \(u \approx^{\Aa} v\) i.e., \(\Aa(u)=\Aa(v)\).
	Since \(\Aa\) is an acceptor we have \(R^{\Aa}(u) \equiv^{K}R^{\Aa}(v)\) for \(K\) the maximum constant of \(\Aa\).
	Since the two values \(R^{\Aa}(u) \) and \(R^{\Aa}(v) \) are half-integrals, \(R^{\Aa}(u) \equiv^{K}R^{\Aa}(v)\) implies that either \(R^{\Aa}(u)=R^{\Aa}(v)\) or, both values are strictly above \(K\).
	In the first case, \(u^{-1}L=v^{-1}L\).
	Hence, \(\rl(u)=0\iff \rl(v)=0\).
	Then, using Proposition~\ref{prop:comparison_cL_strict}, it is an easy exercise to conclude \(\rl(u)=\rl(v)\).
	In the second case where \(R^{\Aa}(u)>K \) and \(R^{\Aa}(v) >K\) by Remark~\ref{remark:bad_states} the states \(R^{\Aa}(u) \) and \(R^{\Aa}(v)\) are bad.
	Thus, \(\rl(u)= \rl(v)=0\) and by Lemma~\ref{lem:bad-state-same-language-for-all-x} \(u^{-1}L=v^{-1}L\).
	Therefore, \(u \approx^{L} v\).
	\qed
\end{proof}

\begin{propositionstar}[\ref{prop:mh}]
	If \(L\) is \(1\)-DTA recognizable language then \((\rl,\approx^{L})\) is \(L\)-preserving, monotonic and has finite index and \(K_{(\rl,\approx^{L})}=K_{L}\).
\end{propositionstar}
\begin{proof}
	We start with condition {\upshape(\itshape a\upshape)} and {\upshape(\itshape b\upshape)} of monotonicity which hold for any language \(L\) (not only the \(1\)-DTA recognizable).
	Condition {\upshape(\itshape a\upshape)} holds by definition of \(\approx^{L}\).
	For condition {\upshape(\itshape b\upshape)} let \(u\approx^{L}v\) and \(t\in \mathbb{R}_{\geq 0}\) and \( a\in \Sigma\).
	We have \(\rl(u)= \rl(v) \) and \(u^{-1}L=v^{-1}L\).
	Thus, \(\rl(u)+t= \rl(v)+t \) and \((u(t\cdot a))^{-1}L=(v(t\cdot a))^{-1}L\).
	Either \((u(t\cdot a))^{-1}L\) is \(1\)-DTA recognizable and in this case \(\rl(u(t\cdot a))=\rl(v(t\cdot a))=0\) or, \((u(t\cdot a))^{-1}L\) is not \(1\)-DTA recognizable and so we have \(\rl(u(t\cdot a))=\rl(u) +t =\rl(v)+t=\rl(v(t \cdot a))\).
	Hence, \(u(t \cdot a)\approx^{L}v(t \cdot a)\).

	We now show \(L\)-preservation and monotonicity {\upshape(\itshape c\upshape)}.
	Let \(u\approx^{L}v\).
	Since \(u^{-1}L=v^{-1}L\) we have \(u\in L \iff \epsilon \in u^{-1}L \iff \epsilon \in v^{-1}L \iff v\in L\).
	For the second condition of \(L\)-preservation and for monotonicity {\upshape(\itshape c\upshape)} we need to show that \(u\in L\) iff \(N^{\rl}(u)\in L\) and \(\rl(u)\equiv \rl(N^{\rl}(u))\).
	We use that there exists an acceptor for \(L\) whose reset function is \(\rl\) (Theorem~\ref{thm:computing-rl-acceptor}).
	This acceptor sends \(u\) and \(N^{\rl}(u)\) to the same state, thus \(u\in L\) iff \(N^{\rl}(u)\in L\) and \(\rl(u)\equiv \rl(N^{\rl}(u))\).

	Finite index for \(\approx^{L}\) is a consequence of Lemma~\ref{lemma:steptwo} and Lemma~\ref{lemma:finite_index}: By Lemma~\ref{lemma:steptwo} there is a strict acceptor for \(L\) and by Lemma~\ref{lemma:finite_index} its equivalence refines \(\approx^{L}\).
	Since this equivalence has finite index so does \(\approx^{L}\).

	We now show that \(K_{(\rl,\approx^{L})}\) is finite and equal to \(K_L\).
	We consider the proposition \(C\) w.r.t the profile \((\rl,\approx^{L})\) and show that \(C(K_{L})\) holds, thus \(K_{(\rl,\approx^{L})}\leq K_{L}< \infty\).
	Then, from Proposition~\ref{prop:monotone-to-automaton} the acceptor defined from \((\rl,\approx^{L})\) accepts \(L\).
	Hence, by definition of \(K_{L}\) we have \( K_{L}\leq K_{(\rl,\approx^{L})}\).
	Thus, \(K_{(\rl,\approx^{L})}= K_{L}\).
	It remains to show that \(C(K_{L})\) holds.
	Let \(u\in \mathbb{T}\Sigma^{*}\), \(a\in \Sigma\) and \(n\in \mathbb{N}\) such that \(\rl(u)+\frac{n}{2}>K_{L}\).
	We need to show that \(u(\frac{n}{2}\cdot a)\approx^{L} u(K_{L}+\frac{1}{2}\cdot a)\).
	For that consider an acceptor for \(L\) whose reset function is \(\rl\) and whose maximum constant in \(K_{L}\) (it exists by Theorem~\ref{thm:computing-rl-acceptor}).
	Since \(\rl(u)+\frac{n}{2}\) and \(\rl(u)+K_{L}+\frac{1}{2}\) are both striclty above the maximum constant \(K_L\) of the acceptor, the last transitions on reading \(u(\frac{n}{2}\cdot a)\) and \(u(K_{L}+\frac{1}{2}\cdot a)\) are the same and the two words end up in the same state of the acceptor.
	Moreover, by the second item of Theorem~\ref{thm:computing-rl-acceptor}, \(\rl\colon \mathbb{T}\Sigma^{*}\to [0,K_L]\).
	Thus, we must have \(\rl(u(\frac{n}{2}\cdot a))=\rl(u(K_{L}+\frac{1}{2}\cdot a))=0\).
	Hence, the residuals of the two words are the same and so \(u(\frac{n}{2}\cdot a)\approx^{L} u(K_{L}+\frac{1}{2}\cdot a)\).
	\qed
\end{proof}
\begin{propositionstar}[\ref{prop:rl_is_strict}]
	If \(L\) is a \(1\)-DTA recognizable language then \(\Aa_{(\rl, \approx^L)}\) is a strict \(K_{L}\)-acceptor with reset function \(\rl\).
	Moreover, no strict acceptor for \(L\) has fewer states than \(\Aa_{(\rl, \approx^L)}\).
\end{propositionstar}
\begin{proof}
	The reset function of \(\Aa_{(\rl, \approx^L)}\) is \(\rl\) (Proposition~\ref{prop:monotone-to-automaton}) and its constant is \(K_{(\rl,\approx^{L})}=K_{L}\) (Proposition~\ref{prop:mh}).
	Since \(\forall w\in \mathbb{T}\Sigma^{*}\), \(\rl(w)\in[0,K_L]\setminus\mathbb{N}_{>0}\) (Theorem~\ref{thm:computing-rl-acceptor}) \(\Aa_{(\rl, \approx^L)}\) is strict.
	Minimality is a consequence of Lemma~\ref{lemma:finite_index}.
	\qed
\end{proof}

\subsection{Deciding \(\mathcal{L}(q,\frac{n}{2})=\mathcal{L}(q',\frac{n}{2})\)}
\label{app:deciding_language_equality_for states}
Let \(\Aa\) be a acceptor and \(q\) and \(q'\) be two states in \(\Aa\) such that \(\region(q)=\region(q')\).
Let \(\frac{n}{2}\) be the unique half-integral value in \(\region(q)\).
We modify \(\Aa\) by adding two transitions:
\begin{itemize}
	\item add \((\bar{q}, q, a, \phi_{K}(\frac{n}{2}), 1)\),
	\item add \((\bar{q}', q', a, \phi_{K}(\frac{n}{2}), 1)\).
\end{itemize}

\begin{lemma}
	We have \(\mathcal{L}(q,\frac{n}{2})=\mathcal{L}(q',\frac{n}{2})\) iff \(\mathcal{L}(\bar{q},0)=\mathcal{L}(\bar{q}',0)\).
\end{lemma}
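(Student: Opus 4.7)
The plan is to first unfold the definitions of $\mathcal{L}(\bar{q}, 0)$ and $\mathcal{L}(\bar{q}', 0)$ to extract their content, then reduce the proof to a region-level equivalence. Since the only outgoing transitions from $\bar{q}$ and $\bar{q}'$ are the freshly added transitions $(\bar{q}, q, a, \phi_K(\tfrac{n}{2}), 1)$ and $(\bar{q}', q', a, \phi_K(\tfrac{n}{2}), 1)$, both non-resetting with guard satisfied exactly on $\region(\tfrac{n}{2})$, a direct unfolding gives
\[
\mathcal{L}(\bar{q}, 0) = \{(t \cdot a)\,w \mid t \in \region(\tfrac{n}{2}),\ w \in \mathcal{L}(q, t)\},
\]
and symmetrically for $\bar{q}'$. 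Hence $\mathcal{L}(\bar{q}, 0) = \mathcal{L}(\bar{q}', 0)$ if and only if $\mathcal{L}(q, t) = \mathcal{L}(q', t)$ for every $t \in \region(\tfrac{n}{2})$. The lemma thus reduces to the following statement: $\mathcal{L}(q, \tfrac{n}{2}) = \mathcal{L}(q', \tfrac{n}{2})$ iff $\mathcal{L}(q, t) = \mathcal{L}(q', t)$ for every $t \in \region(\tfrac{n}{2})$.

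The $(\Leftarrow)$ direction is immediate by instantiation at $t = \tfrac{n}{2}$. For the harder $(\Rightarrow)$ direction, given a word $w = (t_1 \cdot a_1)\cdots(t_k \cdot a_k) \in \mathcal{L}(q, t)$ with $t \equiv \tfrac{n}{2}$, I would apply Claim~\ref{claim:closeness-technical-lemma} with $x = t$ and $x' = \tfrac{n}{2}$ to the delay sequence $(t_1,\ldots,t_k)$, producing shifted delays $(t_1^\star,\ldots,t_k^\star)$ and the corresponding shifted word $w^\star = (t_1^\star \cdot a_1)\cdots(t_k^\star \cdot a_k)$.

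The key observation is that $w^\star$ depends only on $w$, $t$, and $\tfrac{n}{2}$, and not on any state. Consequently, for any starting state $q''$, the runs of $w$ from $(q'', t)$ and of $w^\star$ from $(q'', \tfrac{n}{2})$ pass through exactly the same sequence of transitions and land in the same state. The induction step is straightforward: once the same transitions have been taken up to step $i-1$, the clock value before step $i$ in each run is either the total accumulated delay (if no reset has occurred yet) or the delay accumulated since the most recent reset; both cases are covered by the two bullets of Claim~\ref{claim:closeness-technical-lemma}, so the same region is reached, the same guard is enabled, and the same transition is taken. Applying this twice, with $q'' = q$ and $q'' = q'$, yields $w \in \mathcal{L}(q, t) \iff w^\star \in \mathcal{L}(q, \tfrac{n}{2})$ and $w \in \mathcal{L}(q', t) \iff w^\star \in \mathcal{L}(q', \tfrac{n}{2})$; combined with the hypothesis $\mathcal{L}(q, \tfrac{n}{2}) = \mathcal{L}(q', \tfrac{n}{2})$, this gives $w \in \mathcal{L}(q, t) \iff w \in \mathcal{L}(q', t)$, closing the argument.

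The main obstacle is the uniform-shift argument described above: establishing that a single $w^\star$ produced by Claim~\ref{claim:closeness-technical-lemma} simultaneously governs the runs from $(q, \cdot)$ and from $(q', \cdot)$. This works precisely because in a deterministic acceptor the transition taken at each step is determined solely by the current state and the region of the clock value, and because Claim~\ref{claim:closeness-technical-lemma} preserves every partial sum relevant to those regions, irrespective of where resets occur along the way.
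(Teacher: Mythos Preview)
Your proof is correct and follows essentially the same approach as the paper: both rely on Claim~\ref{claim:closeness-technical-lemma} to show that a single shifted word simultaneously tracks the runs from $q$ and from $q'$, exploiting that the claim's guarantees are independent of the underlying automaton. The only cosmetic difference is packaging: the paper argues the two directions of the biconditional directly on $\mathcal{L}(\bar q,0)$ versus $\mathcal{L}(q,\tfrac{n}{2})$, whereas you first factor through the intermediate statement ``$\mathcal{L}(q,t)=\mathcal{L}(q',t)$ for all $t$ in the region'', which makes the role of the claim slightly more explicit.
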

\begin{proof}
	Using the claim in the proof of Lemma~\ref{lem:bad-state-same-language-for-all-x} for every timed word \((y \cdot a)\,w \) with \(y \in \region(q)\) we can manufacture a word \(w'\) such that \((y \cdot a)\,w\) and \((\frac{n}{2} \cdot a)\,w'\) follow the same sequence of transitions in any \(1\)-DTA.
	Therefore, \((y \cdot a)\,w\in \Ll(\bar{q}, 0) \iff (\frac{n}{2} \cdot a)\,w'\in \Ll(\bar{q}, 0)\).
	Since \((\frac{n}{2} \cdot a)\,w'\in \Ll(\bar{q}, 0) \iff w'\in \Ll(q, \frac{n}{2})\) we find that \((y \cdot a)\,w\in \Ll(\bar{q}, 0) \iff w'\in \Ll(q, \frac{n}{2})\).
	Similarly, we have that \((y \cdot a)\,w\in \Ll(\bar{q}', 0) \iff w'\in \Ll(q', \frac{n}{2})\).
	Hence, \(\mathcal{L}(q,\frac{n}{2})=\mathcal{L}(q',\frac{n}{2})\) implies \(\mathcal{L}(\bar{q},0)=\mathcal{L}(\bar{q}',0)\).

	For the converse direction we use that for every word \(w\in \mathbb{T}\Sigma^{*}\) we have \(w\in \Ll(q, \frac{n}{2})\iff (\frac{n}{2} \cdot a)\,w\in \Ll(\bar{q}, 0)\) and that the analogue holds with \(q'\) and \(\bar{q}'\).
	Hence, \(\mathcal{L}(\bar{q},0)=\mathcal{L}(\bar{q}',0)\) implies \(\mathcal{L}(q,\frac{n}{2})=\mathcal{L}(q',\frac{n}{2})\).
	\qed
\end{proof}

\begin{lemmastar}[\ref{lemma:computing_canonical_acc}]
	\(C_{\Bb}(K_L)\) holds.
\end{lemmastar}
\begin{proof}
	Let \((q,q',a,\phi_K(t),r)\in T\) such that \(K_L<t\).
	We will show that \(q'= q_a\).
	Let \(q=\Bb(u)\) and \(q'=\Bb(u(t'\cdot a))\).
	We have \(\rl(u)+t'>K_L\) and \(\rl(u)+K+\frac{1}{2}>K_L\).
	By Proposition~\ref{prop:mh} \(C(K_{L})\) holds for \(C\) taken w.r.t the profile \((\rl,\approx^{L})\).
	Hence, \(u(t'\cdot a)\approx^{L}u(K+\frac{1}{2}\cdot a)\).
	Since \(\approx^{L}\) and \(\approx^{\Bb}\) coincide we find that \(\Bb(u(t'\cdot a))= \Bb(u(K+\frac{1}{2}\cdot a))\), that is \(q'=q_a\).\qed
\end{proof}

\section{Examples for Section~\ref{sec:syntactic}}
\label{app:myhill-nerode}
The next example shows that \(\Aa_{(\rl, \approx^L)}\) is not necessarily minimal among all acceptors.
\begin{example}
	\label{example:counterexample_minimality}
	Consider the languages \(L=D\cup\{(t_{1}\cdot d)(t_{2}\cdot c)\mid t_{1}\in (1,2) \text{~and~}t_{1}+t_{2}=2\}\) where \(D=\{(1\cdot a)(t_{1}\cdot b)(t_{2}\cdot c)\mid t_{1}\in (0,1) \text{~and~}t_{1}+t_{2}=1\}\).
	This language is accepted by the automata of \figurename~\ref{fig:counterexample_minimality}.
	The possible residual languages of half-integral words are five: \(\emptyset\), \(L\), \((1\cdot a)^{-1}L\), \((1\cdot a)(\frac{1}{2}\cdot b)^{-1}L\) and \(\{\epsilon\}\).
	Next we show that any acceptor for \(L\) with \(R^{\Aa}(1\cdot a)=0\) must have at least six states (including the sink state).
	Indeed, five states are already required to represent the five distinct residuals.
	Moreover, an additional state is needed to separate the words \(u=(1\cdot a)(\frac{1}{2}\cdot b)\) and \(v=(1+\frac{1}{2}\cdot b)\) which have the same residual but different clock values.
	Their common residual is not \(1\)-DTA recognizable and since \(R^{\Aa}(1\cdot a)=0\) we find \(R^{\Aa}(u)=\frac{1}{2}\) while \(R^{\Aa}(v)=1 +\frac{1}{2}\).
	Thus, \(u\) and \(v\) cannot be sent to the same state.
	In particular, since \(\Aa_{(\rl, \approx^L)}\) is a strict acceptor (hence \(\rl(1\cdot a)=0\)) it has at least six states.
	The top automaton in \figurename~\ref{fig:counterexample_minimality} accepts \(L\) with five states (including the sink).
	Therefore, \(\Aa_{(\rl, \approx^L)}\) is not minimal.
\end{example}

\begin{figure}[ht]
	\centering
	\begin{tikzpicture}[state/.style={circle, draw, minimum size=18pt, inner sep=1pt}, every edge/.append style={font=\scriptsize}] %
		\node[state, initial] (q0) at (0,0) {};
		\node[state] (q1) at (3,0) {};
		\node[state] (q2) at (6,0) {};
		\node[state,accepting] (f) at (9,0) {};
		\begin{scope}[->, >=stealth, auto]
			\draw (q0) edge node {\(a,x=1,1\)} (q1);
			\draw (q1) edge node [above] {\(b, 1<x<2,1\)} (q2);
			\draw (q2) edge node {\(c,x=2,1\)} (f);
			\draw (q0) edge [bend left] node {\(d,1<x<2,1\)} (q2);
		\end{scope}
	\end{tikzpicture}
	\\[20pt]
	\begin{tikzpicture}[state/.style={circle, draw, minimum size=18pt, inner sep=1pt}, every edge/.append style={font=\scriptsize}] %
		\node[state, initial] (q0) at (0,0) {};
		\node[state] (q1) at (3,0) {};
		\node[state] (q2) at (6,0) {};
		\node[state,accepting] (f) at (9,0) {};
		\node[state] (q2') at (4.5,1) {};
		\begin{scope}[->, >=stealth, dashed, auto]
			\draw (q0) edge node {\(a,x=1,0\)} (q1);
		\end{scope}
		\begin{scope}[->, >=stealth, auto]
			\draw (q1) edge node [above] {\(b,0<x<1,1\)} (q2);
			\draw (q2) edge node {\(c,x=1,1\)} (f);
			\draw (q0) edge [bend left] node {\(d,1<x<2,1\)} (q2');
			\draw (q2') edge [bend left] node {\(c,x=2,1\)} (f);
		\end{scope}
	\end{tikzpicture}
	\caption{Top: minimal acceptor for \(L=\{(1\cdot a)(t_{1}\cdot b)(t_{2}\cdot c)\mid t_{1}\in (0,1) \text{~and~}t_{1}+t_{2}=1\}\cup\{(t_{1}\cdot d)(t_{2}\cdot c)\mid t_{1}\in (1,2) \text{~and~}t_{1}+t_{2}=2\}\).
		Bottom: \(\Aa_{(\rl, \approx^L)}\).
		This example shows that \(\Aa_{(\rl, \approx^L)}\) is not necessarily minimal among all acceptors.
		For readability, sink states are omitted from the figure, though they are always assumed to be present in our reasoning.
	}
	\label{fig:counterexample_minimality}
\end{figure}
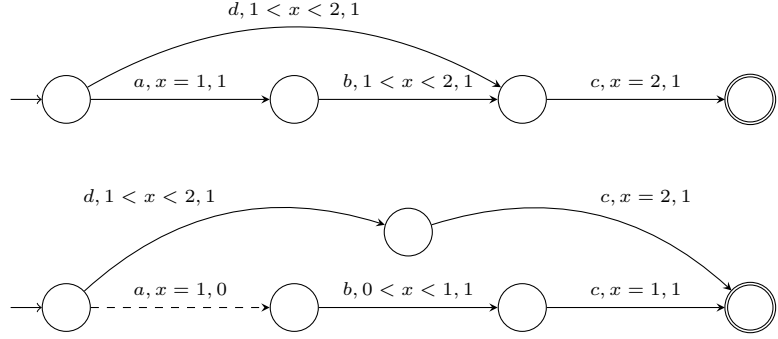

\figurename~\ref{fig:no_unique_minimal} shows two minimal acceptors for the same language with different reset functions.
Even when the timing structure is ignored, the two automata are non-isomorphic, showing that the choice of resets influences how states are merged.

\begin{figure}[hbt]
	\centering
	\begin{tikzpicture}[state/.style={circle, draw, minimum size=18pt, inner sep=1pt}, every edge/.append style={font=\scriptsize}] %
		\node[state, initial] (q0) at (0,0) {};
		\node[state] (q1) at (3,0) {};
		\node[state] (q2) at (6,0) {};
		\node[state,accepting] (f) at (9,0) {};
		\node[state] (q2') at (4.5,-1) {};
		\begin{scope}[->, auto]
			\draw (q0) edge node {\(a,x=1,1\)} (q1);
			\draw (q1) edge node {\(b, 1<x<2,1\)} (q2);
			\draw (q2) edge node {\(c,x=2,1\)} (f);
			\draw (q0) edge [bend left] node {\(d,1<x<2,1\)} (q2);
			\draw (q0) edge [bend right] node {\(e,0<x<1,1\)} (q2');
			\draw (q2') edge [bend right] node {\(c,x=1,1\)} (f);
		\end{scope}
	\end{tikzpicture}
	\\[20pt]
	\begin{tikzpicture}[state/.style={circle, draw, minimum size=18pt, inner sep=1pt}, every edge/.append style={font=\scriptsize}] %
		\node[state, initial] (q0) at (0,0) {};
		\node[state] (q1) at (3,0) {};
		\node[state] (q2) at (6,0) {};
		\node[state,accepting] (f) at (9,0) {};
		\node[state] (q2') at (4.5,1) {};
		\begin{scope}[->, >=stealth, dashed, auto]
			\draw (q0) edge node {\(a,x=1,0\)} (q1);
		\end{scope}
		\begin{scope}[->, >=stealth, auto]
			\draw (q1) edge node [above] {\(b, 0<x<1,1\)} (q2);
			\draw (q2) edge node {\(c,x=1,1\)} (f);
			\draw (q0) edge [bend left] node {\(d,1<x<2,1\)} (q2');
			\draw (q2') edge [bend left] node {\(c,x=2,1\)} (f);
			\draw (q0) edge [bend right] node {\(e,0<x<1,1\)} (q2);
		\end{scope}
	\end{tikzpicture}
	\caption{Two minimal non isomorphic acceptors with different reset functions accepting the same language (sink states are omitted).
	}
	\label{fig:no_unique_minimal}
\end{figure}
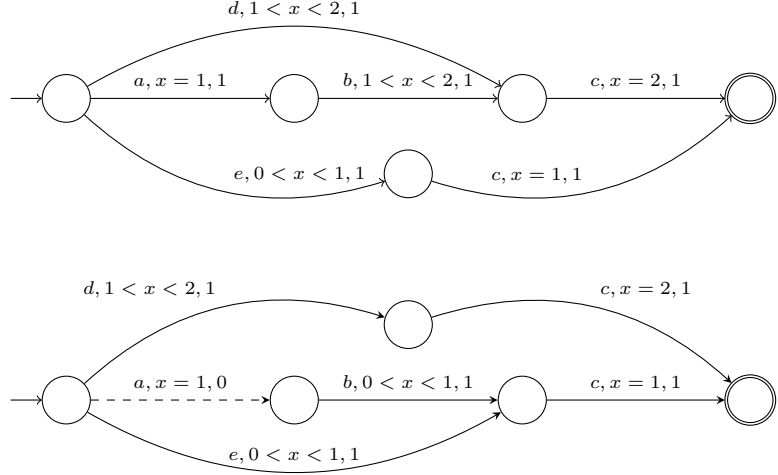
\section{Proofs from Section~\ref{sec:learning}}

\begin{propositionstar}[\ref{prop:acceptor}]
	If \(\Tt\) is closed, consistent and valid, and has constant \(K\) then \(\Aa_{{\Tt}}\) is a strict \(K\)-acceptor such that \(R^{\Aa_{{\Tt}}}:\mathbb{R}_{\geq 0}\to [0,K]\setminus\mathbb{N}_{>0}\).
	Moreover, for every \(s\in S\), \((\row(\epsilon),0) \rightsquigarrow^{s} (\row(s),r(s))\).
\end{propositionstar}
\begin{proof}
	We start by showing that \(\Aa_{{\Tt}}\) is deterministic.
	Consider two transitions \(( \row(s_1), \row(s_1'), a, g, d_1)\) and \(( \row(s_2), \row(s_2'), a, g, d_2)\) such that \(\row(s_1)=\row(s_2)\).
	By definition of \(\Aa_{{\Tt}}\) there is \((t_1 \cdot a)\in \Sigma_{K} \) such that: \(\row(s_1(t_1 \cdot a)) = \row(s_1')\), \(g = \phi_K(\mathit{r}(s_1) + t_1)\) and \(d_1 = 0\) if \(\mathit{r}(s_1')=0 \) and, \(d_1 = 1\) otherwise.
	Similarly, there is \((t_2 \cdot a)\in \Sigma_{K} \) such that: \(\row(s_2(t_2 \cdot a)) = \row(s_2')\), \(g = \phi_K(\mathit{r}(s_2) + t_2)\) and \(d_2 = 0\) if \(\mathit{r}(s_2')=0 \) and, \(d_2 = 1\) otherwise.
	Since \( \phi_K(\mathit{r}(s_1) + t_1)=\phi_K(\mathit{r}(s_2) + t_2)\) we have \(s_1 + t_1\equiv^{K}s_2 + t_2\).
	If \(s_1 + t_1\leq K\) then \(\mathit{r}(s_1) + t_1\equiv^{K}\mathit{r}(s_2) + t_2\) implies \(\mathit{r}(s_1) + t_1=\mathit{r}(s_2) + t_2\) since each bounded \(K\)-region contains exactly one half-integral value.
	Since \(\mathit{r}(s_1)=\mathit{r}(s_2)\) we thus deduce \(t_1=t_2\).
	Then by consistency (a) we deduce \(\row(s_1(t_1 \cdot a)) = \row(s_2(t_2 \cdot a)) \), thus \(\row(s_1') = \row(s_2') \).
	If \(K<s_1 + t_1\) then by consistency (b) \(\row(s_1(t_1 \cdot a)) = \row(s_1(K+\frac{1}{2} \cdot a)) \) and similarly \(\row(s_2(t_2 \cdot a)) = \row(s_2(K+\frac{1}{2} \cdot a)) \).
	By consistency (a) \(\row(s_1(K+\frac{1}{2} \cdot a)) = \row(s_2(K+\frac{1}{2} \cdot a)) \).
	Hence, \(\row(s_1(t_1 \cdot a)) = \row(s_2(t_2 \cdot a)) \), thus \(\row(s_1') = \row(s_2') \).
	Since the reset information \(d_{1}\) only depends on \([r(s_1')]_{\equiv^{K}}\) and similarly \(d_{2}\) only depends on \([r(s_2')]_{\equiv^{K}}\), \(\row(s_1') = \row(s_2') \) implies \(d_1=d_2\).

	Since for every state \(\row(s)\) we have \(region(\row(s))=[r(s)]_{\equiv^{K}}\), \(\Aa_{{\Tt}}\) is easily seen to be an acceptor.
	Finally, since \(\Tt\) is valid, \(R^{\Aa_{{\Tt}}}:\mathbb{R}_{\geq 0}\to [0,K]\setminus\mathbb{N}_{>0}\).
	The last part of the proposition is proven by an easy induction on the length of words in \(S\).\qed
\end{proof}
\section{Smart Teacher Algorithm -- Additional Material}
\begin{lemma}
	\label{lemma:more_rows}
	The next closed, consistent, valid OT obtained after processing a counterexample either has strictly more rows or, a strictly greater constant.
\end{lemma}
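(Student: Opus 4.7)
I would argue by contradiction. Suppose the next closed, consistent, valid OT \(\Tt'\) obtained after processing the counterexample \(w\) has the same number of distinct rows and the same constant \(K\) as \(\Tt\). The first observation is that each of the four actions of \(\mathit{Process}\) strictly increases either the row count (action~1 promotes to \(S\) an extension whose row matches no row in \(S\); action~2 appends to \(E\) a suffix that distinguishes previously equivalent rows) or the constant \(K\) (actions~3 and~4). Hence, under our assumption, \(\mathit{Process}\) must apply no action, and so \(\Tt'\) coincides with the table produced by \(\mathit{Refine}(\Tt, w)\). Since \(E' = E\), \(K' = K\), and the row-classes are preserved (the prefixes of \(N^{\rl}(w)\) added by \(\mathit{Refine}\) all land in existing classes), the conjecture \(\Aa_{\Tt'}\) is isomorphic to \(\Aa_{\Tt}\); in particular \(L(\Aa_{\Tt'}) = L(\Aa_{\Tt})\).

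To obtain a contradiction, I would show that \(\Aa_{\Tt'}\) classifies \(w\) correctly, contradicting that \(w\) was a counterexample for \(\Aa_{\Tt}\). Refine has placed \(N^{\rl}(w)\) in \(S'\), so by Proposition~\ref{prop:acceptor} (together with the Smart Teacher invariant \(r(s) = \rl(s)\)) the run of \(N^{\rl}(w)\) in \(\Aa_{\Tt'}\) ends at \(\row(N^{\rl}(w))\) with clock \(\rl(N^{\rl}(w))\). This state is accepting iff \(T'(N^{\rl}(w), \epsilon)(1) = 1\) iff \(N^{\rl}(w) \in L\). Combining with the \(L\)-preservation of the syntactic profile (Proposition~\ref{prop:mh}), \(N^{\rl}(w) \in L \iff w \in L\); hence \(\Aa_{\Tt'}\) accepts \(N^{\rl}(w)\) iff \(w \in L\).

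The remaining task is to transfer this acceptance from \(N^{\rl}(w)\) to \(w\) itself, and this is what I expect to be the main technical obstacle. By Proposition~\ref{prop:normal_form} applied to \(\Aa_{\Tt'}\) with its reset function \(R^{\Aa_{\Tt'}}\), the words \(w\) and \(N^{R^{\Aa_{\Tt'}}}(w)\) follow the same sequence of transitions in \(\Aa_{\Tt'}\) and hence share their acceptance status. It remains to establish that \(N^{R^{\Aa_{\Tt'}}}(w) = N^{\rl}(w)\), which I would reduce to the global equality \(R^{\Aa_{\Tt'}} = \rl\) on every timed word. This equality is proved by induction on word length: Proposition~\ref{prop:acceptor} gives the base case \(R^{\Aa_{\Tt'}}(s) = r(s) = \rl(s)\) for \(s \in S'\), and for an extension \(u(t \cdot a)\) one uses that a transition of \(\Aa_{\Tt'}\) resets exactly when the target row's \(r\)-value is \(0\), which by the Smart Teacher invariant mirrors the reset decision of \(\rl\). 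The delicate point is to verify that the row-representative reached by a non-half-integral word \(u\) actually carries \(\rl\)'s value on \(u\), so that the inductive step preserves the equality \(R^{\Aa_{\Tt'}}(u) = \rl(u)\) rather than only region-equivalence. Once this is in place, \(\Aa_{\Tt'}\) accepts \(w\) iff it accepts \(N^{\rl}(w)\) iff \(w \in L\), contradicting that \(w\) is a counterexample for \(\Aa_{\Tt} \cong \Aa_{\Tt'}\), and completing the proof.
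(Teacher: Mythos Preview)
Your proof follows the same contradiction scheme as the paper: if neither the row count nor the constant increases, then no action of \(\mathit{Process}\) was applied, \(E\) is unchanged, the set of rows is unchanged, and hence \(\Aa_{\Tt'}=\Aa_{\Tt}\); one then derives a contradiction by showing \(\Aa_{\Tt'}\) classifies \(w\) correctly. Your use of Proposition~\ref{prop:mh} (\(L\)-preservation) for the equivalence \(N^{\rl}(w)\in L\iff w\in L\) matches the paper's appeal to an acceptor for \(L\) with reset function \(\rl\).

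Where you diverge is in justifying that \(w\) and \(N^{\rl}(w)\) are sent to the same state of \(\Aa_{\Tt'}\). The paper asserts this in one line; you correctly flag it as the heart of the argument and propose to reduce it to the global equality \(R^{\Aa_{\Tt'}}=\rl\). That detour is stronger than needed and, as you yourself note, leaves a delicate point open (for an arbitrary timed word there is no reason the reached row representative carries exactly the value \(\rl(u)\)). The cleaner route uses what \(\mathit{Refine}\) actually guarantees: every prefix \(N^{\rl}(w)_i\) lies in \(S'\), so Proposition~\ref{prop:acceptor} gives the run of \(N^{\rl}(w)\) explicitly through states \(\row(N^{\rl}(w)_i)\) with clock values \(\rl(N^{\rl}(w)_i)\). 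One then shows, by induction on \(i\), that \(w_i\) reaches the same state with clock value \(\rl(w_i)\): by the defining relation \(\rl(w_{i-1})+t_i\equiv \hi(\rl(w_{i-1}))+t'_i=\rl(N^{\rl}(w)_{i-1})+t'_i\) the same guard fires, and monotonicity~(c) from Proposition~\ref{prop:mh} (namely \(\rl(w_i)\equiv\rl(N^{\rl}(w)_i)\)) together with the Smart-Teacher invariant \(r=\rl\) pins down the reset bit, yielding \(R^{\Aa_{\Tt'}}(w_i)=\rl(w_i)\). This local induction along the prefixes of \(w\) avoids any claim about \(R^{\Aa_{\Tt'}}\) on arbitrary words, and closes the gap you identified.
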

\begin{proof}
	Let \(w\) be the counterexample to the conjecture of \(\Tt_{w}\) and \(\Tt\) the next closed, consistent and valid OT.
	Assume for contradiction that \(\Tt_{w}\) and \(\Tt\) have the same number of distinct rows and the same constant.
	None of the actions {\upshape 1--4} was applied to obtain \(\Tt\) as such actions either increase the number of distinct rows or increase the constant.
	In particular \(E\) did not change, and since \(\Tt\) extends \(\Tt_{w}\) we deduce that the sets of rows of \(\Tt_{w}\) and \(\Tt\) are the same.
	Since a conjecture is entirely determined by the set of rows of the OT and by its constant, the conjectures \(\Aa_{{\Tt_{w}}}\) and \(\Aa_{{\Tt}}\) are the same.
	We now obtain a contradiction: Since \(w\) is a counterexample we have \(w\in L(\Aa_{{\Tt}}) \iff w\notin L\).
	Since \(N^{\rl}(w)\) is in \(\Tt\), \(\Aa_{{\Tt}}\) sends \(w\) and \(N^{\rl}(w)\) to the same state which is \(\row(N^{\rl}(w))\).
	By definition of \(\Aa_{{\Tt}}\), \(w\in L(\Aa_{{\Tt}}) \) if{}f \(\row(N^{\rl}(w))(\epsilon)(1)=1\).
	Since \(\Tt\) is computed by membership queries and the unknown langauge is \(L\) \(\row(N^{\rl}(w))(\epsilon)(1)=1\) if{}f \(N^{\rl}(w)\in L\).
	Since \(w\) and \(N^{\rl}(w)\) go to the same state of a acceptor for \(L\) with reset function \(\rl\), \(N^{\rl}(w)\in L\) if{}f \(w\in L\).
	Hence, \(w\in L(\Aa_{{\Tt}}) \iff w\in L\).
	\qed
\end{proof}

\begin{propositionstar}[\ref{prop:ST}]
	Smart Teacher algorithm terminates and returns \(\Aa_{(\rl, \approx^{L})}\).
	Moreover, the algorithm processes at most \(K_{L}+n\) OTs and the number of membership queries is \(O(mn^2 K_L |\Sigma|)\) where \(n\) is the number of equivalence classes for \(\approx^{L}\) and \(m\) the length of the longest counterexample.
\end{propositionstar}
\begin{proof}
	Each application of action~{\upshape 1} or action~{\upshape 2} adds a new row and, for \(s,s'\in S\) we have that \(\row(s) \neq \row(s')\) implies \(s \not \approx^{L}s'\).
	Thus, each application of action~{\upshape 1} or action~{\upshape 2} adds a new equivalence class for \(\approx^{L}\).
	Since there are finitely many classes for \(\approx^L\), actions~{\upshape 1} and~{\upshape 2} cannot be applied indefinitely.
	Since \(\rl\) is strictly bounded by \(K_{L}\) (second item of Theorem~\ref{thm:computing-rl-acceptor}) and since the reset assignement of \(\Tt\) corresponds to \(\rl\) once \(K=K_L\) action~{\upshape 4} can no longer be applied.
	As the next claim shows the same holds for action~{\upshape 3}.
	\begin{claim}
		If \(K=K_{L}\) then \(\Tt\) satisfies condition ~{\upshape(\itshape b\upshape)}.
	\end{claim}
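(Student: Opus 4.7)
The plan is to derive condition~(b) directly from the equality $K_{(\rl,\approx^L)}=K_L$ established in Proposition~\ref{prop:mh}. Because $C$ is monotone in $K$ (if $C(K)$ holds then $C(K')$ holds for every $K'>K$, by applying $C(K)$ both to $\frac{n}{2}$ and to $K'+\frac{1}{2}$ and then using transitivity of $\approx^L$) and the infimum in the definition of $K_{(\rl,\approx^L)}$ is over $\mathbb{N}$, the proposition $C(K_L)$ holds for the syntactic profile: for every $u\in\HI$, every $a\in\Sigma$, and every $n\in\mathbb{N}$, if $\rl(u)+\frac{n}{2}>K_L$ then $u(\frac{n}{2}\cdot a)\approx^L u(K_L+\frac{1}{2}\cdot a)$. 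This is the engine that drives the proof.

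First I would fix an arbitrary $s(t\cdot a)\in S\cup S\Sigma_K$ with $r(s)+t>K_L$. Since $S$ is prefix-closed and $\Sigma_K\subseteq\hi(\mathbb{R}_{\geq 0})\times\Sigma$, the prefix $s$ lies in $S\subseteq\HI$ and $t=\frac{n}{2}$ for some $n\in\mathbb{N}$. Under the Smart Teacher hypothesis $r(s)=\rl(s)$, so $C(K_L)$ applies and gives $s(\frac{n}{2}\cdot a)\approx^L s(K_L+\frac{1}{2}\cdot a)$. Next I would propagate this equivalence through every suffix $e\in E$: because $E\subseteq\HI$, the word $e$ decomposes as a finite sequence of half-integral one-letter extensions, and an easy induction on the length of $e$ using monotonicity~(b) of $(\rl,\approx^L)$ (Proposition~\ref{prop:mh}) yields
\[
s(\tfrac{n}{2}\cdot a)\,e\;\approx^L\;s(K_L+\tfrac{1}{2}\cdot a)\,e
\qquad\text{for every } e\in E.
\]

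Finally I would unpack the definition of $\approx^L$ on this common suffix: the equivalence immediately gives $\rl(s(\frac{n}{2}\cdot a)\,e)=\rl(s(K_L+\frac{1}{2}\cdot a)\,e)$, and specialising the residual equality to the empty continuation yields $s(\frac{n}{2}\cdot a)\,e\in L$ iff $s(K_L+\frac{1}{2}\cdot a)\,e\in L$. Since the Smart Teacher fills the table with $T(s',e)(1)=1$ iff $s'e\in L$ and $T(s',e)(2)=\rl(s'e)$, both components of the $e$-entries of the two rows coincide; varying $e$ over $E$ delivers the required equality of row vectors, which is exactly condition~(b). The only mildly delicate point to watch is the monotonicity induction, since Definition~\ref{def:monotonic}~(b) is stated only for single one-letter extensions; but iterating it along a half-integral suffix $e$ is routine and needs no ingredient beyond Proposition~\ref{prop:mh}.
\qed
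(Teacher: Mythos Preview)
Your proof is correct and reaches the same conclusion as the paper's, but via a slightly different intermediate. The paper argues semantically: it invokes an acceptor for $L$ with reset function $\rl$ and maximum constant $K_L$ (from Theorem~\ref{thm:computing-rl-acceptor}), observes that since $\rl(s)+t>K_L$, both $s(t\cdot a)$ and $s(K_L+\tfrac12\cdot a)$ fire the same ``$x>K_L$'' transition and land in the same state, and concludes $s(t\cdot a)\approx^L s(K_L+\tfrac12\cdot a)$ and hence equal rows. You instead extract $C(K_L)$ directly from the equality $K_{(\rl,\approx^L)}=K_L$ of Proposition~\ref{prop:mh} and then use monotonicity~(b) of the syntactic profile to propagate the equivalence through each suffix $e\in E$, making the passage from $\approx^L$-equivalence to row equality fully explicit (the paper leaves that last step implicit). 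The two arguments are equivalent in strength, since Proposition~\ref{prop:mh} already relies on the acceptor of Theorem~\ref{thm:computing-rl-acceptor}; your more syntactic unpacking trades a bit of bookkeeping for self-containedness at the profile level.
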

	\begin{proof}
		Let \(K<r(s)+t\).
		Since \(r(s)=\rl(s)\) we have \(K<\rl(s)+t\).
		An acceptor with reset function \(\rl\) and constant \(K=K_{L}\) sends \(s(t\cdot a)\) and \(s(K+\frac{1}{2}\cdot a)\) to the same state, thus the two words are equivalent for \(\approx^{L}\).
    Therefore, they have the same row. \(\blacksquare\)
	\end{proof}
	This shows that the Learner eventually gets a closed, consistent OT from which to make a conjecture and that such OT has constant at most equal to \(K_{L}\).
	It remains to show that the Learner cannot get stuck repeating the same conjecture.
	This follows from Lemma~\ref{lemma:more_rows}: after each counterexample, the next closed, consistent and valid OT either has strictly more rows or a striclty greater constant.
	Both the number of rows and the constant are bounded by previous arguments, thus the algorithm terminates.
	It returns an acceptor isomorphic to \(\Aa_{(\rl,\approx^{L})}\).

	We now establish the complexity bounds.
	Each new OT either increases the constant or introduces a new equivalence class of \(\approx^{L}\).
	Since the constant cannot exceed \(K_{L}\) the algorithm processes at most \(K_{L}+n\) OTs.
	The number of words in \(S\) is bounded by \(nm\).
	The number of words in \(E\) is bounded by \(n\).
	The number of different extensions for each word is bounded by \(|\Sigma|(2K_L+2)\).
	Therefore, the total number of cells in the final OT is bounded by \(n^2m (1 + |\Sigma|(2K_L+2)) \).
	Hence, the number of membership queries is \(O(mn^2|\Sigma|K_L)\).
	\qed
\end{proof}

Next we give an example run of the Smart Teacher algorithm.
\begin{example}
	\label{example:sm}
	We give an example run of the Smart Teacher algorithm where the unknown language is \(L=\{(x \cdot a)(y \cdot a)\mid 0<x<1, x+y =2\}\).
	The first three OTs computed, \(\Tt_{0}\) (the initial OT), \(\Tt'_0\) and \(\Tt''_0\), are depicted in Table~\ref{tab:exampleOTs}.
	\(\Tt_0\) has constant \(K=0\).
	We obtain \(\Tt'_0\) from \(\Tt_{0}\) by increasing the constant to \(1\) (action~{\upshape 4}) because \(r( \frac{1}{2}\cdot a)=\frac{1}{2}>0\).
	Then we obtain \(\Tt_{0}''\) by applying action~{\upshape 1} (moving \(( \frac{1}{2}\cdot a)\) to \(S\)).
	We then obtain \(\Tt_{1}\) (Table~\ref{tab:exampleXXXX}) by applying action~{\upshape 1} to \(\Tt''_0\) (moving \(( \frac{1}{2}\cdot a)( 1+\frac{1}{2}\cdot a)\) to \(S\)).
	OT \(\Tt_{1}\) with \(K=1\) is inconsistent~{\upshape (b)}: \(K<r((\frac{1}{2}\cdot a))+1\) but \(\row((\frac{1}{2}\cdot a)(1\cdot a))\neq \row((\frac{1}{2}\cdot a)(1+\frac{1}{2}\cdot a))\).
	We thus increase \(K\) to \(2\).
	Now \(\Tt_{1}\) with constant \(K=2\) is closed, consistent and valid.
	Its conjecture \(\Aa_{\Tt_{1}}\) is depicted in \figurename~\ref{fig:exampleST}.
	The word \((0\cdot a)(\frac{1}{2}\cdot a)(1+\frac{1}{2}\cdot a)\) is a counterexample as it is accepted by \(\Aa_{\Tt_{1}}\) but not in \(L\).
	After adding it, along with its prefixes to \(\Tt_{1}\) we get the OT corresponding to \(\Tt_{2}\) (Table~\ref{tab:exampleXXXX}) minus the second column.
	This column is added afterwards to solve the inconsistency~{\upshape (a)}: \(\row(\epsilon)= \row((0 \cdot a))\) but \(\row((\frac{1}{2} \cdot a))\neq \row((0 \cdot a)(\frac{1}{2} \cdot a))\).
	The resulting OT \(\Tt_{2}\) (Table~\ref{tab:exampleXXXX}) is closed, consistent, valid and we have \(L(\Aa_{\Tt_{2}})=L\) (\figurename~\ref{fig:exampleST}).
\end{example}
\begin{table}
	\caption{More OTs for Example~\ref{example:sm} and \(L=\{(x \cdot a)(y \cdot a)\mid 0<x<1, x+y =2\}\).
		For \(s\in S\), we use \(s\star\) for the extensions of \(s\) (not appearing in \(S\)), we use bold when the values are the same for multiple rows.
		We use \(\star\) to mean all the extensions not appearing in \(S\).
		Both OTs are closed and consistent {\upshape (a)}.
		For \(K=1\), \(\Tt_1\) is not consistent {\upshape (b)}: \(K<r((\frac{1}{2}\cdot a))+1\), but the rows of \((\frac{1}{2}\cdot a)(1\cdot a)\) and \((\frac{1}{2}\cdot a)(1+\frac{1}{2}\cdot a)\) differ.
		For \(K=2\), both are consistent {\upshape (b)} as \(K<r((\frac{1}{2}\cdot a))+1\) does not hold.
	}
	\label{tab:exampleXXXX}
	\centering
	\begin{minipage}{0.45\textwidth}
		\centering
		\subcaption{For \(K=1\) or \(K=2\)}
		\begin{tabular}{|c|@{\hspace{2pt}}c@{\hspace{2pt}}|@{}c@{}|c|}
			\hline
			\(\Tt_{1}\)                                          & \(\epsilon\)                \\
			\hline
			\(\epsilon\)                                         & \((0,0)\)                   \\
			\((\frac{1}{2}\cdot a)\)                             & \((0,\frac{1}{2})\)
			\\
			\((\frac{1}{2}\cdot a)(1+ \frac{1}{2}\cdot a)\)      & \((1,0)\)                   \\
			\hline
			\(\epsilon \star\)                                   & \((\mathbf{0},\mathbf{0})\) \\
			\((\frac{1}{2}\cdot a)\star\)                        & \((\mathbf{0},\mathbf{0})\) \\
			\((\frac{1}{2}\cdot a)(1+ \frac{1}{2}\cdot a)\star\) & \((\mathbf{0},\mathbf{0})\) \\
			\hline
		\end{tabular}
	\end{minipage}
	\begin{minipage}{0.45\textwidth}
		\centering
		\subcaption{For \(K=2\)}
		\begin{tabular}{|c|@{\hspace{2pt}}c@{\hspace{2pt}}|@{}c@{}|c|}
			\hline
			\(\Tt_{2}\)                                               & \(\epsilon\)                & \((\frac{1}{2}\cdot a)\)    \\
			\hline
			\(\epsilon\)                                              & \((0,0)\)                   & \((0,\frac{1}{2})\)         \\
			\((\frac{1}{2}\cdot a)\)                                  & \((0,\frac{1}{2})\)         & \((0,0)\)
			\\
			\((\frac{1}{2}\cdot a)(1+ \frac{1}{2}\cdot a)\)           & \((1,0)\)                   & \((0,0)\)                   \\
			\((0\cdot a)\)                                            & \((0,0)\)                   & \((0,0)\)                   \\
			\((0\cdot a)(\frac{1}{2}\cdot a)\)                        & \((0,0)\)                   & \((0,0)\)                   \\
			\((0\cdot a)(\frac{1}{2}\cdot a)(1+ \frac{1}{2}\cdot a)\) & \((0,0)\)                   & \((0,0)\)                   \\
			\hline
			\(\star\)                                                 & \((\mathbf{0},\mathbf{0})\) & \((\mathbf{0},\mathbf{0})\) \\
			\hline
		\end{tabular}
	\end{minipage}
\end{table}
\begin{algorithm}[H]
	\nextfloat\caption{Learning a strict acceptor with a Smart Teacher}
	\KwOut{A strict  \(K_{L}\)-acceptor \(\Aa_{\Tt}\) such that \(L(\Aa_{\Tt}) = L\)}
	\((K, S, E) \gets (0, \{\epsilon\}, \{\epsilon\})\)\;
	\While{\(\mathtt{true}\)}{
		\(\Tt \gets(K,S,E,T)\)\tcp*[l]{build \(\Tt\) from \(S\), \(E\) and \(K\)}
		\If{\(\Tt\) is not closed or not consistent or not valid}{
			\uIf{ \(\exists w \in S\Sigma_K,\, \forall s \in S,\, \row(w) \neq \row(s)\)}{
				\(S \gets S \cup \{w\}\)\;
			}
			\uElseIf{\(\exists s_1, s_2 \in S,\, \exists (t \cdot a) \in \Sigma_K, \, \exists e \in E,\,\row(s_1) = \row(s_2) \land{}\)
				\(\row(s_1(t \cdot a))(e) \neq \row(s_2(t \cdot a))(e)\)}{
				\(E \gets E \cup \{(t \cdot a)e\}\)\;
			}
			\Else(\tcp*[h]{either \(\Tt\) is not valid or item (b) Def~\ref{def:observation-table-closed-consistent} does not hold here if \(\Tt\) is not consistent by line 4 }){
				\(K \gets K + 1\)\;
			}
			\continue{}\tcp*[l]{is equivalent to goto line 2}
		}
		Build the conjecture \(\Aa_{\Tt} \) from \(\Tt\)\;
		\lIf{Teacher confirms \(L(\Aa_{\Tt}) = L\)}{ \Return \(\Aa_{\Tt}\)}
		\Else{
			Teacher returns counterexample \(w\)\;
			\(S \gets S \cup \{\text{prefixes of } N^{\rl}(w)\}\)\;
		}
	}
\end{algorithm}
\section{Normal Teacher Algorithm -- Additional Material}

\begin{theoremstar}[\ref{theorem:normal_teacher}]
	The Normal Teacher algorithm terminates and returns a strict acceptor for \(L\) of constant \(K\geq K_{L}\) and reset function \(R:\mathbb{R}_{\geq 0}\to [0,K]\setminus\mathbb{N}_{>0}\).
 	Moreover, the algorithm processes \(O(2^{H^{3}m2^m|\Sigma|}H)\) observation tables and performs \(O(2^{H^{3}m2^m|\Sigma|}H^3m2^m|\Sigma|)\) membership queries where \(H=K_{L}+n\), \(n\) is the number of equivalence classes for \(\approx^{L}\) and \(m\) the length of the longest counterexample.

\end{theoremstar}
\begin{proof}
	Termination follows from the termination of the Smart Teacher algorithm, since one of the branches corresponds to an execution of that algorithm.

	We now establish the complexity bounds.
	An execution of the Normal Teacher algorithms gives a tree of OTs.
	Its height \(H\) is at most \(K_{L}+n\) (this bound comes from the Smart Teacher algorithm).
	In a branch of the tree, between two consecutive OTs, the values \(|E|\) and \(K\) are increased by at most one.
	Therefore, \(|E|\leq H\) and \(K\leq H\).
	Between two consecutive OTs in the branch, \(|S|\) can increase by at most \(m2^m\).
	This corresponds to processing a counterexample.
	The counterexample has length at most \(m\) thus, it has at most \(2^{m}\) possible normal forms and each of them has at most \(m\) prefixes.
	Therefore, at worst processing a counterexmple adds \(m2^m\) new words to \(S\).
	Hence, \(|S| \leq Hm2^m\).
	Since \(|\Sigma_{K}|=2(K+1)|\Sigma|\), the number of cells in each OT of the execution is \(|(S\cup S\Sigma_K)E|=|S|(1+2(K+1)|\Sigma|)|E|\).
	We thus find that the number of cells in each OT is bounded by \(H^2m2^m(1+2(H+1)|\Sigma|)=O(H^3m2^m|\Sigma|)\).

	Let \(B\) be a bound on the width of the branching.
	In each branch the number of membership queries is equal to the number of cells of the final OT (the leaf).
	The number of branches (which is equal to the number of leafs of the tree) is bounded by \(B^H\).
	Therefore, the total number of membership queries in an execution is bounded by \(B^HO(H^3m2^m|\Sigma|)\).
	A bound on the total number of processed OTs is \(HB^H\).

	Next we calculate a bound \(B\) for the branching.
	Branching happens after each application of an action {\upshape 1--4} and after processing a counterexample.
	\begin{enumerate}
		\item After an application of action~{\upshape 1} we add \(|\Sigma_{K}|\) new extensions and each one of them is a \(|E|\)-dimensional vector.
		      Therefore, we add \(|\Sigma_{K}||E|\) new cells and for each one them we need to chose whether we reset or not.
		      Thus, the width of the branching is at most \(2^{|\Sigma_{K}||E|}\).
		\item After an application of action~{\upshape 2} we add a new column \((t\cdot a)e\).
		      For every \(s\in S\cup S\Sigma_{K}\) we need to assign a value to \(\Tt(s, (t\cdot a)e)(2)\).
		      Since for every \(s\in S\), we should have \(\Tt(s, (t\cdot a)e)= \Tt(s(t\cdot a), e)\), choice only happens for \(s\in S\Sigma_{K}\).
		      Hence, the width of the branching is at most \(2^{|S||\Sigma_{K}|}\).
		\item After increasing the constant (action~{\upshape 3} or~{\upshape 4}) we add \(2|S|\) new extensions.
		      Thus, the width of the branching is at most \(2^{2|S||E|}\).
		\item After processing a counterexample we add at most \(m2^{m}(1+|\Sigma_{K}|)\) to \( S\cup S\Sigma_{K}\).
		      Hence, the width of the branching is at most \(2^{m2^{m}(1+|\Sigma_{K}|)|E|}\).
	\end{enumerate}
	Hence, the branching degree of the tree is \(O(2^{H^{2}m2^m|\Sigma|})\).
	Finally, we find that the total number of membership queries in an execution is \(O(2^{H^{3}m2^m|\Sigma|}H^3m2^m|\Sigma|)\) and the number of processed OTs is \(O(2^{H^{3}m2^m|\Sigma|}H)\).
	\qed
\end{proof}

In any branch where the reset assignments do not correspond to the reset function of an acceptor for \(L\) termination is not guaranteed.
Example~\ref{example:badbranch} shows that, under an incorrect reset assignment, an incorrect conjecture may persist indefinitely because the Teacher can keep providing non–half-integral counterexamples from the same equivalence class for \(\approx^{L}\).
It can also happen that we never move to a new conjecture because actions~{\upshape 1--4} are applied indefinitely.
For instance, in the ``never-resetting'' branch action~{\upshape 1} is applied infinitely many times: consider the extension \(u(K+\frac{1}{2}\cdot a )\) of a row word \(u\) such that \(r(u)\) is maximal.
Its row differs from all others since the value \(r(u) + K+\frac{1}{2}\) appears in no existing row.
\begin{example}
	\label{example:badbranch}
	We illustrate the Normal Teacher algorithm on a branch with an incorrect reset assignment.
	The unknown language is \(L=\{(x \cdot a)(y \cdot a)\mid 0<x<1, x+y =2\}\).
	We consider the reset assignment \(r\) that maps every entry to zero; that is, for every OT, we take \(r(s,e)=0\) for every \(s\in S\cup S\Sigma_K\) and \(e\in E\).
	The OTs for this example are illustrated in Table~\ref{tab:exampleYYYY}.
	Notice that all the OTs are valid since the reset assignement is zero everywhere.
	The initial OT is \(\Tt_{0}\) in Table~\ref{tab:exampleYYYY} is closed, consistent and valid.
	Its conjecture gives the empty language and a counterexample is the word \((\frac{1}{2}\cdot a)(1+\frac{1}{2}\cdot a)\).
	After processing the counterexample we get \(\Tt'_0\) (Table~\ref{tab:exampleYYYY}).
	Since \(K=0\), \(\Tt'_0\) is inconsistent (b) since \((\frac{1}{2}\cdot a)(\frac{1}{2}\cdot a)\) and \((\frac{1}{2}\cdot a)(1+\frac{1}{2}\cdot a)\) have different rows.
	We thus increase \(K\) to \(1\).
	\(\Tt'_0\) is inconsistent (a): \(\row(\epsilon) = \row((\frac{1}{2}\cdot a))\) but \(\row((1+\frac{1}{2}\cdot a))\neq \row((\frac{1}{2}\cdot a)(1+\frac{1}{2}\cdot a))\).
	To solve this inconsistency we add \((1+\frac{1}{2}\cdot a)\) to \(E\).
	We get \(\Tt''_0\) (Table~\ref{tab:exampleYYYY}) which is closed and consistent.
	Its conjecture \(\Aa_{\Tt''_0}\) (\figurename~\ref{fig:exampleOT_bad}) is wrong: \((\frac{1}{2}\cdot a)(2\cdot a)\in L(\Aa_{\Tt''_0})\) but \((\frac{1}{2}\cdot a)(2\cdot a)\notin L\).
	Adding \((\frac{1}{2}\cdot a)(2\cdot a)\) to \(S\) gives the OT corresponding to \(\Tt'''_0\) minus the third collumn.
	This OT is inconsistent (a): \(\row(\epsilon) = \row((\frac{1}{2}\cdot a)(2\cdot a))\) but \(\row((\frac{1}{2}\cdot a))\neq \row(((\frac{1}{2}\cdot a)(2\cdot a)(\frac{1}{2}\cdot a))\).
	To solve this inconsistency we add \((\frac{1}{2}\cdot a)(1+\frac{1}{2}\cdot a)\) to \(E\) which gives \(\Tt'''_0\) with \(K=1\).
	Since \(K=1\) and \(\row((\frac{1}{2}\cdot a)(1+\frac{1}{2}\cdot a))\neq \row((\frac{1}{2}\cdot a)(2\cdot a))\) the OT \(\Tt'''_0\) is inconsistent (b) and so we increment \(K\).
	Finally, \(\Tt'''_0\) with \(K=2\) is closed and consistent.
	The conjecture \(\Aa_{\Tt'''_0}\) (\figurename~\ref{fig:exampleOT_bad}) is wrong even though \(L\) and \(L(\Aa_{\Tt'''_0})\) have the same half-integral words.
	Say the Teacher returns the counterexample \(w=(0.2 \cdot a)(1.3 \cdot a)\).
	It has two possible normal forms: \(N_1(w)=(\frac{1}{2} \cdot a)(1+\frac{1}{2} \cdot a)\) and \(N_2(w)=(\frac{1}{2} \cdot a)(1 \cdot a)\).
	By adding the missing normal form \(N_2(w)=(\frac{1}{2} \cdot a)(1 \cdot a)\) (\(N_1(w)=(\frac{1}{2} \cdot a)(1+\frac{1}{2} \cdot a)\) is already in the table) the resulting OT is closed and consistent and induces the same conjecture \(\Aa_{\Tt'''_0}\).
\end{example}

\begin{algorithm}[htb]
	\caption{Learning a strict acceptor with a Normal Teacher}
	\label{alg:normal-teacher}
	\KwOut{A strict acceptor \(\Aa_\Tt\) such that \(L(\Aa_\Tt) = L\)}

	\(K \gets 0\)\;
  \(P_0 \gets \{\Tt_0, \dots, \Tt_n\}\)\tcp*[l]{\(P_0\) comprises all the tables \(\Tt_{i}=(0,\{\epsilon\},\{\epsilon\},T_i)\) each corresponding to a distinct reset assignment. 
Since \(S=\{\epsilon\}\), the reset assignments encoded in \(T_i\) apply to the extensions \(S\Sigma_0\).}%

	\For{\(i=0,1,\ldots\)}{
		\(P_{i+1} \gets \emptyset\)\;
		\ForEach{\(\Tt = (K,S,E,T) \in P_i\)}{
			\(P_i \gets P_i \backslash \{\Tt\}\)\;
			\uIf{\(\Tt\) is not closed or not consistent or not valid}{
				\uIf{\(\exists w \in S\Sigma_K,\, \forall s \in S,\, \row(w) \neq \row(s)\)}{
					\(S \gets S \cup \{w\}\)\;
				} \uElseIf{\(\exists s_1, s_2 \in S,\, \exists (t \cdot a) \in \Sigma_K, \, \exists e \in E,\,\row(s_1) = \row(s_2) \land{}\) \(\row(s_1(t \cdot a))(e) \neq \row(s_2(t \cdot a))(e)\)}{
					\(E \gets E \cup \{(t \cdot a)e\}\)\;
				} \Else(\tcp*[h]{either \(\Tt\) is not valid or item (b) Def~\ref{def:observation-table-closed-consistent} does not hold here since \(\Tt\) is not consistent by line 4.}){
					\(K \gets K + 1\)\;
				}
			} \Else{
				Build the conjecture \(\Aa_{\Tt} \) from \(\Tt\)\;
				\lIf{Teacher confirms \(L(\Aa_{\Tt}) = L\)}{\Return \(\Aa_{\Tt}\)}
				\Else{
					Teacher returns counterexample \(w\)\;
					Compute the set \(\{N_1(w),\ldots,N_k(w)\}\) of all normal forms of \(w\) for some \(k>0\)\;
					\(S \gets S \cup \bigcup_{j=1}^k\{\text{prefixes of } N_j(w)\}\)\;
				}
			}
			\(\Tt \gets (K,S,E,T)\)\tcp*[l]{build \(\Tt\) with \(S\), \(E\) and \(K\)}
			Add to \(P_{i+1}\) a copy of \(\Tt\) for every extending reset assignment\;
		}
	}
\end{algorithm}

\begin{table}
	\caption{OTs for Example~\ref{example:badbranch}.
		Each cell displays only the first scalar \(\Tt(s,e)(1)\).
		The second value \(\Tt(s,e)(2)\) which encodes the reset assignment, is always zero and has been omitted for readability.
		We use \(\star\) to mean all the extensions not appearing in \(S\).
		We use bold when the values are the same for multiple rows.
		We have \(e=(1+\frac{1}{2}\cdot a)\) and \(e'=(\frac{1}{2}\cdot a)(1+\frac{1}{2}\cdot a)\).
	}
	\label{tab:exampleYYYY}
	\centering
	\begin{minipage}{0.12\textwidth}
		\centering
		\subcaption{\(K=0\)}
		\begin{tabular}{|c|@{\hspace{2pt}}c@{\hspace{2pt}}|}
			\hline
			\(\Tt_{0}\)              & \(\epsilon\) \\
			\hline
			\(\epsilon\)             & \(0\)        \\
			\hline
			\((0\cdot a)\)           & \(0\)
			\\
			\((\frac{1}{2}\cdot a)\) & \(0\)
			\\
			\hline
		\end{tabular}
	\end{minipage}%
	\begin{minipage}{0.23\textwidth}
		\centering
		\subcaption{\(K=0\) or \(K=1\)}
		\begin{tabular}{|c|@{\hspace{2pt}}c@{\hspace{2pt}}|}
			\hline
			\(\Tt'_0\)                                      & \(\epsilon\)   \\
			\hline
			\(\epsilon\)                                    & \(0\)          \\
			\((\frac{1}{2}\cdot a)\)                        & \(0\)
			\\
			\((\frac{1}{2}\cdot a)(1+ \frac{1}{2}\cdot a)\) & \(1\)          \\
			\hline
			\(\star\)                                       & \(\mathbf{0}\) \\
			\hline
		\end{tabular}
	\end{minipage}%
	\begin{minipage}{0.26\textwidth}
		\centering
		\subcaption{\(K=1\)}
		\begin{tabular}{|c|@{\hspace{2pt}}c@{\hspace{2pt}}|@{\hspace{2pt}}c@{\hspace{2pt}}|}
			\hline
			\(\Tt''_0\)                                     & \(\epsilon\)   & \(e\)          \\
			\hline
			\(\epsilon\)                                    & \(0\)          & \(0\)          \\
			\((\frac{1}{2}\cdot a)\)                        & \(0\)          & \(1\)
			\\
			\((\frac{1}{2}\cdot a)(1+ \frac{1}{2}\cdot a)\) & \(1\)          & \(0\)          \\
			\hline
			\(\star\)                                       & \(\mathbf{0}\) & \(\mathbf{0}\) \\
			\hline
		\end{tabular}
	\end{minipage}%
	\begin{minipage}{0.28\textwidth}
		\centering
		\subcaption{\(K=1\) or \(K=2\)}
		\begin{tabular}{|c|@{\hspace{2pt}}c@{\hspace{2pt}}|@{\hspace{2pt}}c@{\hspace{2pt}}|@{\hspace{2pt}}c@{\hspace{2pt}}|}
			\hline
			\(\Tt'''_0\)                                    & \(\epsilon\)   & \(e\)          & \(e'\)         \\
			\hline
			\(\epsilon\)                                    & \(0\)          & \(0\)          & \(1\)          \\
			\((\frac{1}{2}\cdot a)\)                        & \(0\)          & \(1\)          & \(0\)
			\\
			\((\frac{1}{2}\cdot a)(1+ \frac{1}{2}\cdot a)\) & \(1\)          & \(0\)          & \(0\)          \\
			\((\frac{1}{2}\cdot a)(2\cdot a)\)              & \(0\)          & \(0\)          & \(0\)          \\

			\hline
			\(\star\)                                       & \(\mathbf{0}\) & \(\mathbf{0}\) & \(\mathbf{0}\) \\
			\hline
		\end{tabular}
	\end{minipage}
\end{table}

\begin{figure}[ht]
	\centering
	\begin{tikzpicture}[initial text=\(\Aa_{\Tt_{1}}\),
			every state/.style={
					rectangle,
					rounded corners,
				}]
		\node[state, initial] (q0) at (0,0) {\(\epsilon\)};
		\node[state] (q1) at (4,0) {\((\frac{1}{2}\cdot a)\)};
		\node[state,accepting] (q2) at (8,0) {\((\frac{1}{2}\cdot a)(1+\frac{1}{2}\cdot a)\)};

		\begin{scope}[->, >=stealth, dashed, auto]
			\draw (q0) edge [loop below] node {\(a, {x\notin (0,1) }, 0\)} (q0);
			\draw (q1) edge [bend right] node [above] {\(a, x\neq 2,0\)} (q0);
			\draw (q1) edge node [above, sloped] {\(a, x= 2,0\)} (q2);
			\draw (q2) edge [bend left] node [above] {\(a, 0\leq x,0\)} (q0);
		\end{scope}
		\begin{scope}[->, >=stealth, auto]
			\draw (q0) edge node [below] {\(a,0<x<1 ,1\)} (q1);
		\end{scope}
	\end{tikzpicture}
	\\[10pt]
	\begin{tikzpicture}[initial text=\(\Aa_{\Tt_{2}}\),
			every state/.style={
					rectangle,
					rounded corners,
				}]
		\node[state, initial] (q0) at (0,0) {\(\epsilon\)};
		\node[state] (q1) at (4,0) {\((\frac{1}{2}\cdot a)\)};
		\node[state,accepting] (q2) at (8,0) {\((\frac{1}{2}\cdot a)(1+\frac{1}{2}\cdot a)\)};
		\node[state] (q) at (4,-2) {\((0\cdot a)\)};

		\begin{scope}[->, >=stealth, dashed, auto]
			\draw (q0) edge node [left] {\(a, {x\notin (0,1) }, 0\)} (q);
			\draw (q1) edge node [right, near start] {\(a, x\neq 2,0\)} (q);
			\draw (q1) edge node [above] {\(a, x= 2,0\)} (q2);
			\draw (q2) edge node [right, near end] {\(a, 0\leq x,0\)} (q);
			\draw (q) edge [loop left] node [left] {\(a, {0\leq x}, 0\)} (q);
		\end{scope}
		\begin{scope}[->, >=stealth, auto]
			\draw (q0) edge node [above] {\(a,0<x<1 ,1\)} (q1);
		\end{scope}
	\end{tikzpicture}
	\caption{Conjectures for Example~\ref{example:sm}.}%

	\label{fig:exampleST}
\end{figure}

\begin{figure}[ht]
	\centering
	\begin{tikzpicture}[initial text={\(\Aa_{\Tt''_0}\)},
			every state/.style={
					rectangle,
					rounded corners,
					draw=black,
				}]
		\node[state, initial] (q0) at (0,0) {\(\epsilon\)};
		\node[state] (q1) at (4,0) {\((\frac{1}{2}\cdot a)\)};
		\node[state,accepting] (q2) at (8,0) {\((\frac{1}{2}\cdot a)(1+\frac{1}{2}\cdot a)\)};

		\begin{scope}[->, >=stealth, dashed, auto]
			\draw (q0) edge [loop below] node {\(a, {x\notin (0,1) }, 0\)} (q0);
			\draw (q0) edge node [below] {\(a,0<x<1 ,0\)} (q1);
			\draw (q1) edge [bend right] node [above] {\(a, x\leq 1 ,0\)} (q0);
			\draw (q1) edge node [above, sloped] {\(a, 1<x,0\)} (q2);
			\draw (q2) edge [bend left] node [above] {\(a, 0 \leq x,0\)} (q0);
		\end{scope}
	\end{tikzpicture}
	\\[10pt]
	\begin{tikzpicture}[initial text={\(\Aa_{\Tt'''_0}\)},
			every state/.style={
					rectangle,
					rounded corners,
					draw=black,
				}
		]
		\node[state, initial] (q0) at (0,0) {\(\epsilon\)};
		\node[state] (q1) at (4,0) {\((\frac{1}{2}\cdot a)\)};
		\node[state,accepting] (q2) at (8,0) {\((\frac{1}{2}\cdot a)(1+\frac{1}{2}\cdot a)\)};
		\node[state] (q) at (4,-2) {\((\frac{1}{2}\cdot a)(2\cdot a)\)};

		\begin{scope}[->, >=stealth, dashed, auto]
			\draw (q0) edge node [left] {\(a, {x\notin (0,1) }, 0\)} (q);
			\draw (q0) edge node [above] {\(a,0<x<1 ,0\)} (q1);
			\draw (q1) edge node [right, near start] {\(a, x\notin (1,2),0\)} (q);
			\draw (q1) edge node [above] {\(a, 1<x <2,0\)} (q2);
			\draw (q2) edge node [right, near end] {\(a, 0 \leq x,0\)} (q);
			\draw (q) edge [loop left] node [left] {\(a, {0 \leq x}, 0\)} (q);
		\end{scope}
	\end{tikzpicture}
	\caption{Conjectures for Example~\ref{example:badbranch}.}
	\label{fig:exampleOT_bad}

\end{figure}

\end{document}